\theoremstyle{plain}
\newtheorem{thm}{Theorem}
\newtheorem{prp}{Proposition}
\newcommand{\qed}{\hfill\mbox{\raggedright $\Box$}\medskip}
\newcommand{\mydate}{
 \ifcase\month \or
 January\or February\or March\or April\or May\or June\or
 July\or August\or September\or October\or November\or December\fi
 \space \number\year}
\newcommand{\smcirc}{{\scriptstyle \,\circ\,}}
\newcommand{\ssmcirc}{{\scriptscriptstyle \,\circ\,}}
\newcommand{\bwedge}{\raisebox{0.2ex}{${\textstyle \bigwedge}$}}
\newcommand{\bvee}{\raisebox{0.2ex}{${\textstyle \bigvee}$}}
\newcommand{\mathslf}[1]{\ensuremath{\mbox{\slshape\textsf{#1}}}}
\begin{document}

%\title{Lie Groupoids in Classical Field Theory I: \\
%       Noether's Theorem}
\title{Lie Groupoids in Classical Field Theory II: \\
       Gauge Theories, Minimal Coupling and Utiyama's Theorem \\ \mbox{}}

\author{Bruno T.\ Costa$^{\,1,2}$, Michael Forger$^{\,1}$~and~%
        Luiz Henrique P.\ P\^egas$^{\,1}$ \\ \mbox{}}
\date{\normalsize
      $^{1\,}$ Instituto de Matem\'atica e Estat\'{\i}stica, \\
      Universidade de S\~ao Paulo, \\
      Caixa Postal 66281, \\
      BR--05315-970~ S\~ao Paulo, SP, Brazil \\[4mm]
      $^2\,$ Departamento de Matem\'atica, \\
      Universidade Federal de Santa Catarina, \\
      Rua Jo\~ao Pessoa, 2750, \\
      BR--89036-256~ Blumenau, SC, Brazil
     }
\maketitle

\thispagestyle{empty}

\vspace*{5mm}

\begin{abstract}
\noindent
 In the two papers of this series, we initiate the development of a new
 approach to implementing the concept of symmetry in classical field theory,
 based on replacing Lie groups/algebras by Lie groupoids/algebroids, which
 are the appropriate mathematical tools to describe local \linebreak
 symmetries when gauge transformations are combined with space-time
 transformations. \linebreak
 %Here, we outline the basis of the program and, as a first step, show how to
 %(re)formulate Noether's theorem about the connection between symmetries
 %and conservation laws in this approach.
 In this second part, we shall adapt the formalism developed in the first
 paper to the context of gauge theories and deal with minimal coupling and
 Utiyama's theorem.
\end{abstract}

\vspace*{5mm}

\begin{flushright}
 \parbox{12em}{
  \begin{center}
   Universidade de S\~ao Paulo \\
   RT-MAP-1801 \\
   February 2018
  \end{center}
 }
\end{flushright}

\newpage

\setcounter{page}{1}

\section{Introduction}

In the first paper of this series~\cite{CFP}, we have initiated an
investigation of how to handle symmetries~-- or more precisely,
local symmetries~-- in classical field theories using the language
of Lie groupoids and their actions.
However, the formalism developed there is perhaps a bit too
general because it allows us to leave the nature of the underlying
Lie groupoids and their actions completely unspecified, whereas
there can be no doubt that the motivation for the entire program
comes predominantly from one single (class of) example(s),
namely, gauge theories.
Spelling out the details for this case is the main goal of the present
paper and is necessary not only because it provides us with a class
of examples whose importance can hardly be overestimated but
also because it leads to a substantial clarification of the general
structure of the theory.
Moreover, the results will generalize those of earlier work~\cite{FS}
by extending them from internal symmetries to space-time symmetries.

Let us begin with a few comments on the already traditional geometric
formulation of gauge theories (as classical field theories) over a general
space-time manifold~$M$; more details can be found in textbooks such 
as~\cite{Ble,GS,FF}.
The basic input data one has to fix right at the start are an internal
symmetry group, which is a Lie group $G_0^{}$ with Lie algebra
$\mathfrak{g}_0^{}$,%
\footnote{Note that we perform a slight change of notation as compared
to Ref.~\cite{FS}, where we have denoted the internal symmetry group
by~$G$ and its Lie algebra by~$\mathfrak{g}$: here, we want to reserve
these symbols for the basic Lie groupoid and Lie algebroid of the theory.}
together with a principal bundle~$P$ over~$M$ with structure group~$G_0^{}$
and bundle projection $\, \rho: P \longrightarrow M$: then gauge fields
are described in terms of connections in~$P$, which can be viewed as
sections of an affine bundle over~$M$, namely, the connection bundle~%
$\, CP = JP/G_0^{}$ of~$P$.
Moreover, if the theory is to contain not only gauge fields (as in
``pure'' Yang-Mills theories) but also matter fields, one also has
to fix a vector space~$V$ equipped with a representation of~$G_0^{}$
or, more generally, a manifold~$Q$ equipped with an action of~$G_0^{}$:
then matter fields are described by sections of the associated vector
bundle $\, E = P \times_{G_0} V$ (for scalar matter fields) or of its
tensor product with some tensor or spinor bundle over~$M$ (for
tensor or spinor matter fields) or of the associated fiber bundle
$\, E = P \times_{G_0} Q$ (for nonlinear scalar matter fields
such as in the nonlinear sigma models).
Finally, there is gravity, described by yet another and very
special kind of field, namely, a metric tensor $\mathslf{g}$
on~$M$. (Some discussion of what sets the metric tensor
apart from all other fields can be found in Ref.~\cite{HE}.)

Symmetries in this approach are traditionally described in terms
of automorphisms of the principal bundle~$P$ and the induced
automorphisms of its connection bundle and its associated bundles.
To set the stage, recall that an \emph{automorphism} of~$P$ is
a diffeomorphism of~$P$ as a manifold which is $G_0^{}$-%
equivariant, i.e., which commutes with the right action of the
structure group~$G_0$ on~$P$: since the orbits of this action
are precisely the fibers of~$P$, it then follows that it takes
points in the same fiber to points in the same fiber and hence
induces a diffeomorphism of the base manifold~$M$.
Moreover, the automorphism is said to be \emph{strict} if it
preserves the fibers, or equivalently, if the induced diffeo%
morphism on the base manifold is the identity.
Automorphisms of~$P$ form a group~$\mathrm{Aut}(P)$ and strict
automorphisms of~$P$ form a normal subgroup~$\mathrm{Aut}_s(P)$
which is the kernel of a natural group homomorphism
\[
 \mathrm{Aut}(P)~\longrightarrow~\mathrm{Diff}(M)
\]
that projects each automorphism of~$P$ to the diffeomorphism of~$M$
it induces.
In physics language, strict automorphisms are also called \emph{gauge
transformations} and the group $\mathrm{Aut}_s(P)$ is often called the
\emph{gauge group} and denoted by $\mathrm{Gau}(P)$, but we prefer
the more precise term \emph{group of gauge transformations} so as to
avoid the confusion whether by ``gauge group'' one means the infinite-%
dimensional group $\mathrm{Gau}(P)$ or the finite-dimensional structure
group~$G_0$.
Thus strict automorphisms, or gauge transformations, are \emph{internal
symmetries} since they do not move points in space-time, whereas general
automorphisms will in what follows be referred to as \emph{space-time
symmetries}.%
\footnote{There is some abuse of language in this simplified terminology
because general automorphisms always represent a mixture of ``pure''
space-time symmetries with internal symmetries.
The problem here is that there is in general no natural notion of a
``pure'' space-time symmetry, since that would require a \emph{lifting}
of the group $\mathrm{Diff}(M)$ (or at least of an appropriate subgroup
thereof) to realize it as a subgroup (and not only as a quotient group)
of~$\mathrm{Aut}(P)$, whose elements would then represent the
``pure'' space-time symmetries.
However, such a lifting may not even exist, and even if it does (which
happens, e.g., when the principal bundle~$P$ is trivial), it is far from
unique, so what one means by a ``pure'' space-time transformation
still depends on which lifting is chosen.}
At any rate, all such symmetry transformations, being represented by
automorphisms of~$P$, can be lifted to automorphisms of its jet bundle
$JP$ and hence act naturally on the connection bundle $\, CP = JP/G_0^{}$
of~$P$ as well as on any associated vector bundle or fiber bundle $E$,
its jet bundle~$JE$ and any tensor or spinor bundle over~$M$, thus
providing the appropriate setting for deciding which of them are
symmetries of the field theoretical model under consideration.

The main mathematical difficulty within this approach comes from the
fact that one is dealing here with infinite-dimensional groups which
are notoriously hard to handle from the point of view of Lie theory.
Therefore, it is desirable to recast the property of invariance of a
field theory under such local symmetries into a form where one deals
exclusively with finite-dimensional objects.
This program has been initiated in Ref.~\cite{FS} and implemented
there for strict automorphisms (gauge transformations), where it leads
naturally to replacing Lie groups by Lie group bundles (and similarly
Lie algebras by Lie algebra bundles), making use of the well-known
fact that there is a natural isomorphism between the group of strict
automorphisms of~$P$ and the \emph{group of sections} of the
\emph{gauge group bundle} of~$P$, which is the Lie group bundle
$P \times_{G_0} G_0^{}$ associated to~$P$ via the action of~%
$G_0^{}$ on itself by conjugation:
\[
 \mathrm{Aut}_s(P)~\cong~\Gamma(P \times_{G_0} G_0^{}) \,.
\]
In order to extend the resulting analysis from strict automorphisms to
general automorphisms, we have to go one step further and replace
Lie groups or Lie group bundles by Lie groupoids (and similarly Lie
algebras or Lie algebra bundles by Lie algebroids). In this case, the
basic observation is that there is a natural isomorphim between the
group of automorphisms of~$P$ and the \emph{group of bisections}
of the \emph{gauge groupoid} of~$P$, which is the Lie groupoid
$(P \times P)/G_0^{}$ obtained as the quotient of the cartesian
product of two copies of~$P$ by the ``diagonal'' right action
of~$G_0^{}$:
\[
 \mathrm{Aut}(P)~\cong~\mathrm{Bis}((P \times P)/G_0^{}) \,.
\]
Thus our task in what follows will be to extend the results of
Ref.~\cite{FS} by applying the general formalism of Ref.~\cite{CFP}
to this specific situation.

When we replace Lie groups by Lie groupoids, or to put it a bit more
precisely, actions of Lie groups on manifolds by actions of Lie groupoids
on fiber bundles (over the same base manifold), we have to face one
important novel feature, namely, that the construction of induced
actions will involve changing the Lie groupoid as well.
For example, while an action of a Lie group $G_0^{}$ on a manifold~$X$
induces an action of the same Lie group $G_0^{}$ on its tangent bundle~%
$TX$, an action of a Lie groupoid $G$ on a fiber bundle~$E$ (both over
the same base manifold~$M$) induces an action not of the original Lie
grupoid~$G$ but rather of its jet groupoid $JG$ on the jet bundle $JE$
of~$E$.
(A similar phenomenon already occurs for Lie group bundles, as observed
in Ref.~\cite{FS}.)
As it turns out, properly dealing with this feature is the key to make the
entire theory work out smoothly.

Let us pass to briefly describe the contents of the paper.
In Section~$2$, we present the minimal coupling prescription and the
curvature map that enters the formulation of Utiyama's theorem in a
very general context, and we show that these constructions are
invariant (or perhaps it might be better to say, equivariant) under
any action of any Lie groupoid over space-time on the bundle of
field configurations over space-time, provided we employ the correct
induced actions of the pertinent Lie groupoids derived from the former
on the pertinent bundles derived from the latter.
We conclude with a series of comments intended to show why, from
the point of view of field theory, this approach is excessively general
and needs to be adapted to a setting where all bundles are derived
from some principal bundle and all connections are derived from
principal connections in that principal bundle~-- which is the
standard setup for gauge theories anyway.
In Section~3, we collect the technical tools needed to perform this
adjustment and to state the main results.
The first step here is to recall the definition of the gauge groupoid~$G$
of a principal bundle~$P$ and of its natural actions on any bundle~$E$
associated to~$P$ (including $P$ itself).
Next, we introduce the (first order) jet groupoid $JG$ of~$G$ and use
the results of the previous section and of Ref.~\cite{CFP} to write down
natural actions of~$JG$ on various derived bundles such as the jet
bundle~$JP$ and the connection bundle~$CP$ of~$P$ or the jet
bundle~$JE$ of any bundle~$E$ associated to~$P$.
We also show how iterating this procedure provides induced actions
of the second order jet groupoid $J^{\>\!2} G$ and, more generally,
the semiholonomous second order jet groupoid $\bar{J}^{\>\!2} G$
of~$G$ on the semiholonomous second order jet bundle
$\bar{J}^{\>\!2} P$ and on the (first order) jet bundle
$J(CP)$ of the connection bundle~$CP$ of~$P$.
In Section~4, we then prove the main theorems concerning the
invariance (or perhaps it might be better to say, the equivariance)
of the minimal coupling prescription and the curvature map under
the actions of the pertinent Lie groupoids introduced in the previous
section, thus providing the desired extension of the results of Ref.~%
\cite{FS} from the setting of Lie group bundles (internal symmetries)
to that of Lie groupoids (space-time symmetries).

In an appendix, we present an interesting result that links some of
our constructions to analogous constructions using jet prolongations
of principal bundles.
This subject is treated in great generality in Ref.~\cite{KMS} at the
level of principal bundles and their associated bundles, but is not
addressed at the level of Lie groupoids; in fact, the concept of Lie
groupoid does not appear there at all.
The basic ingredient is the (first order) jet prolongation $P^{(1)}$
of the given principal bundle~$P$, which is a principal bundle over
the same base manifold and whose structure group $G_0^{(1)}$ is the
jet group of the structure group $G_0^{}$ of~$P$, as defined, e.g.,
in Ref.~\cite{KMS}.
This allows us not only to show that various bundles derived from a
bundle $P \times_{G_0} Q$ associated to~$P$ (such as its jet bundle
$J(P \times_{G_0} Q)$ and the tangent bundle $T(P \times_{G_0} Q)$
of its total space) or even just from~$P$ itself (such as its connection
bundle $\, CP = JP/G_0^{}$) are bundles associated to~$P^{(1)}$
(which is not new), but also that the jet groupoid $J((P \times P)/
G_0^{})$ of the gauge groupoid $(P \times P)/G_0^{}$ of~$P$
is canonically isomorphic to the gauge groupoid $(P^{(1)} \times
P^{(1)})/G_0^{(1)}$ of~$P^{(1)}$, or to put it more bluntly:
jet groupoids of gauge groupoids are gauge groupoids!
However, we have not explored all consequences of this approach,
since this is not needed to derive our results.

\section{Minimal coupling and Utiyama's theorem I}

As stated in the introduction, our main goal in this paper is to extend
the results of Ref.~\cite{FS} about invariance of the minimal coupling
prescription and of the curvature map (Utiyama's theorem) from the
context of Lie group bundles to that of Lie groupoids.
To do so, let us begin by recalling the general definition of these two
constructions.

The term ``minimal coupling'' is widely used in mathematical physics
to denote a procedure for converting ordinary derivatives to covariant
derivatives.
Such derivatives apply to ``matter fields'' on space-time~$M$ which in
a general geometric framework are sections of some fiber bundle~$E$
over~$M$: then their ordinary derivatives are sections of its (first order)
jet bundle~$JE$, as a fiber bundle over~$M$, while their covariant
derivatives are sections of its linearized (first order) jet bundle
\begin{equation} \label{eq:LJETB1}
 \vec{J} E~\cong~L(\pi^*(TM),VE)~\cong~\pi^*(T^* M) \otimes VE \,,
\end{equation}
as a fiber bundle over~$M$, where $\pi$ is the bundle projection
from~$E$ to~$M$, $\pi^*(TM)$ resp.\ $\pi^*(T^* M)$ is the pull-back
of the tangent resp.\ cotangent bundle of~$M$ to~$E$, $VE$ is the
vertical bundle of~$E$ and $L(\pi^*(TM),VE)$ denotes the bundle
of fiberwise linear maps from~$\pi^*(TM)$ to~$VE$.
Within this context, the minimal coupling prescription states that
the covariant derivative $D\varphi$ of a section $\varphi$ of~$E$ is
obtained from its ordinary derivative $\partial\varphi$ by using a
connection in~$E$ to decompose the tangent bundle $TE$ of (the
total space of)~$E$ into the direct sum of the vertical bundle $VE$
and horizontal bundle $HE$ and then projecting onto the vertical part.
Now if we think of that connection as being given by its horizontal
lifting map, which is a section $\varGamma$ of~$JE$ as an affine
bundle over~$E$, so that at each point $e \in E$ with $\pi(e) = x$,
$\varGamma(e)$ is a linear map from $T_x M$ to~$T_e E$ whose
image is the horizontal space $H_e E$ at~$e$ of the connection,
then that projection onto the vertical part is precisely
$1 - \varGamma(e) \smcirc T_e \pi$.
Thus if $\, \varphi \in \Gamma(M,E)$, so that $\, \partial \varphi \in
\Gamma(M,JE)$ \linebreak and $\, D\varphi \in \Gamma(M,\vec{J} E)$,
then as maps from $M$ to~$JE$, or equivalently, as fiberwise linear
maps from~$TM$ to~$TE$, $\partial\varphi$ is just the first order jet
(or tangent map) of~$\varphi$, while $D\varphi$ is the difference
\begin{equation} \label{eq:COVDER1}
 D\varphi~=~\partial\varphi - \varGamma \smcirc \varphi \,.
\end{equation}
This rule can be recast in a purely algebraic form, namely, by viewing
it as the result of inserting $\partial\varphi$ and $\varGamma \smcirc
\varphi$ into the \emph{difference map} for (first order) jet bundles,
i.e., the bundle map
\begin{equation} \label{eq:DIFMAP1}
 -: JE \times_E JE~~\longrightarrow~~
 L(\pi^*(TM),VE)~\cong~\pi^*(T^* M) \otimes VE
\end{equation}
over~$E$,  explicitly constructed as follows: given any point
$e \in E$ with $\pi(e) = x$ and any two jets $\, u_e^1,u_e^2
\in J_e^{} E \subset L(T_x^{} M,T_e^{} E)$, we have
$\, T_e^{} \pi \,\smcirc\, u_e^i = \mathrm{id}_{T_x M}^{}$,
for $i=1,2$, and hence the difference $u_e^1 - u_e^2$ (in the
vector space $L(T_x^{} M,T_e^{} E)$) takes values in the kernel
of $T_e^{} \pi$, that is, the vertical space $V_e E$ of~$E$, so
it becomes a linear map from $T_x M$ to $V_e E$.

The construction of the ``curvature map'' for connections in a given
fiber bundle $E$ over~$M$ is similar but somewhat more complicated
because it involves its semiholonomous second order jet bundle
$\bar{J}^{\>\!2} E$.
To see how that goes, we proceed as in Ref.~\cite{CFP} by first
constructing the iterated jet bundle $J(JE)$ of~$E$ and noting that
this allows two projections to~$JE$, namely, the iterated jet target
projection $\, \pi_{J(JE)}^{}: J(JE) \longrightarrow JE \,$ as well
as the jet prolongation $\, J\pi_{JE}^{}: J(JE) \longrightarrow JE$
\linebreak
of the jet target projection $\, \pi_{JE}^{}: JE \longrightarrow E \,$:
then by definition, $\bar{J}^{\>\!2} E$ is the subset of~$J(JE)$
where these two projections coincide.
Concretely, for $e \in E$, $u_e^{} \in J_e^{} E \,$ and $\, u'_{u_e}
\in J_{u_e}^{}(JE)$,
\begin{equation} \label{eq:SOJB1}
 (\pi_{J(JE)}^{})_{u_e}^{}(u'_{u_e})~=~u_e^{}~~,~~
 (J\pi_{JE}^{})_{u_e}^{}(u'_{u_e})~
 =~T_{u_e}^{} \pi_{JE}^{} \,\smcirc\, u'_{u_e} \,.
\end{equation}
As it turns out~\cite[Theorem~5.3.4, p.~174]{Sau}, $\bar{J}^{\>\!2} E$
is an affine bundle over~$JE$ which decomposes naturally into a symmetric
part and an antisymmetric part: the former is precisely the usual second
order jet bundle $J^{\>\!2} E$ of~$E$ (sometimes also called the holo%
nomous second order jet bundle of~$E$) and is an affine bundle over~$JE$,
with difference vector bundle equal to the pull-back to~$JE$ of the vector
bundle $\, \pi^* \bigl( \bvee^{\;\!2\,} T^\ast M \bigr) \otimes VE \,$
over~$E$ by the jet target projection $\pi_{JE}^{}$, whereas the latter
is a vector bundle over~$JE$, namely the pull-back to~$JE$ of the vector
bundle $\, \pi^* \bigl( \bwedge^{\!2\,} T^\ast M \bigr) \otimes VE \,$
over~$E$ by the jet target projection~$\pi_{JE}^{} \,$:
\begin{equation} \label{eq:SOJB2}
 \begin{array}{c}
  \bar{J}^{\>\!2} E~\cong~J^2 E \; \times_{JE}^{} \; \pi_{JE}^*
                          \Bigl( \pi^* \bigl( \bwedge^{\!2\,} T^\ast M \bigr)
                                            \otimes VE \Bigr) \,, \\[2mm]
  \vec{J^2} E~\cong~\pi_{JE}^*
                    \Bigl( \pi^* \bigl( \bvee^{\;\!2\,} T^\ast M \bigr)
                                      \otimes VE \Bigr) \,.
 \end{array}
\end{equation}
Now the proofs of these statements given in Ref.~\cite{Sau} and else%
where in the literature all involve local coordinate representations,
so it may be of some interest to provide a more direct, global argument.
To this end, consider what we shall call the \emph{difference map} for
semiholonomous second order jet bundles, i.e., the bundle map
\begin{equation} \label{eq:DIFMAP2}
 -: \bar{J}^{\>\!2} E \times_{JE} \bar{J}^{\>\!2} E~~
 \longrightarrow~~L^2(\pi^*(TM),VE)~\cong~
 \pi^* \bigl( \bigotimes\nolimits^{\!2} T^\ast M \bigr) \otimes VE
\end{equation}
over~$\pi_{JE}^{}$, where $L^2(\pi^*(TM),VE)$ denotes the bundle
of fiberwise bilinear maps from~$\pi^*(TM)$ to~$VE$, explicitly
constructed as follows: given any point $e \in E$ with $\pi(e) = x$,
any jet $u_e^{} \in J_e^{} E$ and any two semiholonomous second
order jets $\, u_{u_e}^{\prime\,1},u_{u_e}^{\prime\,2} \in
\bar{J}_{u_e}^{\>\!2} E \subset J_{u_e}(JE) \subset
L(T_x^{} M,T_{u_e}^{}(JE))$, we have $\, T_{u_e}^{} \pi_{JE}^{}
\,\smcirc\, u_{u_e}^{\prime\,i} = u_e^{}$, for $i=1,2$, and hence the
difference $u_{u_e}^{\prime\,1} - u_{u_e}^{\prime\,2}$ takes values
in the kernel of $T_{u_e}^{} \pi_{JE}^{}$, that is, the vertical space
$V_{u_e}^{\mathrm{jt}}(JE)$ of~$JE$ with respect to the jet target
projection~$\pi_{JE}$ from~$JE$ to~$E$.
But with respect to this projection, $JE$ is an affine bundle with difference
vector bundle $\vec{J} E$, so this vertical space is canonically isomorphic
to the corresponding difference vector space,
\[
 V_{u_e}^{\mathrm{jt}}(JE)~\cong~L(T_x^{} M,V_e^{} E) \,,
\]
and thus the difference $u_{u_e}^{\prime\,1} - u_{u_e}^{\prime\,2}$
becomes a linear map from $T_x M$ to this vector space, which can be
identified with a bilinear map from $T_x M$ to $V_e E$.
Obviously, any such bilinear map can be canonically decomposed into
its symmetric and its antisymmetric part, and the restriction of the
difference map for semiholonomous second order jet bundles to the
symmetric part will provide the \emph{difference map} for second
order jet bundles, i.e., the bundle map
\begin{equation} \label{eq:DIFMAP3}
 -: J^{\>\!2} E \times_{JE} J^{\>\!2} E~~
 \longrightarrow~~L_s^2(\pi^*(TM),VE)~\cong~
 \pi^* \bigl( \bvee^{\;\!2\,} T^\ast M \bigr) \otimes VE
\end{equation}
over~$\pi_{JE}^{}$, where $L_s^2(\pi^*(TM),VE)$ denotes the bundle
of fiberwise symmetric bilinear maps from~$\pi^*(TM)$ to~$VE$.
Moreover, it will provide an \emph{alternator} or \emph{antisymmetrizer}
for semiholonomous second order jets, which is an affine bundle map 
\begin{equation} \label{eq:ASPSSOJ}
 \mathrm{Alt} : \bar{J}^{\>\!2} E~~\longrightarrow~~
 L_a^2(\pi^*(TM),VE)~\cong~
 \pi^* \bigl( \bwedge^{\!2\,} T^\ast M \bigr) \otimes VE
\end{equation}
over~$\pi_{JE}^{}$, where $L_a^2(\pi^*(TM),VE)$ denotes the bundle
of fiberwise antisymmetric bilinear maps from~$\pi^*(TM)$ to~$VE$,
as follows: given any point $e \in E$ with $\pi(e) = x$, any jet
$u_e^{} \in J_e^{} E$ and any semiholonomous second order jet
$\, u'_{u_e} \in \bar{J}_{u_e}^{\>\!2} E$, choose any holonomous
second order jet $\, u_{u_e}^{\prime\,0} \in J_{u_e}^{\>\!2} E$ and define
$\mathrm{Alt}(u'_{u_e})$ to be the antisymmetric part of the
difference $u'_{u_e} - u_{u_e}^{\prime\,0}$, which obviously
does not depend on the choice of $u_{u_e}^{\prime\,0}$.
It is this construction that we shall use to define the curvature
of a connection in~$E$, given, say, in terms of its horizontal
lifting map, which is a section $\varGamma$ of~$JE$ as a
bundle over~$E$: observing that its jet prolongation $j\varGamma$
will then be a section not just of~$J(JE)$ but actually
of~$\bar{J}^{\>\!2} E$, again as a bundle over~$E$, since
$\, T\pi_{JE}^{} \,\smcirc\, j\varGamma = T \bigl( \pi_{JE}^{}
\,\smcirc\, \varGamma \bigr) = T \, \mathrm{id}_E^{} =
\mathrm{id}_{TE}^{}$, and noting that it will therefore be
a section of~$\bar{J}^{\>\!2} E$ along $\varGamma$ when
$\bar{J}^{\>\!2} E$ is considered as a bundle over~$JE$
instead, we can compose it with the alternator to produce a
section of $\, \pi_{JE}^* \bigl( \pi^* \bigl( \bwedge^{\!2\,}
T^\ast M \bigr) \otimes VE \bigr) \,$ along $\varGamma$,
which is just a section of $\pi^* \bigl( \bwedge^{\!2\,}
T^\ast M \bigr) \otimes VE$ and (possibly up to a sign which
is a matter of convention) is the curvature
\begin{equation} \label{eq:CURV1}
 \mathrm{curv}(\varGamma)~
 =~\mathrm{Alt} \,\smcirc\, j\varGamma
\end{equation}
of the given connection.

The main statement we want to prove in this section is that these
two constructions are invariant (or perhaps it might be better to say,
equivariant) under any action of any Lie groupoid $G$ over~$M$
on the bundle~$E$ over~$M$, provided we employ the correct
induced actions of the pertinent Lie groupoids derived from~$G$
on the pertinent bundles derived from~$E$.

Thus assume we are given a Lie groupoid~$G$ over~$M$, with
source projection $\, \sigma_G^{}: G \longrightarrow M$ \linebreak
and target projection $\, \tau_G^{}: G \longrightarrow M$, together
with an action
\begin{equation} \label{eq:ACTLG1}
 \begin{array}{cccc}
  \Phi_E^{}: & G \times_M E & \longrightarrow &     E    \\
             &    (g,e)     &   \longmapsto   & g \cdot e
 \end{array}
\end{equation}
of~$G$ on~$E$. (Cf.\ equation~(44) of Ref.~\cite{CFP}.)
Then we obtain an induced action
\begin{equation} \label{eq:IAGLJB1}
 \begin{array}{cccc}
  \Phi_{VE}^{}: & G \times_M VE & \longrightarrow &     VE     \\
                &    (g,v_e)    &   \longmapsto   & g \cdot v_e
 \end{array}
\end{equation}
of~$G$ on the vertical bundle $VE$ of~$E$, defined by
\begin{equation} \label{eq:IAGLJB2}
 g \cdot v_e^{}~=~T_e^{} L_g^{} (v_e^{}) \,,
\end{equation}
where $TL_g^{}$ denotes the tangent map to $L_g^{}$;
in other words, left translation by~$g$ in~$VE$ is just the
derivative of left translation by~$g$ in~$E$.
(Cf.\ equations~(89) and~(90) of Ref.~\cite{CFP}.)
Combining this with the natural action of the linear frame groupoid
$GL(TM)$ of the base manifold~$M$ on the cotangent bundle $T^* M$
of~$M$, we obtain an induced action of the Lie groupoid $\, GL(TM)
\times_M G \,$ on the linearized jet bundle $\vec{J} E$ of~$E$,
\begin{equation} \label{eq:IACT08}
 \begin{array}{ccc}
  \bigl( GL(TM) \times_M G \bigr) \times_M \vec{J} E
  & \longrightarrow & \vec{J} E \\[1mm]
             \bigl( (a,g),\vec{u}_e \bigr)          
  &   \longmapsto   & (a,g) \cdot \vec{u}_e
 \end{array}
\end{equation}
as suggested by the isomorphism of equation~(\ref{eq:LJETB1}), defined by
\begin{equation} \label{eq:IACT09}
 (a,g) \cdot \vec{u}_e~
 =~T_e L_g \,\smcirc\, \vec{u}_e \,\smcirc\, a^{-1} \,.
\end{equation}
(Cf.\ equations~(96) and~(98) of Ref.~\cite{CFP}.)
On the other hand, applying the jet functor to all structural maps that
appear in the original action~(\ref{eq:ACTLG1}), we obtain an induced
action
\begin{equation} \label{eq:IAJ1GJ1B1}
 \begin{array}{cccc}
  \Phi_{JE}: & JG \times_M JE & \longrightarrow &      JE       \\[1mm]
             &   (u_g,u_e)    &   \longmapsto   & u_g \cdot u_e
 \end{array}
\end{equation}
of the jet groupoid~$JG$ of~$G$ on the jet bundle~$JE$ of~$E$,
defined by
\begin{equation} \label{eq:IAJ1GJ1B2}
 u_g^{} \cdot u_e^{}~
 =~T_{(g,e)} \Phi_E^{} \,\smcirc\, (u_g^{},u_e^{}) \,\smcirc\,
   \pi_{JG}^{\mathrm{fr}}(u_g^{})^{-1} \,,
\end{equation}
where $T\Phi_E^{}$ denotes the tangent map to~$\Phi_E^{}$
and $\, \pi_{JG}^{\mathrm{fr}}: JG \longrightarrow GL(TM) \,$
is the natural projection of~$JG$ to the linear frame groupoid
$GL(TM)$ of the base manifold~$M$ defined by
\begin{equation} \label{eq:PRJGFRG1}
 \pi_{JG}^{\mathrm{fr}}(u_g^{})~
 =~T_g^{} \tau_G^{} \,\smcirc\, u_g^{} \,,
\end{equation}
whereas $\, \pi_{JG}^{}: JG \longrightarrow G \,$ is the usual
jet target projection.
(Cf.\ equations~(51), (93) and~(94) of Ref.~\cite{CFP}.)
This definition can also be phrased in terms of (bi)sections, as
follows: given any bisection $\beta$ of~$G$ and any section
$\varphi$ of~$E$, concatenate them into a map $(\beta,\varphi)$
from~$M$ to $G \times_M E$ and compose that with the action
$\Phi_E^{}$ of~$G$ on~$E$ to produce a map from~$M$ to~$E$
which, when precomposed with the inverse of the diffeomorphism
$\tau_G^{} \smcirc \beta$ of~$M$ induced by~$\beta$, gives a
new section $\, \Phi_E^{} \,\smcirc\, (\beta,\varphi) \,\smcirc\,
(\tau_G^{} \smcirc \beta)^{-1} \,$ of~$E$, and $\Phi_{JE}^{}$
is then fully characterized by the property that, upon taking the jet
prolongations of all these (bi)sections,
\begin{equation} \label{eq:IAJ1GJ1B3}
 \Phi_{JE}^{} \,\smcirc\, (j\beta,j\varphi) \,\smcirc\,
 (\tau_G^{} \smcirc \beta)^{-1}~
 =~j \bigl( \Phi_E^{} \,\smcirc\, (\beta,\varphi) \,\smcirc\,
   (\tau_G^{} \smcirc \beta)^{-1} \bigr) \,.
\end{equation}
Indeed, for any $y \in M$, putting $\, x = (\tau_G^{} \smcirc
\beta)^{-1}(y) \in M$, we have $\, (\beta(x),\varphi(x)) \in
G \times_M E$, $(j\beta(x),j\varphi(x)) \in JG \times_M JE \,$
and
\[
 \pi_{JG}^{\mathrm{fr}}(j\beta(x))^{-1}~
 =~(T_{\beta(x)} \tau_G^{} \,\smcirc\, T_x^{} \beta)^{-1}~
 =~(T_x^{} (\tau_G^{} \smcirc \beta))^{-1}~
 =~T_y^{} \bigl( (\tau_G^{} \smcirc \beta)^{-1} \bigr) \,,
 \]
 so
\[
\begin{aligned}
 &\bigl( \Phi_{JE}^{} \,\smcirc\, (j\beta,j\varphi) \,\smcirc\,
         (\tau_G^{} \smcirc \beta)^{-1} \bigr)(y)~
  =~\Phi_{JE}^{}(j\beta(x),j\varphi(x)) \\[1mm]
 & \qquad =~T_{(\beta(x),\varphi(x))} \Phi_E^{} \,\smcirc\,
            (T_x^{} \beta,T_x^{} \varphi) \,\smcirc\,
            T_y^{} \bigl( (\tau_G^{} \smcirc \beta)^{-1} \bigr) \\[1mm]
 & \qquad =~T_y^{} \bigl( \Phi_E^{} \,\smcirc\, (\beta,\varphi)
            \,\smcirc\, (\tau_G^{} \smcirc \beta)^{-1} \bigr)~
            =~j \bigl( \Phi_E^{} \,\smcirc\, (\beta,\varphi) \,\smcirc\,
            (\tau_G^{} \smcirc \beta)^{-1} \bigr) (y) \,.
\end{aligned}
\]
Now we have the following statement about compatibility between
these various actions:
\begin{prp}~\label{prp:EQUIV1}
 The difference map of equation~(\ref{eq:DIFMAP1}) is equivariant, i.e.,
 the diagram
 \begin{equation}
  \begin{array}{c}
   \xymatrix{
    ~~~JG \times_M (JE \times_E JE)~~~~
    \ar[r] \ar[d]_{(\pi_{JG}^{\mathrm{fr}} \times \pi_{JG}^{}\,,\,-)\,} &
    ~JE \times_E JE \ar[d]^{\,-} \\
    (GL(TM) \times_M G) \times_M \vec{J} E~ \ar[r] &
    ~~~~~\vec{J} E~~~~
   }
  \end{array}
 \end{equation}
 commutes.
\end{prp}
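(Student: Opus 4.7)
My plan is to verify the equivariance pointwise, by a direct unpacking of the two defining formulas~(\ref{eq:IAJ1GJ1B2}) and~(\ref{eq:IACT09}). Fix $u_g \in J_g G$ together with $u_e^1, u_e^2 \in J_e E$ sharing the compatibility $\sigma_G(g) = \pi(e) = x$, and write $y = \tau_G(g)$ and $a = \pi_{JG}^{\mathrm{fr}}(u_g) = T_g \tau_G \smcirc u_g$, so that $a : T_xM \to T_yM$. The clockwise route through the diagram sends $(u_g,(u_e^1,u_e^2))$ to the difference $u_g \cdot u_e^1 - u_g \cdot u_e^2 \in L(T_yM, V_{g\cdot e}E)$ of the two transformed jets, while the counter-clockwise route sends it to $(a, \pi_{JG}(u_g)) \cdot (u_e^1 - u_e^2)$, and my task is to show that these two elements of $\vec{J}_{g\cdot e}E$ coincide.

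First I would expand the clockwise composite using~(\ref{eq:IAJ1GJ1B2}) and exploit the bilinearity of composition of linear maps in each slot:
\begin{equation*}
 u_g \cdot u_e^1 - u_g \cdot u_e^2
 \;=\; T_{(g,e)}\Phi_E \smcirc \bigl[(u_g,u_e^1) - (u_g,u_e^2)\bigr] \smcirc a^{-1}
 \;=\; T_{(g,e)}\Phi_E \smcirc (0,\, u_e^1 - u_e^2) \smcirc a^{-1}.
\end{equation*}
The subtraction on the right is simply subtraction of linear maps $T_xM \to T_gG \oplus T_eE$, so the first slots cancel; and by the very construction of the difference map~(\ref{eq:DIFMAP1}), the middle factor takes values in $\{0\} \oplus V_eE$, since $u_e^1 - u_e^2$ belongs to $L(T_xM, V_eE)$.

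The next, and essentially only substantive, step is to replace $T_{(g,e)}\Phi_E$ restricted to such vertical pairs by $T_eL_g$. This is immediate from the observation that $\Phi_E$ restricted to the slice $\{g\} \times \pi^{-1}(x) \subset G \times_M E$ coincides with the left-translation map $L_g : \pi^{-1}(x) \to \pi^{-1}(y)$, so that by the chain rule $T_{(g,e)}\Phi_E(0,v) = T_eL_g(v)$ for every $v \in V_eE$; this is precisely the relation underlying equation~(\ref{eq:IAGLJB2}). Substituting gives
\begin{equation*}
 u_g \cdot u_e^1 - u_g \cdot u_e^2
 \;=\; T_eL_g \smcirc (u_e^1 - u_e^2) \smcirc a^{-1},
\end{equation*}
which, upon comparison with~(\ref{eq:IACT09}) and writing $g = \pi_{JG}(u_g)$, is precisely $(a,g) \cdot (u_e^1 - u_e^2)$, closing the diagram.

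I do not anticipate a real obstacle here; the whole proof is a formal bookkeeping exercise once one recognises that the action~(\ref{eq:IAJ1GJ1B2}) on jets and the action~(\ref{eq:IACT09}) on linearized jets are built out of exactly the same data, namely the frame part $a$ and the vertical tangent map $T_eL_g$. The only small piece of care required is the vertical-restriction identity $T_{(g,e)}\Phi_E|_{\{0\}\oplus V_eE} = T_eL_g$, which is the same identity used earlier to define the induced action of $G$ on $VE$.
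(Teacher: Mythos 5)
Your proposal is correct and follows essentially the same route as the paper: both reduce the claim, via linearity of $T_{(g,e)}\Phi_E$, to the identity $T_{(g,e)}\Phi_E(0,v)=T_eL_g(v)$ for $v\in V_eE$, and then precompose with $\pi_{JG}^{\mathrm{fr}}(u_g)^{-1}$. The only cosmetic difference is that the paper establishes this vertical identity by differentiating a vertical curve $e(t)$, whereas you invoke the restriction of $\Phi_E$ to the slice $\{g\}\times\pi^{-1}(x)$ and the chain rule, which is the same observation.
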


\begin{proof}
 Given $g \in G$ with $\sigma_G^{}(g) = x$ and $\tau_G^{}(g) = y$,
 $e \in E$ with $\pi(e) = x$, $u_g^{} \in J_g^{} G$ and $\, u_e^1,
 u_e^2 \in J_e^{} E \subset L(T_x^{} M,T_e^{} E)$, we want to
 prove that
 \[
  u_g^{} \!\cdot u_e^2 - u_g^{} \!\cdot u_e^1~
  =~(\pi_{JG}^{\mathrm{fr}}(u_g^{}) , g) \cdot (u_e^2 - u_e^1) \,.
 \]
 Fixing some tangent vector $v \in T_x M$, choose a vertical curve
 $e(t)$ in $E$ ($\pi(e(t)) = x$) such that
 \[
  e(t) \big|_{t=0}~=~e~~~,~~~
  \frac{d}{dt} \, e(t) \Big|_{t=0}~=~(u_e^2 - u_e^1)(v) \,.
 \]
 Then
 \[
 \begin{aligned}
  &T_{(g,e)} \Phi_E^{} \bigl( u_g^{}(v) , u_e^2(v) \bigr) \, - \,
   T_{(g,e)} \Phi_E^{} \bigl( u_g^{}(v) , u_e^1(v) \bigr)~
   =~T_{(g,e)} \Phi_E^{} \, \bigl( 0 , (u_e^2 - u_e^1)(v) \bigr)
  \\[1ex]
  & \qquad =~\frac{d}{dt} \, \Phi_E^{}(g,e(t)) \Big|_{t=0}~
   =~\frac{d}{dt} \, L_g^{}(e(t)) \Big|_{t=0}~
   =~T_e^{} L_g^{} \, \bigl( (u_e^2 - u_e^1)(v) \bigr) \,,
 \end{aligned}
 \]
 or using that $v$ was arbitrary,
 \[
  T_{(g,e)} \Phi_E^{} \,\smcirc\, \bigl( u_g^{} , u_e^2 \bigr) \, - \,
  T_{(g,e)} \Phi_E^{} \,\smcirc\, \bigl( u_g^{} , u_e^1 \bigr)~
  =~T_{(g,e)} \Phi_E^{} \,\smcirc\, \bigl( 0 , u_e^2 - u_e^1 \bigr)~
  =~T_e^{} L_g^{} \,\smcirc\, (u_e^2 - u_e^1) \,.
 \]
 Precomposing with $\pi_{JG}^{\mathrm{fr}}(u_g^{})^{-1}$ proves
 the claim.
\qed
\end{proof}

To deal with the second part, we begin by iterating the procedure of
applying the jet functor to obtain an induced action
\begin{equation} \label{eq:IAIJGIJB1}
 \begin{array}{cccc}
  \Phi_{J(JE)}:
  & J(JG) \times_M J(JE) & \longrightarrow &          J(JE)
  \\[1mm]
  & (u'_{u_g},u'_{u_e})  &   \longmapsto   & u'_{u_g} \!\cdot\, u'_{u_e}
 \end{array}
\end{equation}
of the iterated jet groupoid~$J(JG)$ of~$G$ on the iterated jet bundle~%
$J(JE)$ of~$E$, defined by
\begin{equation} \label{eq:IAIJGIJB2}
 u'_{u_g} \!\cdot\, u'_{u_e}~
 =~T_{(u_g,u_e)} \Phi_{JE}^{} \,\smcirc\, (u'_{u_g},u'_{u_e})
   \,\smcirc\, \pi_{J(JG)}^{\mathrm{fr}}(u'_{u_g})^{-1} \,,
\end{equation}
with the same notation as before; in particular, the definition can again
be phrased in terms of (bi)sections.
Namely, given any bisection $\tilde{\beta}$ of~$JG$ and any section
$\tilde{\varphi}$ of~$JE$ which (by composition with $\pi_{JG}^{}$)
project to a bisection $\beta$ of~$G$ and to a section~$\varphi$
of~$E$, respectively, so that $\, \tau_{JG}^{} \smcirc \tilde{\beta}
= \tau_G^{} \smcirc \beta$, we have, just as in equation~(\ref%
{eq:IAJ1GJ1B3}) above,
\begin{equation} \label{eq:IAIJGIJB3}
 \Phi_{J(JE)}^{} \,\smcirc\, (j\tilde{\beta},j\tilde{\varphi})
 \,\smcirc\, (\tau_G^{} \smcirc \beta)^{-1}~
 =~j \bigl( \Phi_{JE}^{} \,\smcirc\, (\tilde{\beta},\tilde{\varphi})
            \,\smcirc\, (\tau_G^{} \smcirc \beta)^{-1} \bigr) \,.
\end{equation}
This iterated action admits restrictions to several subgroupoids and
subbundles, among which the following will become important to us
at some point or another: the natural induced actions
\begin{equation} \label{eq:IASJGSJB1}
 \begin{array}{cccc}
  \Phi_{\bar{J}^{\>\!2} E}:
  & \bar{J}^{\>\!2} G \times_M \bar{J}^{\>\!2} E
  & \longrightarrow & \bar{J}^{\>\!2} E \\[1mm]
  & (u'_{u_g},u'_{u_e})  &   \longmapsto   & u'_{u_g} \!\cdot\, u'_{u_e} 
 \end{array}
\end{equation}
of the semiholonomous second order jet groupoid~$\bar{J}^{\>\!2} G$
of~$G$ and
\begin{equation} \label{eq:IAJ2GSJB1}
 \begin{array}{cccc}
  \Phi_{\bar{J}^{\>\!2} E}:
  & J^{\>\!2} G \times_M \bar{J}^{\>\!2} E
  & \longrightarrow & \bar{J}^{\>\!2} E \\[1mm]
  & (u'_{u_g},u'_{u_e})  &   \longmapsto   & u'_{u_g} \!\cdot\, u'_{u_e}
 \end{array}
\end{equation}
of the second order jet groupoid~$J^{\>\!2} G$ of~$G$ on the
semiholonomous second order jet bundle $\bar{J}^{\>\!2} E$ of~$E$,
as well as the action
\begin{equation} \label{eq:IAJ2GJ2B1}
 \begin{array}{cccc}
  \Phi_{J^{\>\!2} E}:
  & J^{\>\!2} G \times_M J^{\>\!2} E
  & \longrightarrow & J^{\>\!2} E \\[1mm]
  & (u'_{u_g},u'_{u_e})  &   \longmapsto   & u'_{u_g} \!\cdot\, u'_{u_e}
 \end{array}
\end{equation}
of the second order jet groupoid~$J^{\>\!2} G$ of~$G$ on the
second order jet bundle $J^{\>\!2} E$ of~$E$, all defined by the
same formula:
\begin{equation} \label{eq:IASJGSJB2}
 u'_{u_g} \!\cdot\, u'_{u_e}~
 =~T_{(u_g,u_e)} \Phi_{JE}^{} \,\smcirc\, (u'_{u_g},u'_{u_e})
   \,\smcirc\, \pi_{JG}^{\mathrm{fr}}(u_g^{})^{-1} \,.
\end{equation}
Here, the simplification in the last term on the rhs of equation~%
(\ref{eq:IASJGSJB2}), as compared to that of equation~%
(\ref{eq:IAIJGIJB2}), stems from the fact that when
$\, u'_{u_g} \in \bar{J}_{u_g}^{\>\!2} G$, i.e.,
$T_{u_g}^{} \pi_{JG}^{} \,\smcirc\, u'_{u_g} = u_g^{}$,
then since $\, \tau_{JG}^{} = \tau_G^{} \,\smcirc\, \pi_{JG}^{}$,
we~get
\[
 \pi_{J(JG)}^{\mathrm{fr}}(u'_{u_g})~
 =~T_ {u_g}^{} \tau_{JG}^{} \,\smcirc\, u'_{u_g}~
 =~T_ g^{} \tau_G^{} \,\smcirc\,
   T_ {u_g}^{} \pi_{JG}^{} \,\smcirc\, u'_{u_g}~
 =~T_g^{} \tau_G^{} \,\smcirc\, u_g^{}~
 =~\pi_{JG}^{\mathrm{fr}}(u_g^{}) \,.
\]
Moreover, if $u'_{u_g}$ and $u'_{u_e}$ are both semiholonomous,
then so is $u'_{u_g} \!\cdot\, u'_{u_e}$, i.e., we have
\[
 u'_{u_g} \in \bar{J}_{u_g}^{\>\!2} G \,,\,
 u'_{u_e} \in \bar{J}_{u_e}^{\>\!2} E~\Longrightarrow~
 u'_{u_g} \!\cdot\, u'_{u_e} \in \bar{J}_{u_g \cdot u_e}^{\>\!2} E \,,
\]
since in this case, $T_{u_g}^{} \pi_{JG}^{} \,\smcirc\, u'_{u_g} = u_g^{} \,$
and $\, T_{u_e}^{} \pi_{JE}^{} \,\smcirc\, u'_{u_e} = u_e^{}$, and using the
equality $\, \pi_{JE}^{} \,\smcirc\, \Phi_{JE}^{} = \Phi_E^{} \,\smcirc\,
(\pi_{JG}^{} \times_M^{} \pi_{JE}^{})$, we get
\[
\begin{aligned}
 &T_{u_g \cdot\, u_e}^{} \pi_{JE}^{} \,\smcirc\,
  (u'_{u_g} \!\cdot\, u'_{u_e})~
  =~T_{u_g \cdot\, u_e}^{} \pi_{JE}^{} \,\smcirc\,
    T_{(u_g,u_e)} \Phi_{JE}^{} \,\smcirc\, (u'_{u_g},u'_{u_e})
    \,\smcirc\, \pi_{JG}^{\mathrm{fr}}(u_g^{})^{-1} \\[1mm]
 & \qquad =~T_{(g,e)} \Phi_E^{} \,\smcirc\,
            \bigl( T_{u_g}^{} \pi_{JG}^{} \,\smcirc\, u'_{u_g} \,,\,
                   T_{u_e}^{} \pi_{JE}^{} \,\smcirc\, u'_{u_e} \bigr)
            \,\smcirc\, \pi_{JG}^{\mathrm{fr}}(u_g^{})^{-1} \\[1mm]
 & \qquad =~T_{(g,e)} \Phi_E^{} \,\smcirc\, (u_g^{},u_e^{})
            \,\smcirc\, \pi_{JG}^{\mathrm{fr}}(u_g^{})^{-1} \\[1mm]
 & \qquad =~u_g^{} \cdot u_e^{} \,.
\end{aligned}
\]
Similarly, it is clear that if $u'_{u_g}$ and $u'_{u_e}$ are both holonomous,
then so is $u'_{u_g} \!\cdot\, u'_{u_e}$, i.e., we have
\[
 u'_{u_g} \in J_{u_g}^{\>\!2} G \,,\,
 u'_{u_e} \in J_{u_e}^{\>\!2} E~\Longrightarrow~
 u'_{u_g} \!\cdot\, u'_{u_e} \in J_{u_g \cdot u_e}^{\>\!2} E \,,
\]
since in this case there will exist a local bisection $\beta$ of~$G$ and a
local section $\varphi$ of~$E$, both defined in some open neighborhood
$U$ of~$x$, satisfying $g = \beta(x)$, $e = \varphi(x)$, $u_g^{} =
j\beta(x) = T_x^{} \beta$, $u_e^{} = j\varphi(x) = T_x^{} \varphi$,
$u'_{u_g} = j(j\beta)(x) = T_x^{} (j\beta)$, $u'_{u_e} = j(j\varphi)(x)
= T_x^{} (j\varphi) \,$ and hence, putting \linebreak $y = (\tau_G^{}
\smcirc \beta)(x) \,$ and using equation~(\ref{eq:IASJGSJB2}), 
equation~(\ref{eq:IAIJGIJB3}) with $\, \tilde{\beta} = j\beta$,
$\tilde{\varphi} = j\varphi \,$ and equation~(\ref{eq:IAJ1GJ1B3}),
\[
\begin{aligned}
 u'_{u_g} \!\cdot\, u'_{u_e}~
 &=~\Phi_{J(JE)}^{}(j(j\beta)(x),j(j\varphi)(x)) \\[1mm]
 &=~\bigl( \Phi_{J(JE)}^{} \,\smcirc\, (j(j\beta),j(j\varphi))
           \,\smcirc\, (\tau_G^{} \smcirc \beta)^{-1} \bigr) (y) \\[1mm]
 &=~j \bigl( \Phi_{JE}^{} \,\smcirc\, (j\beta,j\varphi)
                     \,\smcirc\, (\tau_G^{} \smcirc \beta)^{-1}
                     \bigr) (y) \\[1mm]
 &= j \bigl( j \bigl( \Phi_E^{} \,\smcirc\, (\beta,\varphi)
                              \,\smcirc\, (\tau_G^{} \smcirc \beta)^{-1}
                              \bigr) \bigr) (y) \,.
\end{aligned}
\]
Finally, observe that, just like the (first order) jet groupoid~$JG$
of~$G$, its iterated jet groupoid $J(JG)$ and, by restriction, its
semiholonomous second order jet groupoid~$\bar{J}^{\>\!2} G$
and second order jet groupoid~$J^{\>\!2} G$ all admit natural
projections both to $GL(TM)$ and to~$G$, which are just given
by composition of those for~$JG$ with the natural projection
$\, \pi_{J(JG)}^{}: J(JG) \longrightarrow JG \,$ and its respective
restrictions $\, \pi_{\bar{J}^{\>\!2} G}^{}: \bar{J}^{\>\!2} G
\longrightarrow JG \,$ and $\, \pi_{J^{\>\!2} G}^{}: J^{\>\!2} G
\longrightarrow JG$:
\[
 \begin{array}{c}
  \pi_{J(JG)}^{\mathrm{fr}}~
  =~\pi_{JG}^{\mathrm{fr}} \,\smcirc\, \pi_{J(JG)}^{}:
  J(JG) \longrightarrow GL(TM)~~~,~~~
  \pi_{J(JG),G}^{}~
  =~\pi_{JG}^{} \,\smcirc\, \pi_{J(JG)}^{}:
  J(JG) \longrightarrow G
  \\[1ex]
  \pi_{\bar{J}^{\>\!2} G}^{\mathrm{fr}}~
  =~\pi_{JG}^{\mathrm{fr}} \,\smcirc\, \pi_{\bar{J}^{\>\!2} G}^{}:
  \bar{J}^{\>\!2} G \longrightarrow GL(TM)~~~,~~~
  \pi_{\bar{J}^{\>\!2} G,G}^{}~
  =~\pi_{JG}^{} \,\smcirc\, \pi_{\bar{J}^{\>\!2} G}^{}:
  \bar{J}^{\>\!2} G \longrightarrow G
  \\[1ex]
  \pi_{J^{\>\!2} G}^{\mathrm{fr}}~
  =~\pi_{JG}^{\mathrm{fr}} \,\smcirc\, \pi_{J^{\>\!2} G}^{}:
  J^{\>\!2} G \longrightarrow GL(TM)~~~,~~~
  \pi_{J^{\>\!2} G,G}^{}~
  =~\pi_{JG}^{} \,\smcirc\, \pi_{J^{\>\!2} G}^{}:
  J^{\>\!2} G \longrightarrow G \\[1mm]
 \end{array}
\]
With this notation, we can now formulate the following statement about
compatibility between these various actions:
\begin{prp}~\label{prp:EQUIV2}
 The difference maps of equations~(\ref{eq:DIFMAP2}) and~(\ref{eq:DIFMAP3})
 are equivariant, i.e., the diagrams
 \begin{equation}
  \begin{array}{c}
   \xymatrix{
    \qquad\qquad
    \bar{J}^{\>\!2} G \times_M
    (\bar{J}^{\>\!2} E \times_{JE} \bar{J}^{\>\!2} E)
    ~\qquad\qquad
    \ar[r] \ar[d]_{(\pi_{\bar{J}^{\>\!2} G}^{\mathrm{fr}}
                    \times \pi_{\bar{J}^{\>\!2} G,G}^{}\,,\,-)\,} &
    ~~~~\bar{J}^{\>\!2} E \times_{JE} \bar{J}^{\>\!2} E~~~
    \ar[d]^{\,-} \\
    (GL(TM) \times_M G) \times_M^{}
    \Bigl( \pi^* \bigl( \bigotimes\nolimits^{\!2} T^\ast M \bigr)
           \otimes VE \Bigr)~ \ar[r] &
    ~\pi^* \bigl( \bigotimes\nolimits^{\!2} T^\ast M \bigr) \otimes VE
   }
  \end{array}
 \end{equation}
 and
 \begin{equation}
  \begin{array}{c}
   \xymatrix{
    \qquad\qquad
    J^{\>\!2} G \times_M (J^{\>\!2} E \times_{JE} J^{\>\!2} E)
    ~\qquad\qquad
    \ar[r] \ar[d]_{(\pi_{J^{\>\!2} G}^{\mathrm{fr}}
                    \times \pi_{J^{\>\!2} G,G}^{}\,,\,-)\,} &
    ~~~~J^{\>\!2} E \times_{JE} J^{\>\!2} E~~~
    \ar[d]^{\,-} \\
    (GL(TM) \times_M G) \times_M^{}
    \Bigl( \pi^* \bigl( \bvee^{\;\!2\,} T^\ast M \bigr)
           \otimes VE \Bigr)~ \ar[r] &
    ~\pi^* \bigl( \bvee^{\;\!2\,} T^\ast M \bigr) \otimes VE
   }
  \end{array}
 \end{equation}
 commute. Similarly, the alternator or antisymmetrizer map of equation~%
 (\ref{eq:ASPSSOJ}) is also equivariant, i.e., the diagram
 \begin{equation}
  \begin{array}{c}
   \xymatrix{
    \qquad\qquad\qquad~
    J^{\>\!2} G \times_M \bar{J}^{\>\!2} E
    ~~\qquad\qquad\qquad
    \ar[r] \ar[d]_{(\pi_{J^{\>\!2} G}^{\mathrm{fr}}
                    \times \pi_{J^{\>\!2} G,G}^{}\,,\,\mathrm{Alt})\,} &
    \quad\qquad~ \bar{J}^{\>\!2} E \qquad\quad
    \ar[d]^-{\,\mathrm{Alt}} \\
    (GL(TM) \times_M G) \times_M^{}
    \Bigl( \pi^* \bigl( \bwedge^{\!2\,} T^\ast M \bigr)
           \otimes VE \Bigr)~ \ar[r] &
    ~\pi^* \bigl( \bwedge^{\!2\,} T^\ast M \bigr) \otimes VE
   }
  \end{array}
 \end{equation}
 commutes.
\end{prp}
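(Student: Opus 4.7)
The plan is to lift the proof of Proposition~\ref{prp:EQUIV1} ``one level up'': use the tangent map $T\Phi_{JE}^{}$ where that proof used $T\Phi_E^{}$, then deduce the alternator diagram by restriction.

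For the first diagram, starting from the defining formula~(\ref{eq:IASJGSJB2}) and using linearity of $T_{(u_g,u_e)} \Phi_{JE}^{}$ in its second argument (exactly as in the proof of Proposition~\ref{prp:EQUIV1}), one finds
\[
 u'_{u_g} \!\cdot u_{u_e}^{\prime\,2} \, - \, u'_{u_g} \!\cdot u_{u_e}^{\prime\,1}~
 =~T_{(u_g,u_e)} \Phi_{JE}^{} \,\smcirc\, \bigl( 0,\, u_{u_e}^{\prime\,2} - u_{u_e}^{\prime\,1} \bigr) \,\smcirc\, \pi_{JG}^{\mathrm{fr}}(u_g^{})^{-1} .
\]
Semiholonomy forces $u_{u_e}^{\prime\,1}$ and $u_{u_e}^{\prime\,2}$ to project under $T\pi_{JE}^{}$ to the same $u_e^{}$, so their difference takes values in $V_{u_e}^{\mathrm{jt}}(JE) \cong L(T_x^{} M, V_e^{} E)$, as recalled in the discussion preceding~(\ref{eq:DIFMAP2}).

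The central step, which I expect to be the main obstacle, is to identify $T_{(u_g,u_e)} \Phi_{JE}^{}(0,\cdot)$ restricted to $V_{u_e}^{\mathrm{jt}}(JE)$ with the induced action of $(a,g) = (\pi_{JG}^{\mathrm{fr}}(u_g^{}), g)$ on $\vec{J}_e^{} E$ from~(\ref{eq:IACT09}). This is a direct curve-derivative computation: if $\xi \in V_{u_e}^{\mathrm{jt}}(JE)$ is represented by a vertical curve $u_e^{}(t)$ in $J_e^{} E$, then $\Phi_{JE}^{}(u_g^{}, u_e^{}(t)) = T_{(g,e)} \Phi_E^{} \,\smcirc\, (u_g^{}, u_e^{}(t)) \,\smcirc\, a^{-1}$ (with $g$ and $e$ fixed), whose $t$-derivative at $0$ equals $T_e^{} L_g^{} \,\smcirc\, \xi \,\smcirc\, a^{-1}$, matching~(\ref{eq:IACT09}) precisely. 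Combining this identification with the outer pre-composition by $a^{-1}$ in~(\ref{eq:IASJGSJB2}) exposes the two $a^{-1}$ slots acting on the two copies of $T_x^{} M$, while post-composition with $T_e^{} L_g^{}$ accounts for the $VE$ slot, reproducing the action on $\pi^* \bigl( \bigotimes\nolimits^{\!2} T^\ast M \bigr) \otimes VE$. The holonomous diagram is obtained as the restriction to the symmetric sub-bundles: since pre-composing both input slots with the same $a^{-1}$ commutes with transposition of entries, the action preserves symmetry and the same argument applies verbatim.

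For the alternator diagram, fix $u'_{u_g} \in J^{\>\!2}_{u_g} G$, $u'_{u_e} \in \bar{J}^{\>\!2}_{u_e} E$ and a holonomous reference $u_{u_e}^{\prime\,0} \in J^{\>\!2}_{u_e} E$, so that $\mathrm{Alt}(u'_{u_e})$ is the antisymmetric part of $u'_{u_e} - u_{u_e}^{\prime\,0}$. By the fact already established in the excerpt that the product of two holonomous jets is holonomous, $u'_{u_g} \!\cdot u_{u_e}^{\prime\,0}$ lies in $J^{\>\!2}_{u_g \cdot u_e} E$ and is an admissible reference for computing $\mathrm{Alt}(u'_{u_g} \!\cdot u'_{u_e})$. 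Invoking the semiholonomous equivariance just proved (with $u'_{u_g}$ viewed as an element of $\bar{J}^{\>\!2}_{u_g} G$),
\[
 (u'_{u_g} \!\cdot u'_{u_e}) \, - \, (u'_{u_g} \!\cdot u_{u_e}^{\prime\,0})~
 =~\bigl( \pi_{J^{\>\!2} G}^{\mathrm{fr}}(u'_{u_g}),\, \pi_{J^{\>\!2} G,G}^{}(u'_{u_g}) \bigr) \cdot (u'_{u_e} - u_{u_e}^{\prime\,0}) .
\]
Since the $(a,g)$-action, being pre-composition with $a^{-1}$ on each tensor slot and post-composition with $T_e^{} L_g^{}$, commutes with the canonical decomposition into symmetric and antisymmetric parts, taking antisymmetric parts of both sides yields the claim. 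Beyond the vertical-bundle identification above, every remaining step is routine linear algebra.
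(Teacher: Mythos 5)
Your proof is correct and follows essentially the same route as the paper: reduce to the linearity of $T_{(u_g,u_e)}\Phi_{JE}^{}$ in its second slot, identify its restriction to the jet-target-vertical subspace $V_{u_e}^{\mathrm{jt}}(JE)\cong L(T_x^{}M,V_e^{}E)$ with the induced action of $(\pi_{JG}^{\mathrm{fr}}(u_g^{}),g)$ on $\vec{J}E$, and obtain the symmetric and alternator statements by invariance of the symmetric/antisymmetric decomposition. Where the paper invokes the fact (from Ref.~\cite{CFP}) that $\Phi_{JE}^{}$ is affine along the fibers of $JE$ over $E$ together with Proposition~\ref{prp:EQUIV1}, you verify the same identification by a direct curve-derivative computation, which is just a more self-contained rendering of the identical step.
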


\begin{proof}
 First of all, the statements about commutativity of the last two
 diagrams  are trivial consequences of that about commutativity of
 the first, together with the fact that the decomposition of rank~$2$
 tensors into their symmetric and antisymmetric parts is obviously
 invariant under the action of~$GL(TM) \times_M G$.
 To deal with the first diagram, we shall find it convenient to keep track
 of the identifications made in the definition of the difference map in
 equation~(\ref{eq:DIFMAP2}) by momentarily (i.e., just for the
 remainder of this proof) denoting that difference map by $\delta$.
 Thus given $g \in G$ with $\sigma_G^{}(g) = x$ and $\tau_G^{}(g) = y$,
 $e \in E$ with $\pi(e) = x$, $u_g^{} \in J_g^{} G$, $u_e^{} \in J_e^{} E$,
 $u'_{u_g} \in \bar{J}_{u_g}^{\>\!2} G \,$ and $\, u_{u_e}^{\prime\,1},
 u_{u_e}^{\prime\,2} \in \bar{J}_{u_e}^{\>\!2} E \subset J_{u_e}^{}(JE)
 \subset L(T_x^{} M,T_{u_e}^{}(JE))$, we want to show that
 \[
  \delta( u'_{u_g} \!\cdot\, u_{u_e}^{\prime\,2} ,
          u'_{u_g} \!\cdot\, u_{u_e}^{\prime\,1})~
  =~(\pi_{JG}^{\mathrm{fr}}(u_g^{}) , g) \cdot
    \delta(u_{u_e}^{\prime\,2} , u_{u_e}^{\prime\,1}) \,.
 \]
 Note that $\, \delta(u_{u_e}^{\prime\,2},u_{u_e}^{\prime\,1})
 \in L^2(T_x^{} M,V_e^{} E) \,$ can be defined explicitly by stating
 that, for any tangent vector $v \in T_x M$, the standard difference
 $\, u_{u_e}^{\prime\,2} - u_{u_e}^{\prime\,1}$, when evaluated
 on~$v$, gives a tangent vector in $T_{u_e}(JE)$ which, being vertical
 with respect to the jet target projection $\pi_{JE}^{}$, can be
 realized as that of a straight line in~$J_e^{} E$ through~$u_e^{}$,
 whose direction is $\, \delta(u_{u_e}^{\prime\,2},u_{u_e}^{\prime\,1})
 (v,.) \in L(T_x^{} M,V_e^{} E)$:
 \[
  (u_{u_e}^{\prime\,2} - u_{u_e}^{\prime\,1})(v)~
  =~\frac{d}{dt} \bigl( u_e^{} \, + \,
    t \, \delta(u_{u_e}^{\prime\,2},u_{u_e}^{\prime\,1})(v,.) \bigr)
    \big|_{t=0} \,.
 \]
 Similarly, $\delta(u'_{u_g} \!\cdot\, u_{u_e}^{\prime\,2} , u'_{u_g}
 \!\cdot\,u_{u_e}^{\prime\,1}) \in L^2(T_y^{} M,V_{g \cdot e}^{} E) \,$
 can be defined explicitly by stating that, for any tangent vector $w \in
 T_y M$, the standard difference $\, u'_{u_g} \!\cdot\, u_{u_e}^{\prime\,2}
 - u'_{u_g} \!\cdot\, u_{u_e}^{\prime\,1}$, when evaluated on~$w$, gives
 a tangent vector in $T_{u_g \cdot u_e}(JE)$ which, being vertical with respect
 to the jet target projection $\pi_{JE}^{}$, can be realized as that
 of a straight line in $J_{g \cdot e}^{} E$ through~$u_g^{}
 \cdot u_e^{}$, whose direction is $\, \delta(u'_{u_g} \!\cdot\,
 u_{u_e}^{\prime\,2}, u'_{u_g} \!\cdot\, u_{u_e}^{\prime\,1})
 (w,.) \in L(T_y^{} M,V_{g \cdot e}^{} E)$:
 \[
  (u'_{u_g} \!\cdot\, u_{u_e}^{\prime\,2} \, - \,
   u'_{u_g} \!\cdot\, u_{u_e}^{\prime\,1})(w)~
  =~\frac{d}{dt} \bigl( u_g^{} \cdot u_e^{} \, + \,
    t \, \delta(u'_{u_g} \!\cdot\, u_{u_e}^{\prime\,2},
                u'_{u_g} \!\cdot\, u_{u_e}^{\prime\,1})(w,.) \bigr)
    \big|_{t=0} \,.
 \]
 On the other hand, putting $\, v = \pi_{JG}^{\mathrm{fr}}
 (u_g)^{-1}(w)$, we have
 \[
 \begin{aligned}
  &(u'_{u_g} \!\cdot\, u_{u_e}^{\prime\,2} \, - \,
   u'_{u_g} \!\cdot\, u_{u_e}^{\prime\,1})(w) \\[1ex]
  & \qquad =~T_{(u_g,u_e)} \Phi_{JE}^{}
             \bigl( u'_{u_g}(v) , u_{u_e}^{\prime\,2}(v) \bigr) \, - \,
             T_{(u_g,u_e)} \Phi_{JE}^{}
             \bigl( u'_{u_g}(v) , u_{u_e}^{\prime\,1}(v) \bigr) \\[1.5ex]
  & \qquad =~T_{(u_g,u_e)} \Phi_{JE}^{} \bigl( 0 ,
          (u_{u_e}^{\prime\,2} - u_{u_e}^{\prime\,1})(v) \bigr) \\[0.5ex]
  & \qquad =~\frac{d}{dt} \, \Phi_{JE}^{} \bigl( u_g^{} , u_e^{} \, + \,
             t \, \delta(u_{u_e}^{\prime\,2},u_{u_e}^{\prime\,1})(v,.)
             \bigr) \Big|_{t=0} \\[1ex]
  & \qquad =~\bigl( (\pi_{JG}^{\mathrm{fr}}(u_g^{}) , g) \cdot
                    \delta(u_{u_e}^{\prime\,2},u_{u_e}^{\prime\,1})
                    \bigr) (w,.) \,,
 \end{aligned}
 \]
 where in the last step we have used the fact that, as shown in
 Ref.~\cite{CFP}, the action $\Phi_{JE}^{}$ is affine along the
 fibers of~$JE$ over~$E$, together with Proposition~\ref{prp:EQUIV1}.
 \qed
\end{proof}

Returning to the formalization of the minimal coupling prescription
and the curvature map, we want to emphasize that the context
outlined above is a little bit too broad to fit into the theoretical
setting of field theory, since general connections in general fiber
bundles are \emph{not} fields! \linebreak
This is so because they are not sections of bundles over space-time but
rather sections of bundles over some ``extended space-time'' which
is itself the total space of some fiber bundle over ordinary space-time.
As such, when expressed in local coordinates and local trivializations,
such sections correspond to multiplets of functions which, apart from
being functions on space-time, depend on extra ``vertical'' variables,
namely, the local coordinates along the fibers of this bundle, and in
 the absence of stringent restrictions on that dependence will produce
\emph{infinite} multiplets of fields when expanded in an appropriate
basis.
This situation is familiar from ``Kaluza-Klein'' type theories, which
have been proposed long ago as models for unifying gravity with the
other fundamental interactions and where the extended space-time is
assumed to be the total space of some principal bundle over ordinary
space-time, so that one can use the representation theory of the
underlying structure group to control and restrict the dependence
of functions on the extra vertical variables.%
\footnote{The simplest such model and one of the most interesting
attempts to unify gravity with electromagnetism uses an extended
space-time which is the total space of a principal $U(1)$-bundle
over ordinary space-time, so the extra vertical variables reduce
to a single phase~$\theta$, the representations of the structure
group are given by its characters $\, \theta \longmapsto \exp
(ik\theta)$, $k \in \mathbb{Z}$, and the expansion of functions on
extended space-time is just a Fourier expansion with coefficients
that are functions on ordinary space-time: still an infinite multiplet
of fields.} 
The main problem with these models is that the aforementioned
stringent restrictions, needed to weed out the large number of
(often unwanted) extra fields, are usually quite artificial and
imposed more or less ``ad hoc'', without any convincing argument
as to how they should arise from the dynamics of a fundamental
theory in higher dimensions.

Here, these remarks serve merely as a guide to what should be
done and what not: we shall completely avoid all these problems
by working not with general connections but only with connections
that do have a natural interpretation as fields in physics: these
are connections whose behavior along the fibers is fixed by some
condition, such as linear connections in vector bundles or affine
connections in affine bundles, where the connection coefficients
are required to be linear or affine functions along the fibers,
respectively, or more generally, principal connections, which
are required to be equivariant under the action of the structure
group on the fibers of the principal bundle and are therefore
completely fixed along the entire fiber once they are known
at a single point in that fiber.

Thus from this point onward and throughout the rest of the paper,
we shall assume that $E$ is not just a general fiber bundle but
rather a fiber bundle with structure group, which is a Lie group~%
$G_0^{}$, with Lie algebra~$\mathfrak{g}_0^{}$, say, so there
is a principal $G_0^{}$-bundle $P$ to which $E$ is associated (this,
by the way, includes the case where $E$ is~$P$ itself), and any
connection in~$E$ to be considered is associated to a principal
connection in~$P$.
As a result, we have to adapt our formalism to this situation, and
of course the Lie groupoid~$G$ that appears above, as well as in
Ref.~\cite{CFP}, but has so far been left unspecified, will now be
the gauge groupoid of~$P$.

\section{Gauge groupoids, jet groupoids and induced actions}

In order to implement the program outlined in the last paragraph
of the previous section, we shall first introduce the gauge groupoid
of a principal bundle and some of its actions (more specifically, on
the principal bundle itself and on any of its associated bundles, as
well as on the respective vertical bundles) and then investigate how
some of these lift when taking first and second order jet prolongations.

\subsection{The gauge groupoid and its actions}

To begin with, let us recall the definition of the gauge groupoid of
a principal bundle~\cite{Mac}:
\begin{prp} \label{prp:GGRPD}~
 Given a principal bundle $P$ over a manifold $M$ with structure
 group $G_0^{}$, whose bundle projection will be denoted by $\, \rho:
 P \longrightarrow M$, let
 \[
  G = (P \times P)/G_0^{}
 \]
 denote the orbit space of the cartesian product of~$P$ with itself
 under the diagonal action of~$G_0^{}$ (we shall write its elements as
 classes $[p_2^{},p_1^{}]$ of pairs $(p_2^{},p_1^{})$ in $P \times P$,
 where $\, [p_2^{} \cdot g_0^{},p_1^{} \cdot g_0^{}] = [p_2^{},p_1^{}]$).
 Then $G$ is a Lie groupoid over~$M$, called the\/ \textbf{gauge groupoid}
 of~$P$, with source projection $\, \sigma_G^{}: G \longrightarrow M$,
 target projection $\, \tau_G^{}: G \longrightarrow M$, multiplication
 map $\, \mu_G^{}: G \times_M G \longrightarrow G$, unit map $\, 1_G^{}:
 M \longrightarrow G \,$ and inversion $\iota_G^{}: G \longrightarrow G \,$
 defined as follows:
 \begin{itemize}
  \item for $[p_2^{},p_1^{}] \in G$,
        \[
         \sigma_G^{}([p_2^{},p_1^{}]) = \rho(p_1^{})~~,~~
         \tau_G^{}([p_2^{},p_1^{}]) = \rho(p_2^{}) \,;
        \]
  \item for $[p_2^{},p_1^{}],[p_3^{},p_2^{}] \in G$,
        \[
         [p_3^{},p_2^{}][p_2^{},p_1^{}] \equiv
         \mu_G^{}([p_3^{},p_2^{}],[p_2^{},p_1^{}]) = [p_3^{},p_1^{}] \,;
        \]
  \item for $x \in M$,
        \[
         (1_G^{})_x^{} = [p,p] \,,
        \]
        where $p$ is any element of $\rho^{-1}(x)$;
  \item for $[p_2^{},p_1^{}] \in G$,
        \[
         [p_2^{},p_1^{}]^{-1} \equiv
         \iota_G^{}([p_2^{},p_1^{}]) = [p_1^{},p_2^{}] \,.
        \]
        \vspace{-2ex}
 \end{itemize}
\end{prp}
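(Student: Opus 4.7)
The plan is to verify four things in order: (i) that the diagonal $G_0^{}$-action on $P \times P$ is free and proper, so that $G = (P \times P)/G_0^{}$ is naturally a smooth manifold and the orbit projection $\pi_0^{}: P \times P \longrightarrow G$ is a surjective submersion; (ii) that each of the structural maps $\sigma_G^{}$, $\tau_G^{}$, $\mu_G^{}$, $1_G^{}$ and $\iota_G^{}$ descends to a well-defined map on equivalence classes; (iii) that the groupoid axioms (associativity, units, inverses) are satisfied; and (iv) that all these maps are smooth and that $\sigma_G^{}$ (equivalently $\tau_G^{}$) is a submersion, so that $G$ is indeed a Lie groupoid over~$M$.

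For (i), the right $G_0^{}$-action on~$P$ is free and proper since $P$ is a principal $G_0^{}$-bundle, and both properties pass immediately to the diagonal action on $P \times P$; the standard quotient manifold theorem then endows $(P \times P)/G_0^{}$ with a unique smooth structure for which $\pi_0^{}$ is a surjective submersion. For (ii), the projections $\sigma_G^{}$ and $\tau_G^{}$ are well-defined because $\rho$ is $G_0^{}$-invariant, and the unit and inversion are visibly invariant under the diagonal action. The delicate point is the multiplication: given a composable pair $([p_3^{},p_2'^{}],[p_2^{},p_1^{}])$ with $\rho(p_2'^{}) = \rho(p_2^{})$, there exists a unique $g_0^{} \in G_0^{}$ with $p_2'^{} = p_2^{} \cdot g_0^{}$, and I would use it to renormalize the representatives on the right to $[p_2',p_1^{} \cdot g_0^{}]$ so that the two middle entries literally coincide, and then define the product as $[p_3^{},p_1^{} \cdot g_0^{}]$. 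A short computation shows that the ambiguity in this renormalization is absorbed by the $G_0^{}$-ambiguity in the original representatives, so the product is well-defined on equivalence classes.

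The groupoid axioms in step (iii) then reduce to direct manipulations once representatives have been normalized so the appropriate entries match: associativity comes from $[p_4^{},p_3^{}]([p_3^{},p_2^{}][p_2^{},p_1^{}]) = [p_4^{},p_1^{}] = ([p_4^{},p_3^{}][p_3^{},p_2^{}])[p_2^{},p_1^{}]$, the unit law from $[p,p][p,p_1^{}] = [p,p_1^{}] = [p_2^{},p_2^{}][p_2^{},p_1^{}]$, and the inverse law from $[p_1^{},p_2^{}][p_2^{},p_1^{}] = [p_1^{},p_1^{}] = (1_G^{})_{\rho(p_1^{})}^{}$.

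For (iv), smoothness of $\sigma_G^{}$ and $\tau_G^{}$ follows by descent through the submersion $\pi_0^{}$, since both agree, after precomposition with $\pi_0^{}$, with the smooth maps $\rho \smcirc \mathrm{pr}_i$; combining local sections of~$\rho$ with local sections of $\pi_0^{}$ then produces local sections of $\sigma_G^{}$, which establishes the submersion property. Smoothness of the unit and of the inversion is analogous. The hard part will be the smoothness of $\mu_G^{}$, because the construction above implicitly involves the ``division map'' $\delta: P \times_M^{} P \longrightarrow G_0^{}$ sending $(p_2^{},p_2'^{})$ to the unique $g_0^{} \in G_0^{}$ with $p_2'^{} = p_2^{} \cdot g_0^{}$. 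I would establish smoothness of $\delta$ by working in a local trivialization $\rho^{-1}(U) \cong U \times G_0^{}$, in which it becomes $((x,h),(x,h')) \longmapsto h^{-1} h'$ and is thus smooth; the resulting local formulas for~$\mu_G^{}$ then patch together globally, because the construction is manifestly independent of the chosen trivialization.
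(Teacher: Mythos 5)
Your proof is correct, but there is no in-paper argument to compare it against: the authors state this Proposition as a recollection of the standard definition of the gauge groupoid and simply cite Mackenzie's book for it, supplying no proof of their own. Your route is the standard one and it touches all the points that genuinely require care: freeness and properness of the diagonal action (inherited from the principal action on~$P$) to obtain the quotient manifold structure with $\pi_0^{}$ a surjective submersion; well-definedness of the multiplication, which is indeed the only structural map where representatives do not automatically match; and smoothness of the multiplication via the division map checked in local trivializations. It is worth pointing out that the division map $\delta$ you introduce in step~(iv) is exactly the ``difference map'' $\delta_P^{}: P \times_M P \longrightarrow G_0^{}$ that the authors themselves define immediately after this Proposition (in Section~3.1) in order to write down the action of~$G$ on~$P$, so your argument sits squarely within the toolkit the paper deploys anyway. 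One small point you implicitly rely on but could state: once $\sigma_G^{}$ and $\tau_G^{}$ are known to be submersions, the fibered product $G \times_M G$ is a smooth submanifold of $G \times G$, which is needed before one can even ask whether $\mu_G^{}$ is smooth; your ordering of the steps accommodates this, but it deserves an explicit sentence.
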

Observe that the gauge group bundle associated with $P$ employed in
Ref.~\cite{FS}, also known as the adjoint bundle $\, \mathrm{Ad} P
= P \times_{G_0} G_0^{}$ (where $G_0^{}$ acts on itself by
conjugation), is (up to a canonical isomorphism) just the isotropy
subgroupoid of~$G$, that is,
\begin{equation}
 P \times_{G_0} G_0^{} \,\cong\, G_{\mathrm{iso}} \,.
\end{equation}
This isomorphism can be constructed explicitly by noting that the map
\[
 \begin{array}{ccc}
  P \times G_0^{} & \longrightarrow &   P \times P    \\[1mm]
     (p,g_0^{})   &   \longmapsto   & (p,p \cdot g_0)
 \end{array}
\]
is equivariant under the right action of~$G_0^{}$ on both sides
(since it takes $(p \cdot g_0' \,, (g_0')^{-1} g_0^{} g_0')$ to 
$(p \cdot g_0' \,, p \cdot g_0^{} g_0')$) and hence factors
to the respective quotients to yield a map
\[
 \begin{array}{ccc}
  P \times_{G_0} G_0^{} & \longrightarrow & (P \times P)/G_0 \\[1mm]
        [p,g_0^{}]      &   \longmapsto   & [p,p \cdot g_0]
 \end{array}
\]
which is the desired isomorphism onto its image
\begin{equation}
 G_{\mathrm{iso}} \,
 = \, \{ [p_2^{},p_1^{}] \in G \,|\,
         \tau_G^{}([p_2^{},p_1^{}]) = \sigma_G^{}([p_2^{},p_1^{}]) \} \,
 = \, \{ [p_2^{},p_1^{}] \in G \,|\, \rho(p_2^{}) = \rho(p_1^{}) \} \,.
\end{equation}
Moreover, it is well known that the group of bisections of the gauge
groupoid~$\, G = (P \times P)/G_0^{}$ \linebreak is isomorphic to
the group of automorphisms of~$P$,
\begin{equation}
 \mathrm{Bis}(G) \cong \mathrm{Aut}(P) \,,
\end{equation}
while the group of sections of the gauge group bundle $\, G_{\mathrm{iso}}
\cong P \times_{G_0} G_0^{} \,$ is isomorphic to the group of strict
automorphisms of $P$,
\begin{equation}
 \Gamma(G_{\mathrm{iso}}) \cong \mathrm{Aut}_s(P) \,.
\end{equation}

Next, let us specify how the gauge groupoid of a principal bundle acts
naturally on the principal bundle itself and on any of its associated
bundles.
To this end, some authors find it convenient to introduce the
``difference map'' for~$P$, which is the smooth map
\[
 \delta_P: P \times_M P \longrightarrow G_0^{}
\]
defined implicitly by the condition that given any two points $p$ and $p'$
in the same fiber of~$P$, $\delta_P(p,p')$ is the unique element of~$G_0^{}$
that transforms $p$ into $p'$:
\[
 p \cdot \delta_P(p,p')~=~p' \,.  
\]
Note that, obviously, $\delta_P(p,p) = 1 \,$ and
\[
 \delta_P^{}(p \cdot g_0^{},p' \cdot g_0^{})~
 =~g_0^{-1} \, \delta_P^{}(p,p') \, g_0^{} \,.
\]
Here, we use this map to write down a natural action
\[
 \begin{array}{cccc}
  \Phi_P^{}:
  &      G \times_M P       & \longrightarrow &              P
  \\[1mm]
  & ([p_2^{},p_1^{}],p) &   \longmapsto   & [p_2^{},p_1^{}] \cdot p 
 \end{array}
\]
of the gauge groupoid $\, G = (P \times P)/G_0^{} \,$ on the principal
bundle $P$ itself, defined as follows: given $\, [p_2^{},p_1^{}] \in G \,$
and $p \in P$ such that $\, \rho(p_1^{}) = \sigma_G^{}([p_2^{},p_1^{}])
= \rho(p)$, put
\[
 [p_2^{},p_1^{}] \cdot p~=~p_2^{} \cdot \delta_P(p_1^{},p) \,.
\]
Note, however, that we can always adapt the second component in the
pair $(p_2,p_1)$ representing the class $[p_2,p_1]$ to be equal to~$p$,
which allows us to rewrite the previous two equations in the simplified form
\begin{equation} \label{eq:ACTGGPB1}
 \begin{array}{cccc}
  \Phi_P^{}:
  & G \times_M P & \longrightarrow &       P        \\[1mm]
  &  ([p',p],p)  &   \longmapsto   & [p',p] \cdot p 
 \end{array}
\end{equation}
where
\begin{equation} \label{eq:ACTGGPB2}
 [p',p] \cdot p~=~p' \,.
\end{equation}
In the sequel, when defining other actions of the gauge groupoid,
we shall already perform this kind of simplification right from the
start and without further notice, thus dispensing the need to deal
with the difference map $\delta_P$ altogether.
Of course, as the total space of a principal bundle, $P$ also carries
a right action of the structure group~$G_0^{}$, and remarkably,
these two actions commute, 
\begin{equation} \label{eq:ACTGGPB5}
 [p',p] \cdot (p \cdot g_0^{})~=~([p',p] \cdot p) \cdot g_0^{} \,,
\end{equation}
because both sides are equal to $\, [p' \cdot g_0^{},p \cdot g_0^{}]
\cdot (p \cdot g_0^{}) = p' \cdot g_0^{}$.
Thus using the natural projection of~$G$ to the pair groupoid
$M \times M$ of the base manifold~$M$, we get a commutative
diagram:
\begin{equation} \label{eq:ACTGGPB6}
 \begin{array}{c}
  \xymatrix{
   \qquad~ G \times_M P ~\qquad \ar[r] \ar[d] & ~P~ \ar[d] \\
   ~(M \times M) \times_M M~ \ar[r] & ~M~
  }
 \end{array}
\end{equation}
This procedure can be generalized as follows.
First, given any manifold~$Q$, we can introduce a natural action
\begin{equation} \label{eq:ACTGGPR1}
 \begin{array}{cccc}
  \Phi_{P \times Q}^{}:
  & G \times_M (P \times Q) & \longrightarrow &     P \times Q
  \\[1mm]
  &     ([p',p],(p,q))      &   \longmapsto   & [p',p] \cdot (p,q) 
 \end{array}
\end{equation}
of the gauge groupoid $\, G = (P \times P)/G_0^{} \,$ on the product
manifold $P \times Q$ (as a fiber bundle over~$M$), defined by letting
$G$ act as above on the first factor and trivially on the second factor,
\begin{equation} \label{eq:ACTGGPR2}
 [p',p] \cdot (p,q)~=~(p',q) \,.
\end{equation}
Now suppose we are also given a left action
\begin{equation} \label{eq:ACTTF1}
 \begin{array}{ccc}
  G_0^{} \times Q & \longrightarrow &       Q        \\[1mm]
     (g_0^{},q)   &   \longmapsto   & g_0^{} \cdot q
 \end{array}
\end{equation}
of $G_0^{}$ on the manifold~$Q$, which according to the standard
definition of the total space of an associated bundle is extended to
a ``diagonal'' right action
\begin{equation} \label{eq:ACTGGTB1}
 \begin{array}{ccc}
  G_0^{} \times (P \times Q) & \longrightarrow
  & P \times Q \\[1mm]
        (g_0^{},(p,q))       &   \longmapsto   
  & (p \cdot g_0^{},g_0^{-1} \cdot q)
 \end{array}
\end{equation}
of $G_0^{}$ on the product manifold~$P \times Q$, and once again,
these two actions commute,
\begin{equation} \label{eq:ACTGGTB2}
 [p',p] \cdot ((p,q) \cdot g_0^{})~=~([p',p] \cdot (p,q)) \cdot g_0^{} \,,
\end{equation}
because both sides are equal to $\, [p' \cdot g_0^{},p \cdot g_0^{}]
\cdot (p \cdot g_0^{},g_0^{-1} \cdot q) = (p' \cdot g_0^{},g_0^{-1}
\cdot q)$.
This implies that the action $\Phi_{P \times Q}^{}$ of~$G$ on~%
$P \times Q$ in equation~(\ref{eq:ACTGGPR1}) passes to the quotient
$P \times_{G_0} Q$, and so we get a natural induced action
\begin{equation} \label{eq:ACTGGAB1}
 \begin{array}{cccc}
  \Phi_{P \times_{G_0} Q}^{}:
  & G \times_M (P \times_{G_0} Q) & \longrightarrow & P \times_{G_0} Q
  \\[1mm]
  &        ([p',p],[p,q])         &   \longmapsto   & [p',p] \cdot [p,q] 
 \end{array}
\end{equation}
of the gauge groupoid $\, G = (P \times P)/G_0^{} \,$ on the associated
bundle $P \times_{G_0} Q$, defined by
\begin{equation} \label{eq:ACTGGAB2}
 [p',p] \cdot [p,q]~=~[p',q] \,.
\end{equation}
It will be convenient to visualize this construction in terms of the ``magical
square'' for associated bundles, i.e., the commutative diagram
\begin{equation} \label{eq:MSASSB1}
\begin{array}{c}
\xymatrix{
 ~P \times Q~ \ar[r]^-{\rho_Q^{}} \ar[d]_-{\mathrm{pr}_1} & 
 ~P \times_{G_0} Q~ \ar[d]^-{\,\pi} \\
 ~~~~\vphantom{\hat{P}} P~~~~ \ar[r]_-{\rho\vphantom{M}} &
 ~~~~~M \vphantom{\hat{M}}~~~~~
}
\end{array}
\end{equation}
in which the horizontal projections define principal $G_0^{}$-bundles
while the vertical projections provide fiber bundles with typical
fiber~$Q$ (the first of which is of course just the trivial bundle
over~$P$) such that $\rho_Q^{}$ is an isomorphism on each fiber
and, by definition, is $G$-equivariant.
And again, using the natural projection of~$G$ to the pair groupoid
$M \times M$ of the base manifold~$M$, we get a commutative
diagram:
\begin{equation} \label{eq:ACTGGAB3}
 \begin{array}{c}
  \xymatrix{
   ~G \times_M (P \times_{G_0} Q)~
   \ar[r] \ar[d] & ~P \times_{G_0} Q~ \ar[d] \\
   ~\,(M \times M) \times_M M\,~ \ar[r] & ~~~~~M~~~~~
  }
 \end{array}
\end{equation}
Of course, these actions extend the actions of the gauge group bundle
$P \times_{G_0} G_0^{}$ on the principal bundle~$P$ itself and on the
associated bundle $P \times_{G_0} Q$, respectively, considered in
Ref.~\cite{FS}.

As a first example of induced actions, consider those of the gauge groupoid
of a principal bundle on the vertical bundle of the principal bundle itself
and on the vertical bundle of any of its associated bundles, constructed
according to the prescription specified in equations~(\ref{eq:ACTLG1})--%
(\ref{eq:IAGLJB2}) above.
These actions can be simplified by making use of the fact that the vertical
bundle of a principal bundle is trivial and that the vertical bundle of an
associated bundle is again an associated bundle, i.e., we have canonical
isomorphisms
\begin{equation} \label{eq:VBPB1}
 VP~\cong~P \times \mathfrak{g}_0^{} \,,
\end{equation}
and
\begin{equation} \label{eq:VBAB1}
 V(P \times_{G_0} Q)~\cong~P \times_{G_0} TQ \,,
\end{equation}
both as fiber bundles over~$M$ and as vector bundles over the respective
total spaces~$P$ and $P \times_{G_0} Q$, where in the second case, the
action of~$G_0$ on the tangent bundle $TQ$ of~$Q$ is the one induced
from that on~$Q$.
Similarly, we also have canonical isomorphisms
\begin{equation} \label{eq:LJBPB1}
 \vec{J} P~\cong~L(\pi^*(TM),(P \times \mathfrak{g}_0^{}))~
 \cong~\pi^*(T^* M) \otimes (P \times \mathfrak{g}_0^{}) \,,
\end{equation}
and
\begin{equation} \label{eq:LJBAB1}
 \vec{J} (P \times_{G_0} Q)~\cong~L(\pi^*(TM),P \times_{G_0} TQ)~
 \cong~\pi^*(TM) \otimes (P \times_{G_0} TQ) \,,
\end{equation}
in the same sense.
The statement is then that these bundle isomorphisms are equivariant
under the action of the gauge groupoid $G$, in the first two cases, and
of the Lie groupoid $\, GL(TM) \times_M G$, in the last two cases.

For the proof, we need only consider the statements for the vertical
bundles, since the corresponding ones for the linearized jet bundles
follow directly from them by combining the corresponding actions
of the gauge groupoid with that of the linear frame groupoid $GL(TM)$
of the base manifold~$M$ on the cotangent bundle $T^* M$ of~$M$.
To this end, consider the fundamental vector fields $(X_0^{})_P^{}$
on~$P$ associated to the generators $X_0^{} \in \mathfrak{g}_0^{}$
through the right action of~$G_0^{}$ on~$P$, and for later use, also
the fundamental vector fields $(X_0^{})_Q^{}$ on~$Q$ associated to
the generators $X_0^{} \in \mathfrak{g}_0^{}$ through the left action
of~$G_0^{}$ on~$Q$, defined by
\begin{equation} \label{eq:FVFP}
 (X_0^{})_P^{}(p) \,
 = \, \frac{d}{dt} \bigl( p \cdot \exp(t X_0^{}) \bigr) \Big|_{t=0} \,,
\end{equation}
and by
\begin{equation} \label{eq:FVFQ}
 (X_0^{})_Q^{}(q) \, 
 = \, \frac{d}{dt} \bigl( \exp(- t X_0^{}) \cdot q \bigr) \Big|_{t=0} \,,
\end{equation}
respectively.%
\footnote{We recall that the correspondence in equation~(\ref{eq:FVFP})
establishes a canonical linear isomorphism between the Lie algebra
$\mathfrak{g}_0^{}$ and the vertical space $V_p^{} P$ of~$P$ at~$p$,
whereas the extra minus sign in equation~(\ref{eq:FVFQ}) is introduced
merely for convenience, so as to guarantee consistency of the formulas
when we switch between left and right actions.}
Then the isomorphism in equation~(\ref{eq:VBPB1}) is given by
the mapping that takes the pair $(p,X_0^{})$ to the vertical vector
$(X_0^{})_P^{}(p)$, and that this is equivariant follows immediately
from the following simple calculation:
\[
\begin{aligned} 
 {} [p',p] \cdot (X_0^{})(p)~
 &=~T_p^{} L_{[p',p]}^{} \Bigl(
    \frac{d}{dt} \bigl( p \cdot \exp(t X_0^{}) \bigr) \Big|_{t=0} \Bigr)~
   =~\frac{d}{dt} \bigl( [p',p]
    \cdot (p \cdot \exp(t X_0^{})) \bigr) \, \Big|_{t=0} \\
 &=~\frac{d}{dt} \bigl(
    [p' \cdot \exp(t X_0^{}),p \cdot \exp(t X_0^{})]
    \cdot (p \cdot \exp(t X_0^{})) \bigr) \, \Big|_{t=0} \\
 &=~\frac{d}{dt} \bigl( p' \cdot \exp(t X_0^{}) \bigr) \Big|_{t=0}~
  =~(X_0^{})(p') \,.
\end{aligned}
\]
Similarly, the isomorphism in equation~(\ref{eq:VBAB1}) is
given by the mapping (momentarily denoted by~$\phi$) that takes
$\, [p,\frac{d}{dt} q(t) \big|_{t=0}] \in (P \times_{G_0} TQ)_{[p,q]} \,$
to $\, \frac{d}{dt} [p,q(t)] \big|_{t=0} \in V_{[p,q]}(P \times_{G_0} Q)$,
and that this is equivariant follows immediately from the following simple
calculation:
\[
\begin{aligned} 
 {} [p',p] \cdot
 \phi \bigl( \bigl[ p , \frac{d}{dt} q(t) \Big|_{t=0} \bigr] \bigr)~
 &=~[p',p] \cdot \Bigl( \frac{d}{dt} \, [p,q(t)] \, \Big|_{t=0} \Bigr)~
  =~T_p^{} L_{[p',p]}^{} \Bigl(
    \frac{d}{dt} \, [p,q(t)] \, \Big|_{t=0} \Bigr) \\
 &=~\frac{d}{dt} \bigl( [p',p] \cdot [p,q(t)] \bigr) \, \Big|_{t=0}~
  =~\frac{d}{dt} \, [p',q(t)] \, \Big|_{t=0}~
  =~\phi \bigl( \bigl[ p' , \frac{d}{dt} q(t) \Big|_{t=0} \bigr] \bigr) \\
 &=~\phi \bigl( [p',p] \cdot
                \bigl[ p , \frac{d}{dt} q(t) \Big|_{t=0} \bigr] \bigr) \,.
\end{aligned}
\]

Similar simplifications occur for the other induced actions considered in the
previous section, and this will be discussed in the next two subsections.

\subsection{First order jet groupoids and induced actions}

To begin with, we apply the general procedure developed in Ref.~\cite{CFP}
of ``differentiating'' actions of Lie groupoids on fiber bundles to the
natural actions of the gauge groupoid $\, G = (P \times P)/G_0^{}$
on the principal bundle~$P$ itself and on any associated bundle
$P \times_{G_0} Q$ to obtain natural induced actions
\begin{equation} \label{eq:IAJGJBPB1}
 \begin{array}{cccc}
  \Phi_{JP}^{}:
  & JG \times_M JP & \longrightarrow &       JP
  \\[1mm]
  &   (u_{[p',p]},u_p^{})    &   \longmapsto   & u_{[p',p]} \cdot u_p^{} 
 \end{array}
\end{equation}
and
\begin{equation} \label{eq:IAJGJBAB1}
 \begin{array}{cccc}
  \Phi_{J(P \times_{G_0} Q)}^{}:
  & JG \times_M J(P \times_{G_0} Q) & \longrightarrow
  & J(P \times_{G_0} Q)
  \\[1mm]
  &         (u_{[p',p]},u_{[p,q]})         &   \longmapsto   
  & u_{[p',p]} \cdot u_{[p,q]} 
 \end{array}
\end{equation}
derived from the actions $\Phi_P^{}$ in equation~(\ref{eq:ACTGGPB1})
and $\Phi_{P \times_{G_0} Q}^{}$ in equation~(\ref{eq:ACTGGAB1})
by applying the general formula in equation~(\ref{eq:IAJ1GJ1B2}) of
the previous section.

A more profound understanding of the situation can be obtained by
extending the ``magical square'' for associated bundles in equation~%
(\ref{eq:MSASSB1}) to the corresponding jet bundles, considering the
commutative diagram 
\begin{equation} \label{eq:MSASSB2}
\begin{array}{c}
\xymatrix{
 ~J(P \times Q)~
 \ar[r]^-{J\rho_Q^{}} \ar[d]_-{\pi_{J(P \times Q)}^{}} & 
 ~J(P \times_{G_0} Q)~
 \ar[d]^-{\,\pi_{J(P \times_{G_0} Q)}^{\vphantom{Q}}} \\
 ~~~P \times Q~~~ \ar[r]^-{\rho_Q^{}} \ar[d]_-{\mathrm{pr}_1} & 
 ~~~P \times_{G_0} Q~~~ \ar[d]^-{\,\pi} \\
 \qquad \vphantom{\hat{P}} P \qquad \ar[r]_-{\rho\vphantom{M}} &
 \qquad~ M \vphantom{\hat{M}} ~\qquad
}
\end{array}
\end{equation}
and noting that, just like there is a natural action of~$G$ on~$P \times Q$
derived from that on~$P$ such that $\rho_Q^{}$ is an isomorphism on each
fiber and is $G$-equivariant, as discussed in the previous subsection, there is
also a natural action of~$JG$ on~$J(P \times Q)$ derived from that on~$JP$
such that $J\rho_Q^{}$, although no longer an isomorphism on each fiber
(it is still onto but has a kernel), is $JG$-equivariant.%
\footnote{Note that here, $J(P \times Q)$ is meant to be the jet bundle
of~$P \times Q$ as a bundle over~$M$, i.e., with respect to the projection
$\rho \,\smcirc\, \mathrm{pr}_1^{}$, whereas the previous statement that
$P \times Q$ is a trivial bundle refers to its structure as a bundle over~$P$,
i.e., to the projection $\mathrm{pr}_1^{}$.}

To prove these statements, let us pick points $p \in P$ and $q \in Q$ with
$\, \rho(p) = x \,$ and take tangent maps to the commutative diagram in equation~(\ref{eq:MSASSB1}) to obtain the commutative diagram
\begin{equation} \label{eq:MSASSB3}
 \begin{array}{c}
  \xymatrix{
   ~T_p^{} P \oplus T_q^{} Q~
   \ar[rr]^-{T_{(p,q)} \rho_Q^{}} \ar[d]_{\mathrm{pr}_1}
   && ~T_{[p,q]} (P \times_{G_0} Q)~ \ar[d]^-{\,T_{[p,q]} \pi} \\
   ~~~~~\, \vphantom{\hat{P}} T_p P \,~~~~~
   \ar[rr]_{T_p^{\vphantom{M}} \rho}
   && \qquad~\, T_x M \vphantom{\hat{M}} \,~\qquad
  }
 \end{array}
\end{equation}
Since $\rho_Q^{}$ is a submersion and hence its tangent maps
are surjective, this means that the tangent spaces $T_{[p,q]}%
(P \times_{G_0} Q)$ of the orbit space $P \times_{G_0} Q$ can
be realized as quotient spaces, namely, the linear maps
\begin{equation} \label{eq:TSASSB1}
 T_{(p,q)} \rho_Q^{}: T_p^{} P \oplus T_q^{} Q~~
 \longrightarrow~~T_{[p,q]}(P \times_{G_0} Q)
\end{equation}
induce isomorphisms
\begin{equation} \label{eq:TSASSB2}
 T_{[p,q]}(P \times_{G_0} Q)~
 \cong~(T_p^{} P \oplus T_q^{} Q) / \ker T_{(p,q)} \rho_Q^{} \,,
\end{equation}
and noting that
\begin{equation} \label{eq:JSPROD1}
 J_{(p,q)}(P \times Q)~=~J_p^{} P \oplus L(T_x^{} M,T_q^{} Q) \,,
\end{equation}
this leads to an analogous realization of the jet spaces $J_{[p,q]}%
(P \times_{G_0} Q)$ of the orbit space $P \times_{G_0} Q$ as
quotient spaces, namely, the affine maps
\begin{equation} \label{eq:JSASSB1}
 J_{(p,q)} \rho_Q^{}: J_p^{} P \oplus L(T_x^{} M,T_q^{} Q)~~
 \longrightarrow~~J_{[p,q]}(P \times_{G_0} Q)
\end{equation}
defined by
\begin{equation} \label{eq:JSASSB2}
 J_{(p,q)} \rho_Q^{} (u_p^{},u_q^{})~
 =~ T_{(p,q)} \rho_Q^{} \,\smcirc\, (u_p^{},u_q^{})
\end{equation}
induce isomorphisms
\begin{equation} \label{eq:JSASSB3}
 J_{[p,q]}(P \times_{G_0} Q)~
 \cong~(J_p^{} P \oplus L(T_x^{} M,T_q^{} Q)) \,/\,
       L(T_x^{} M,\ker T_{(p,q)} \rho_Q^{}) \,.
\end{equation}
Now using the $G$-equivariance of~$\rho_Q^{}$, which means
that $\, \Phi_{P \times_{G_0} Q}^{} \,\smcirc\, (\mathrm{id}_G^{}
\times_M^{} \rho_Q^{}) = \rho_Q^{} \,\smcirc\, \Phi_{P \times Q}^{}
\linebreak = \rho_Q^{} \,\smcirc\, (\Phi_P^{} \times \mathrm{id}_Q^{})$
(where in the last equality we have applied the identity $\, G \times_M
(P \times Q) \linebreak = (G \times_M P) \times Q$), we can prove the
$JG$-equivariance of~$J\rho_Q^{}$.
To this end, let us also pick a point $[p',p] \in G$, a jet $\, u_{[p',p]}
\in J_{[p',p]} G \,$ and another jet $u_p^{} \in J_p^{} P$ together
with a linear map $\, u_q^{} \in L(T_x^{} M,T_q^{} Q)$, and
calculate
\vspace{-1ex}
\[
\begin{aligned}
 &u_{[p',p]} \cdot J_{(p,q)} \rho_Q^{} (u_p^{},u_q^{}) \\[1ex]
 & \quad =~T_{([p',p],[p,q])} \Phi_{P \times_{G_0} Q} \,\smcirc\,
           \bigl( u_{[p',p]} \,,\, T_{(p,q)} \rho_Q^{} \,\smcirc\,
                  (u_p^{},u_q^{}) \bigr) \,\smcirc\,
           \pi_{JG}^{\mathrm{fr}}(u_{[p',p]})^{-1} \\[1ex]
 & \quad =~T_{([p',p],[p,q])} \Phi_{P \times_{G_0} Q} \,\smcirc\,
           T_{([p',p],(p,q))} (\mathrm{id}_G^{} \times_M^{} \rho_Q^{})
           \,\smcirc\, \bigl( u_{[p',p]} , (u_p^{},u_q^{}) \bigr)
           \,\smcirc\, \pi_{JG}^{\mathrm{fr}}(u_{[p',p]})^{-1} \\[1ex]
 & \quad =~T_{(p',q)} \rho_Q^{} \,\smcirc\,
           T_{([p',p],(p,q))} \Phi_{P \times Q}^{} \,\smcirc\,
           \bigl( u_{[p',p]} , (u_p^{},u_q^{}) \bigr) \,\smcirc\,
           \pi_{JG}^{\mathrm{fr}}(u_{[p',p]})^{-1} \\[1ex]
 & \quad =~J_{(p',q)} \rho_Q^{}
           \bigl( u_{[p',p]} \cdot (u_p^{},u_q^{}) \bigr)~
         =~J_{(p',q)} \rho_Q^{}
           \bigl( u_{[p',p]} \cdot u_p^{} \,,\, u_q^{} \bigr) \,.
\end{aligned}
\vspace{1ex}
\]
For later use, we also note that
\begin{equation} \label{eq:TSASSB3}
 \ker T_{(p,q)} \rho_Q^{}~
 =~\{ ((X_0^{})_P^{}(p),(X_0^{})_Q^{}(q)) \, | \,
      X_0^{} \in \mathfrak{g}_0^{} \}~
 \cong~\mathfrak{g}_0^{}~\cong~V_p P \,,               
\vspace{1mm}
\end{equation}
where $(X_0^{})_P^{}$ and $(X_0^{})_Q^{}$ denote the fundamental
vector fields on~$P$ and on~$Q$ associated to a generator $\, X_0^{} \in
\mathfrak{g}_0^{} \,$ via the pertinent actions of~$G_0^{}$, respectively,
as defined in equations~(\ref{eq:FVFP}) and~(\ref{eq:FVFQ}) above.
Moreover, under the projection $T_{(p,q)} \rho_Q$, the vertical spaces of
the principal bundle~$P$ and of the associated bundle $P \times_{G_0} Q$
are related by
\begin{equation} \label{eq:VSASSB1}
 V_{[p,q]}^{}(P \times_{G_0} Q)~
 \cong~(V_p^{} P \oplus T_q^{} Q)/\ker T_{(p,q)} \rho_Q^{} \,,
\end{equation}
while, with respect to any principal connection in~$P$ and its associated
connection in~$P \times_{G_0} Q$, the corresponding horizontal spaces of
the principal bundle~$P$ and of the associated bundle $P \times_{G_0} Q$
are related by
\begin{equation} \label{eq:HSASSB1}
 H_{[p,q]}^{}(P \times_{G_0} Q)~
 \cong~(H_p^{} P \oplus \{0\})/\ker T_{(p,q)} \rho_Q^{} \,.
\end{equation}
At the end of this subsection, we shall see how to express the
correspondence between principal connections in~$P$ and their
associated connections in~$P \times_{G_0} Q$ in terms of jets.

Another important property of the action of~$JG$ on~$JP$ in
equation~(\ref{eq:IAJGJBPB1}) is that it commutes with the
right action of the structure group~$G_0^{}$ on~$JP$: this
is essentially obvious because they are induced from an action
of~$G$ on~$P$ and a right action of~$G_0^{}$ on~$P$ which
commute.
But since this is an important fact, let us give a quick formal
proof of the pertinent formula,
\begin{equation}
 u_{[p',p]} \cdot (w_p^{} \cdot g_0^{})~
 =~(u_{[p',p]} \cdot w_p^{}) \cdot g_0^{} \,.
\end{equation}
Indeed, according to equations~(\ref{eq:IAJ1GJ1B2}) and~%
(\ref{eq:ACTGGPB5}) (the second of which can be reformulated
as stating that $\, \Phi_P^{} \,\smcirc\, (\mathrm{id}_G \times
R_{g_0^{}}) = R_{g_0^{}} \,\smcirc\, \Phi_P^{}$, where
$R_{g_0^{}}$ denotes right translation by~$g_0^{}$ in~$P$),
\[
\begin{aligned}
 u_{[p',p]} \cdot (w_p^{} \cdot g_0^{})~
 &=~T_{([p',p],p \cdot g_0)} \Phi_P^{} \,\smcirc\,
   (u_{[p',p]},T_p^{} R_{g_0^{}} \,\smcirc\, w_p^{}) \,\smcirc\,
   \pi_{JG}^{\mathrm{fr}}(u_{[p',p]})^{-1} \\[1ex]
 &=~T_{([p',p],p \cdot g_0)} \Phi_P^{} \,\smcirc\,
    T_{([p',p],p)} (\mathrm{id}_G \times_M^{} R_{g_0^{}})
    \,\smcirc\, (u_{[p',p]},w_p^{}) \,\smcirc\,
    \pi_{JG}^{\mathrm{fr}}(u_{[p',p]})^{-1} \\[1ex]
 &=~T_p^{} R_{g_0^{}} \,\smcirc\, T_{([p',p],p)} \Phi_P^{}
   \,\smcirc\, (u_{[p',p]},w_p^{}) \,\smcirc\,
   \pi_{JG}^{\mathrm{fr}}(u_{[p',p]})^{-1} \\[1ex]
 &=~(u_{[p',p]} \cdot w_p^{}) \cdot g_0^{} \,.
\end{aligned}
\]
This implies that the action $\Phi_{JP}^{}$ of~$JG$ on~$JP$ in
equation~(\ref{eq:IAJGJBPB1}) passes to the quotient
\begin{equation} \label{eq:DEFCB}
 CP~=~JP/G_0^{} \,,
\end{equation}
which is an affine bundle over~$M$ called the \emph{connection
bundle} of~$P$ because its sections correspond precisely to the
$G_0^{}$-equivariant sections of~$JP$ (as an affine bundle
over~$P$), which are exactly the principal connections on~$P$.
Thus we get a natural induced action
\begin{equation} \label{eq:IAJGCB1}
 \begin{array}{cccc}
  \Phi_{CP}^{}:
  &     JG \times_M CP    & \longrightarrow &            CP
  \\[1mm]
  & (u_{[p',p]},[w_p^{}]) &   \longmapsto   & u_{[p',p]} \cdot [w_p^{}]
 \end{array}
\end{equation}
of~$JG$ on~$CP$.
It will be convenient to visualize this construction in terms of the
``magical square'' for connection bundles, i.e., the commutative
diagram
\begin{equation} \label{eq:MSCONB1}
\begin{array}{c}
\xymatrix{
 ~JP_{\vphantom{p}}~ \ar[r]^-{\rho_C^{}} \ar[d]_-{\pi_{JP}^{}} &
 ~CP_{\vphantom{p}}~ \ar[d]^-{\,\pi_{CP}^{}} \\
 ~~\vphantom{\hat{P}} P~~ \ar[r]_-{\rho} &
 ~\,M \vphantom{\hat{M}}\,~
}
\end{array}
\end{equation}
in which the horizontal projections define principal $G_0^{}$-bundles
while the vertical projections provide affine bundles such that $\rho_C^{}$
is an isomorphism on each fiber and, by definition, is $JG$-equivariant.

Now we can formulate the rule that to each principal connection in~$P$
assigns its associated connection in $P \times_{G_0} Q$ in terms of a
canonical bundle map over~$P \times_{G_0} Q$, namely:
\begin{equation} \label{eq:ASSCON1}
 \begin{array}{ccc}
  \pi^*(CP) & \longrightarrow &       J(P \times_{G_0} Q)      \\[1mm]
  ([p,q],[w_p^{}])  &   \longmapsto   & J_{(p,q)} \rho_Q^{} (w_p^{},0)
 \end{array}
\end{equation}
To see that it is well defined, we have to check that, given
any point $x \in M$, the result remains unchanged if we pick
any $g_0^{} \in G_0^{}$ to replace the representative
$\, (p,q) \in (P \times Q)_x^{} \,$ of \linebreak $[p,q] \in
(P \times_{G_0} Q)_x^{} \,$ by another representative
$\, (p \cdot g_0^{}, g_0^{-1} \cdot q) \,$ and the representative
$\, w_p^{} \in J_p^{} P \,$ of $\, [w_p] \in C_x^{} P \,$ by another
representative $\, w_{p \cdot g_0}^{} \,$: writing $R_{g_0^{}}^P$
for right translation by $g_0^{}$ in~$P$ and $L_{g_0^{-1}}^Q$ for
left translation by $g_0^{-1}$ in~$Q$, we have $\, w_{p \cdot g_0}^{}
= T_p^{} R_{g_0^{}}^P \,\smcirc\, w_p^{} \,$ and get
\vspace{-1ex}
\[
\begin{aligned}
 J_{(p \cdot g_0^{},g_0^{-1} \cdot q)} \rho_Q^{}
 (w_{p \cdot g_0}^{},0)~
 &=~T_{(p \cdot g_0^{},g_0^{-1} \cdot q)} \rho_Q^{} \,\smcirc\,
   \bigl( T_p^{} R_{g_0^{}}^P \,\smcirc\, w_p^{} \,,\, 0 \bigr) \\
 &=~T_{(p \cdot g_0^{},g_0^{-1} \cdot q)} \rho_Q^{} \,\smcirc\,
   T_{(p,q)} \bigl( R_{g_0^{}}^P \times L_{g_0^{-1}}^Q \bigr)
   \,\smcirc\, (w_p^{},0) \\
 &=~T_{(p,q)} \rho_Q^{} \,\smcirc\, (w_p^{},0)~
  =~J_{(p,q)} \rho_Q^{} (w_p^{},0) \,.
\end{aligned}
\]
Moreover, this bundle map is also $JG$-equivariant: this follows trivially
from the definition of the action of~$JG$ on the spaces involved and the
$JG$-equivariance of $J\rho_Q^{}$ that was proved above.
And finally, we observe that this bundle map does capture the essence
of passing from a principal connection to its associated connection, since
if the former is given by a section $\, \varGamma^P: M \longrightarrow CP \,$
and the latter by a section $\, \varGamma^{P \times_{G_0} Q}:
P \times_{G_0} Q \longrightarrow J(P \times_{G_0} Q)$, then
$\, \varGamma^{P \times_{G_0} Q}$ is simply the push-forward
of the section $\, \varGamma^P \smcirc\, \pi: P \times_{G_0} Q
\longrightarrow \pi^*(CP) \,$ with this bundle map.
Note also that the prescription corresponds precisely to that given
in equation~(\ref{eq:HSASSB1}) at the level of horizontal bundles.

\subsection{Second order jet groupoids and induced actions}

In this subsection, we apply the general procedure developed in
Ref.~\cite{CFP} of ``differentiating'' actions of Lie groupoids on
fiber bundles once more, namely, to the natural actions of the jet
groupoid $JG$ of the gauge groupoid $\, G = (P \times P)/G_0^{}$
on the jet bundle~$JP$ and the connection bundle~$CP$ of the
principal bundle~$P$ itself, to obtain natural induced actions\,%
\footnote{In this subsection, we often write $g=[p',p]$ for points
in the gauge groupoid $G = (P \times P)/G_0^{}$.}
\begin{equation} \label{eq:IAJJGJJBPB1}
 \begin{array}{cccc}
  \Phi_{J(JP)}:
  & J(JG) \times_M J(JP) & \longrightarrow &            J(JP)
  \\[1mm]
  & (u'_{u_g},u'_{u_p})  &   \longmapsto   & u'_{u_g} \!\cdot\, u'_{u_p}
 \end{array}
\end{equation}
and
\begin{equation} \label{eq:IAJJGJBCB1}
 \begin{array}{cccc}
  \Phi_{J(CP)}:
  &     J(JG) \times_M J(CP)      & \longrightarrow &
  J(CP) \\[1mm]
  & (u_{u_g}^{\prime},u_{[w_p]}^{}) &   \longmapsto   &
  u_{u_g}^{\prime} \!\cdot\, u_{[w_p]}^{}
 \end{array}
\end{equation}
derived from the actions $\Phi_{JP}^{}$ in equation~(\ref{eq:IAJGJBPB1})
and $\Phi_{CP}^{}$ in equation~(\ref{eq:IAJGCB1}) by applying the general
formula in equation~(\ref{eq:IAJ1GJ1B2}) of the previous section.
Explicitly, we have
\begin{equation} \label{eq:IAJJGJJBPB2}
 u'_{u_g} \!\cdot\, u'_{u_p}~
 =~T_{(u_g,u_p)} \Phi_{JP}^{} \,\smcirc\, (u'_{u_g},u'_{u_p})
   \,\smcirc\, \pi_{J(JG)}^{\mathrm{fr}}(u'_{u_g})^{-1} \,,
\end{equation}
and
\begin{equation} \label{eq:IAJJGJBCB2}
 u'_{u_g} \!\cdot\, u'_{[w_p]}~
 =~T_{(u_g,[w_p])} \Phi_{CP}^{} \,\smcirc\, (u'_{u_g},u'_{[w_p]})
   \,\smcirc\, \pi_{J(JG)}^{\mathrm{fr}}(u'_{u_g})^{-1} \,,
\end{equation}
respectively.
These actions admit restrictions to several subgroupoids and subbundles,
among which the following will become important to us at some point or
another: the natural induced actions
\begin{equation} \label{eq:IASJGSJBPB1}
 \begin{array}{cccc}
  \Phi_{\bar{J}^{\>\!2} P}:
  & \bar{J}^{\>\!2} G \times_M \bar{J}^{\>\!2} P
  & \longrightarrow & \bar{J}^{\>\!2} P \\[1mm]
  & (u'_{u_g},u'_{u_p})  &   \longmapsto   & u'_{u_g} \!\cdot\, u'_{u_p}
 \end{array}
\end{equation}
of the semiholonomous second order jet groupoid~$\bar{J}^{\>\!2} G$
of~$G$ and
\begin{equation} \label{eq:IAJ2GSJBPB1}
 \begin{array}{cccc}
  \Phi_{\bar{J}^{\>\!2} P}:
  & J^{\>\!2} G \times_M \bar{J}^{\>\!2} P
  & \longrightarrow & \bar{J}^{\>\!2} P \\[1mm]
  & (u'_{u_g},u'_{u_p})  &   \longmapsto   & u'_{u_g} \!\cdot\, u'_{u_p} 
 \end{array}
\end{equation}
of the second order jet groupoid~$J^{\>\!2} G$ of~$G$ on the
semiholonomous second order jet bundle $\bar{J}^{\>\!2} P$ of~$P$,
as well as the action
\begin{equation} \label{eq:IAJ2GJ2BPB1}
 \begin{array}{cccc}
  \Phi_{J^{\>\!2} P}:
  & J^{\>\!2} G \times_M J^{\>\!2} P
  & \longrightarrow & J^{\>\!2} P \\[1mm]
  & (u'_{u_g},u'_{u_p})  &   \longmapsto   & u'_{u_g} \!\cdot\, u'_{u_p}
 \end{array}
\end{equation}
of the second order jet groupoid~$J^{\>\!2} G$ of~$G$ on the
second order jet bundle $J^{\>\!2} P$ of~$P$, all defined by the
same formula,
\begin{equation} \label{eq:IASJGSJBPB2}
 u'_{u_g} \!\cdot\, u'_{u_p}~
 =~T_{(u_g,u_p)} \Phi_{JP}^{} \,\smcirc\, (u'_{u_g},u'_{u_p})
   \,\smcirc\, \pi_{JG}^{\mathrm{fr}}(u_g^{})^{-1} \,,
\end{equation}
and similarly, the natural induced actions
\begin{equation} \label{eq:IASJGJBCB1}
 \begin{array}{cccc}
  \Phi_{J(CP)}:
  & \bar{J}^{\>\!2} G \times_M J(CP) & \longrightarrow &
  J(CP) \\[1mm]
  & (u_{u_g}^{\prime},u_{[w_p]}^{})  &   \longmapsto   &
  u_{u_g}^{\prime} \!\cdot\, u_{[w_p]}^{}
 \end{array}
\end{equation}
of the semiholonomous second order jet groupoid $\bar{J}^{\>\!2} G$
of~$G$ and
\begin{equation} \label{eq:IAJ2GJBCB1}
 \begin{array}{cccc}
  \Phi_{J(CP)}:
  &  J^{\>\!2} G \times_M J(CP)   & \longrightarrow &
  J(CP) \\[1mm]
  & (u_{u_g}^{\prime},u_{[w_p]}^{}) &   \longmapsto   &
  u_{u_g}^{\prime} \!\cdot\, u_{[w_p]}^{}
 \end{array}
\end{equation}
of the second order jet groupoid~$J^{\>\!2} G$ of~$G$ on the jet
bundle $J(CP)$ of the connection bundle~$CP$ of~$P$, defined by
\begin{equation} \label{eq:IASJGJBCB2}
 u'_{u_g} \cdot u_{[w_p]}^{}~
 =~T_{(u_g,[w_p])} \Phi_{CP}^{} \,\smcirc\, (u'_{u_g},u_{[w_p]}^{})
   \,\smcirc\, \pi_{JG}^{\mathrm{fr}}(u_g^{})^{-1} \,.
\end{equation}
As noted in the discussion preceding Proposition~\ref{prp:EQUIV2} in
the previous section, the simplification in the last term on the rhs of
equations~(\ref{eq:IASJGSJBPB2}) and~(\ref{eq:IASJGJBCB2}), as
compared to equations~(\ref{eq:IAJJGJJBPB2}) and~(\ref{eq:IAJJGJBCB2}),
comes from the assumption that $u'_{u_g}$ is semiholonomous, and the
definition of the actions in equations~(\ref{eq:IASJGSJBPB1}) and~%
(\ref{eq:IAJ2GJ2BPB1}) relies on the fact that when $u'_{u_g}$ and
$u'_{u_p}$ are both semiholonomous or both holonomous, then so is
$u'_{u_g} \!\cdot\, u'_{u_p}$.

A more profound understanding of the situation can be obtained by
extending the ``magical square'' for connection bundles in equation~%
(\ref{eq:MSCONB1}) to the corresponding jet bundles, considering the
commutative diagram 
\begin{equation} \label{eq:MSCONB2}
 \begin{array}{c}
  \xymatrix{
   ~J(JP)_{\vphantom{p}}~
   \ar[r]^-{J\rho_C^{}} \ar[d]_-{\pi_{J(JP)}^{}} &
   ~J(CP)_{\vphantom{p}}~
   \ar[d]^-{\,\pi_{J(CP)}^{}} \\
   ~~~JP\vphantom{\hat{P}_p}~~~
   \ar[r]^-{\rho_C^{}} \ar[d]_-{\pi_{JP}^{}} &
   ~~~\,CP\vphantom{\hat{P}_p}\,~~~
   \ar[d]^-{\,\pi_{CP}^{}} \\
   ~~~~\vphantom{\hat{P}} P~~~~
   \ar[r]^-{\rho} &
   ~~~~M \vphantom{\hat{M}}~~~~
  }
 \end{array}
\end{equation}
and noting that $J\rho_C^{}$, although no longer an isomorphism on
each fiber (it is still onto but has a kernel), is $J(JG)$-equivariant.
Even more importantly, by restricting to the semiholonomous second
order jet bundle of~$P$, we arrive at a ``magical square'' for jet
bundles of connection bundles, i.e., the commutative diagram
\begin{equation} \label{eq:MSJCNB1}
 \begin{array}{c}
  \xymatrix{
   ~\bar{J}^{\>\!2} P_{\vphantom{p}}~
   \ar[r]^-{J\rho_C^{}} \ar[d]_-{\pi_{\bar{J}^{\>\!2} P}} &
   ~J(CP)_{\vphantom{p}}~
   \ar[d]^-{\,\pi_{J(CP)}^{\vphantom{2}}} \\
   ~\;JP\vphantom{\hat{P}_p}\;~
   \ar[r]^-{\rho_C^{}} \ar[d]_-{\pi_{JP}^{}} &
   ~~~CP\vphantom{\hat{P}_p}~~~
   \ar[d]^-{\,\pi_{CP}^{}} \\
   ~~\;\vphantom{\hat{P}} P\;~~
   \ar[r]^-{\rho} &
   ~~~~M \vphantom{\hat{M}}~~~~
  }
 \end{array}
\end{equation}
in which all three horizontal projections define principal $G_0^{}$-bundles
while the vertical projections provide affine bundles such that $\rho_C^{}$
and $J\rho_C^{}$ are both isomorphisms on each fiber, $\rho_C^{}$ is
$JG$-equivariant and $J\rho_C^{}$ is $\bar{J}^{\>\!2} G$-equivariant.

To prove these statements, let us pick a point $p \in P$ with $\, \rho(p)
= x \,$ and a jet $w_p^{} \in J_p^{} P$ and take tangent maps to the
commutative diagram in equation~(\ref{eq:MSCONB1}) to obtain the
commutative diagram
\begin{equation} \label{eq:MSCONB3}
 \begin{array}{c}
  \xymatrix{
   ~T_{w_p}^{}(JP)~
   \ar[rr]^-{T_{w_p}^{} \rho_C^{}} \ar[d]_-{T_{w_p}^{} \pi_{JP}^{}}
   && ~T_{[w_p]}(CP)~
   \ar[d]^-{\,T_{[w_p]} \pi_{CP}^{}} \\
   \quad~ T_p^{\vphantom{P}} P ~\quad
   \ar[rr]_{T_p^{\vphantom{M}} \rho}
   && \quad~ T_x^{\vphantom{M}} M ~\quad
  }
 \end{array}
\end{equation}
Since $\rho_C^{}$ is a submersion and hence its tangent maps are
surjective, this means that the tangent spaces $T_{[w_p]}(CP)$
of the orbit space $CP$ can be realized as quotient spaces, namely,
the linear maps
\begin{equation} \label{eq:TSCONB1}
 T_{w_p}{} \rho_C^{}: T_{w_p}^{}(JP)~~
 \longrightarrow~~T_{[w_p]}(CP)
\end{equation}
induce isomorphisms
\begin{equation} \label{eq:TSCONB2}
 T_{[w_p]}^{}(CP)~
 \cong~T_{w_p}^{}(JP) / \ker T_{w_p}{} \rho_C^{} \,,
\end{equation}
and this leads to an analogous realization of the jet spaces $J_{[w_p]}(CP)$
of the orbit space $CP$ as quotient spaces, namely, the affine maps
\begin{equation} \label{eq:JSCONB1}
 J_{w_p}{} \rho_C^{}: J_{w_p}^{}(JP)~~
 \longrightarrow~~J_{[w_p]}(CP)
\end{equation}
defined by
\begin{equation} \label{eq:JSCONB2}
 J_{w_p} \rho_C^{} (u'_{w_p})~
 =~ T_{w_p}^{} \rho_C^{} \,\smcirc\, u'_{w_p}
\end{equation}
induce isomorphisms
\begin{equation} \label{eq:JSCONB3}
 J_{[w_p]}(CP)~
 \cong~J_{w_p}^{}(JP) \,/\, L(T_x^{} M,\ker T_{w_p}^{} \rho_C^{}) \,.
\end{equation}
Now using the $JG$-equivariance of~$\rho_C^{}$, which means
that $\, \Phi_{CP}^{} \,\smcirc\, (\mathrm{id}_{JG}^{} \times_M^{}
\rho_C^{}) = \rho_C^{} \,\smcirc\, \Phi_{JP}^{}$, we can prove the
$J(JG)$-equivariance of~$J\rho_C^{}$.
To this end, let us also pick a point $g = [p',p] \in G$ and a jet $u_g^{}
\in J_g^{} G$, together with iterated jets $u'_{u_g} \in J_{u_g}^{}(JG)$
and $u'_{w_p} \in J_{w_p}^{}(JP)$, and calculate
\vspace{-1ex}
\[
\begin{aligned}
 &u'_{u_g} \cdot J_{w_p}^{} \rho_C^{} (u'_{w_p}) \\[1ex]
 & \quad =~T_{(u_g,[w_p])} \Phi_{CP}^{} \,\smcirc\,
           \bigl( u'_{u_g} \,,\, T_{w_p}^{} \rho_C^{} \,\smcirc\,
                  u'_{w_p} \bigr) \,\smcirc\,
           \pi_{J(JG)}^{\mathrm{fr}}(u'_{u_g})^{-1} \\[1ex]
 & \quad =~T_{(u_g,[w_p])} \Phi_{CP}^{} \,\smcirc\,
           T_{(u_g,w_p)} (\mathrm{id}_{JG}^{} \times_M^{} \rho_C^{})
           \,\smcirc\, \bigl( u'_{u_g} , u'_{w_p} \bigr) \,\smcirc\,
           \pi_{J(JG)}^{\mathrm{fr}}(u'_{u_g})^{-1} \\[1ex]
 & \quad =~T_{u_g \cdot w_p}^{} \rho_C^{} \,\smcirc\,
           T_{(u_g,w_p)} \Phi_{JP}^{} \,\smcirc\,
           \bigl( u'_{u_g} , u'_{w_p} \bigr) \,\smcirc\,
           \pi_{J(JG)}^{\mathrm{fr}}(u'_{u_g})^{-1} \\[1ex]
 & \quad =~J_{u_g \cdot w_p}^{} \rho_C^{}
           \bigl( u'_{u_g} \cdot u'_{u_p} \bigr) \,.
\end{aligned}
\vspace{1ex}
\]
But here we can actually do better if we replace iterated jets by
semiholonomous second order jets because that will eliminate the need
of passing to a quotient and convert the commutative diagram in
equation~(\ref{eq:MSCONB2}) to the one in equation~(\ref{eq:MSJCNB1}).
To show this, we first note that, as before,
\begin{equation} \label{eq:TSCONB3}
 \ker T_{w_p}^{} \rho_C^{}~
 =~\{ (X_0^{})_{JP}^{}(w_p) \, | \, X_0^{} \in \mathfrak{g}_0^{} \}~
 \cong~\mathfrak{g}_0^{}~\cong~V_p^{} P \,,               
\end{equation}
where $(X_0^{})_{JP}^{}$ denotes the fundamental vector field on~$JP$
associated to a generator $\, X_0^{} \in \mathfrak{g}_0^{} \,$ via the
pertinent action of~$G_0^{}$, defined by the appropriate analogue of
equation~(\ref{eq:FVFP}) above.
Here, we shall need a more explicit form of this isomorphism between
the spaces $\, \ker T_{w_p}^{} \rho_C^{} \,$ and $V_p P$: it is simply
the restriction
\begin{equation} \label{eq:TSCONB4}
 T_{w_p}^{} \pi_{JP}^{}: \ker T_{w_p}^{} \rho_C^{}~~
 \stackrel{\cong}{\longrightarrow}~~V_p^{} P
\end{equation}
of the linear map
\begin{equation} \label{eq:TSCONB5}
 T_{w_p}^{} \pi_{JP}^{}: T_{w_p}^{}(JP)~~
 \longrightarrow~~T_p^{} P
\end{equation}
that appears in the definition of semiholonomous second order jets.
(Indeed, the right action of~$G_0^{}$ on~$JP$ being induced from
that on~$P$, the tangent map $T_{w_p}^{} \pi_{JP}^{}$ will of
course take any fundamental vector field $(X_0^{})_{JP}^{}$
at~$w_p^{}$ to the corresponding fundamental vector field
$(X_0^{})_P^{}$ at~$p$.)
This in turn implies that the restriction of the (affine) map in equation~%
(\ref{eq:JSCONB1}) to the (affine) subspace $\bar{J}_{w_p}^{\>\!2} P$
of the (affine) space $J_{w_p}^{}(JP)$ will establish an isomorphism
\begin{equation} \label{eq:JSCONB4}
 J_{w_p}{} \rho_C^{}: \bar{J}_{w_p}^{\>\!2} P~~
 \stackrel{\cong}{\longrightarrow}~~J_{[w_p]}(CP)
\end{equation}
so we can replace equation~(\ref{eq:JSCONB3}) by the much simpler
equation
\begin{equation} \label{eq:JSCONB5}
 J_{[w_p]}(CP)~\cong~\bar{J}_{w_p}^{\>\!2} P \,.
\end{equation}
To prove this statement, we have to show that the affine map
in equation~(\ref{eq:JSCONB1}), when restricted to the affine
subspace $\bar{J}_{w_p}^{\>\!2} P$, (a) becomes injective
and (b) remains surjective.
For~(a), assume we are given two semiholonomous second
order jets $\, u_{w_p}^{\prime\,1},u_{w_p}^{\prime\,2}
\in \bar{J}_{w_p}^{\>\!2} P \,$ which under $J_{w_p}^{}
\rho_C^{}$ have the same image; then their difference
is a linear map from $T_x^{} M$ to~$T_{w_p}^{}(JP)$
satisfying two conditions, namely that its composition with
$T_{w_p}^{} \rho_C^{}$ is zero, so it takes value in
$\ker T_{w_p}^{} \rho_C^{}$, and that its composition
with $T_{w_p}^{} \pi_{JP}^{}$ is also zero, since
$u_{w_p}^{\prime\,1}$ and $u_{w_p}^{\prime\,2}$
are both semiholonomous.
But this implies that it must itself be zero since according
to equation~(\ref{eq:TSCONB4}), $T_{w_p}^{} \pi_{JP}^{}$ is
injective on~$\ker T_{w_p}^{} \rho_C^{}$.
For~(b), assume we are given a general iterated jet $\, u'_{w_p}
\in J_{w_p}^{}(JP) \,$ and consider the difference $\, T_{w_p}^{}
\pi_{JP}^{} \,\smcirc\, u'_{w_p} - w_p^{}$, which is a linear
map from $T_x^{} M$ to $V_p^{} P$, so that according to
equation~(\ref{eq:TSCONB4}), there is a unique linear map
$\vec{u}_{w_p}^{\,\prime}$ from $T_x^{} M$ to $\, \ker
T_{w_p}^{} \rho_C^{} \subset T_{w_p}^{}(JP) \,$ satisfying
$\, T_{w_p}^{} \pi_{JP}^{} \,\smcirc\, u'_{w_p} - w_p^{} =
T_{w_p}^{} \pi_{JP}^{} \,\smcirc\, \vec{u}_{w_p}^{\,\prime}$.
But this implies that the difference $\, \bar{u}'_{w_p} = u'_{w_p}
- \, \vec{u}_{w_p}^{\,\prime} \,$ is a semiholonomous second
order jet, $\bar{u}'_{w_p} \in \bar{J}_{w_p}^{\>\!2} P$,
which under $J_{w_p}^{} \rho_C^{}$ has the same image
as the original iterated jet $\, u'_{w_p} \in J_{w_p}^{}(JP)$.

\section{Minimal coupling and Utiyama's theorem II}

In the context of the formalism adopted in the previous section,
the minimal coupling prescription and the curvature map can be
viewed as stemming from bundle maps
\begin{equation} \label{eq:COVDER2}
 D: CP \times_M J(P \times_{G_0} Q)~~\longrightarrow~~
 \vec{J}(P \times_{G_0} Q) \,,
\end{equation}
and
\begin{equation} \label{eq:CURV2}
 F: J(CP)~~\longrightarrow~~
 \bwedge^{\!2\,} T^* M \otimes (P \times_{G_0} \mathfrak{g}_0^{}) \,,
\end{equation}
over~$M$, which have already appeared in Ref.~\cite{FS} (see the
diagrams in equations~(52) and~(57) there).
What we want to show here is that, and in precisely what sense, these
bundle maps are equivariant under the action not only of the pertinent
Lie group bundles but also of the pertinent Lie groupoids.
To this end, it turns out to be convenient to ``lift'' all bundles to the
space appearing in the upper left hand corner of the appropriate
``magical square'', that is, the space $P \times Q$ in the first case
(see equation~(\ref{eq:MSASSB1})) and the space $JP$ in the second
case (see equation~(\ref{eq:MSCONB1})), where these bundle maps
take a much simpler form.

\subsection{Minimal coupling}

To deal with the minimal coupling prescription, we observe that
the bundle map $D$ in equation~(\ref{eq:COVDER2}) fits into the 
following commutative diagram
\begin{equation} \label{eq:COVDER3}
 \begin{array}{c}
  \xymatrix{
   ~(JP \times Q) \times_{P \times Q}^{} J(P \times Q)~
   \ar[r]^-{D}
   \ar[d]_-{(\rho_C^{} \ssmcirc \mathrm{pr}_1^{},J\rho_Q^{})\,} &
   ~~\,\vec{J}(P \times Q)\,~~ \ar[d]^-{\,\vec{J}\rho_Q^{}} \\
   \quad~~ CP \times_M J(P \times_{G_0} Q) ~~\quad
   \ar[r]_-{D^{\vphantom{M}}} &
   ~\vec{J}(P \times_{G_0} Q)~
  }
 \end{array}
\end{equation}
where the bundles in the top row are over $P \times Q$ while those
in the bottom row are over~$M$.
(Here, we have identified the pull-back of~$JP$ by the projection
from~$P \times Q$ to~$P$ with the cartesian product $JP \times Q$.)
In fact, it is convenient to expand this to a commutative diagram
\begin{equation} \label{eq:COVDER4}
 \begin{array}{c}
  \xymatrix{
   ~~\; (JP \times Q) \times_{P \times Q}^{} J(P \times Q) \;~~
   \ar[r]^-{D} \ar[d] &
   ~~\,\vec{J}(P \times Q)\,~~ \ar[d] \\
   ~\pi^*(CP) \times_{P \times_{G_0} Q}^{} J(P \times_{G_0} Q)~
   \ar[r]^-{D} \ar[d] &
   ~\vec{J}(P \times_{G_0} Q)~ \ar[d] \\
   \quad~~~\; CP \times_M J(P \times_{G_0} Q) \;~~~\quad
   \ar[r]^-{D} &
   ~\vec{J}(P \times_{G_0} Q)~
  }
 \end{array}
\end{equation}
where the bundles in the middle row are over the quotient space
$P \times_{G_0} Q$, i.e., the total space of the corresponding
associated bundle.
(Here, we omit the labels on the vertical maps, which are either
the same as in the previous diagram or else are obvious.)
Then the bundle map~$D$ in the middle row is the composition of
the difference map already introduced at the beginning of this paper
(see equation~(\ref{eq:DIFMAP1})) and the canonical bundle map
of equation~(\ref{eq:ASSCON1}) in the first factor, \linebreak
up to a sign that can be taken care~of by switching the two factors.
Continuing to use the same notation as in Section~3.1, we see that
this corresponds to the bundle map~$D$ in the bottom row being given
in terms of that in the top row according to
\begin{equation} \label{eq:COVDER5}
 D_x^{} \bigl( [w_p^{}] , J_{(p,q)} \rho_Q^{}(u_p^{},u_q^{}) \bigr)~
 =~\vec{J}_{(p,q)} \rho_Q^{}
   \bigl( D_{(p,q)} \bigl( w_p^{} , (u_p^{},u_q^{}) \bigr) \bigr) \,,
\end{equation}
whereas the latter is simply defined by
\begin{equation} \label{eq:COVDER6}
 D_{(p,q)} \bigl( w_p^{} , (u_p^{},u_q^{}) \bigr)~
 =~(u_p^{} - w_p^{} , u_q^{}) \,.
\end{equation}
(This follows from equations~(\ref{eq:JSPROD1})--(\ref{eq:JSASSB3})
together with the same equations with $J$ replaced by~$\vec{J}$.)
\linebreak
To show that $D_x^{}$ is well defined, note first that if we replace the
point $p$ in~$P$ by another point in~$P$ in the same fiber over~$x$,
which is of the form $p \cdot g_0^{}$ for some (unique) $g_0^{} \in G$,
then we must replace $w_p^{}$ by $\, w_p^{} \cdot g_0^{} = T_p^{}
R_{g_0^{}} \,\smcirc\, w_p^{} \,$ and similarly $u_p^{}$ by
$\, u_p^{} \cdot g_0^{} = T_p^{} R_{g_0^{}} \,\smcirc\, u_p^{}$,
as well as $u_q^{}$ by $\, u_q^{} \cdot g_0^{} = T_q^{} L_{g_0^{-1}}
\,\smcirc\, u_q^{}$, so as to guarantee that $J_{(p,q)} \rho_Q^{}
(u_p^{},u_q^{})$ remains unaltered:
\[
\begin{aligned}
 &J_{(p \cdot g_0^{},g_0^{-1} \cdot q)} \rho_Q^{}
  \bigl( u_p^{} \cdot g_0^{} , u_q^{} \cdot g_0^{} \bigr)~
  =~T_{(p \cdot g_0^{},g_0^{-1} \cdot q)} \rho_Q^{} \,\smcirc\,
    T_{(p,q)} \bigl( R_{g_0^{}} \times L_{g_0^{-1}} \bigr) \,\smcirc\,
    (u_p^{},u_q^{}) \\[1ex]
 & \quad =~T_{(p,q)} \bigl( \rho_Q^{} \,\smcirc\,
           \bigl( R_{g_0^{}} \times L_{g_0^{-1}} \bigr) \bigr) \,\smcirc\,
           (u_p^{},u_q^{})~
         =~T_{(p,q)} \rho_Q^{} \,\smcirc\, (u_p^{},u_q^{}) \\[1ex]
 & \quad =~J_{(p,q)} \rho_Q^{}(u_p^{},u_q^{}) \,.
\end{aligned}
\]
But then $\vec{J}_{(p,q)} \rho_Q^{}(u_p^{} - w_p^{} , u_q^{})$
will remain unaltered as well:
\[
\begin{aligned}
 &\vec{J}_{(p \cdot g_0^{},g_0^{-1} \cdot q)} \rho_Q^{}
  \bigl( (u_p^{} - w_p^{}) \cdot g_0^{} , u_q^{} \cdot g_0^{} \bigr)~
  =~T_{(p \cdot g_0^{},g_0^{-1} \cdot q)} \rho_Q^{} \,\smcirc\,
    T_{(p,q)} \bigl( R_{g_0^{}} \times L_{g_0^{-1}} \bigr) \,\smcirc\,
    (u_p^{}-w_p^{},u_q^{}) \\[1ex]
 & \quad =~T_{(p,q)} \bigl( \rho_Q^{} \,\smcirc\,
           \bigl( R_{g_0^{}} \times L_{g_0^{-1}} \bigr) \bigr) \,\smcirc\,
           (u_p^{}-w_p^{},u_q^{})~
         =~T_{(p,q)} \rho_Q^{} \,\smcirc\, (u_p^{}-w_p^{},u_q^{}) \\[1ex]
 & \quad =~\vec{J}_{(p,q)} \rho_Q^{}(u_p^{}-w_p^{},u_q^{}) \,.
\end{aligned}
\]
Moreover, even if we leave $p$ fixed, we may still modify the second
component in the argument of $D_{(p,q)}$, i.e., the pair $\, (u_p^{},
u_q^{}) \in J_p^{} P \times L(T_x^{} M,T_q^{} Q)$, without changing
its image under $J_{(p,q)} \rho_Q^{}$, namely, by adding a pair
$\, (\vec{u}_p^{},\vec{u}_q^{}) \in L(T_x^{} M, \ker T_{(p,q)}
\rho_Q^{})$.
But then since $w_p^{} \in J_p^{} P$ remains unaltered,
the expression $\, (u_p^{}-w_p^{},u_q^{}) \in \vec{J}_p^{} P
\times L(T_x^{} M,T_q^{} Q) \,$ will be modified in the same
way and, in particular, without changing its image under
$\vec{J}_{(p,q)} \rho_Q^{}$.

Now we are ready to formulate the first main theorem in this paper,
which extends the left part of the commutative diagram in equation~(52)
of Ref.~\cite{FS}, as follows.
\begin{thm}~
 The minimal coupling map $D$ in equation~(\ref{eq:COVDER2}) is
 equivariant under the actions of the pertinent Lie groupoids, i.e., the
 diagram
 \begin{equation}
  \begin{array}{c}
   \xymatrix{
    ~~~\, JG \times_M (CP \times_M J(P \times_{G_0} Q)) \,~~~
    \ar[r]
    \ar[d]_{(\pi_{JG}^{\mathrm{fr}} \times \pi_{JG}^{}) \times_M^{} D\,} &
    ~CP \times_M J(P \times_{G_0} Q)~ \ar[d]^{\,D} \\
    ~(GL(TM) \times_M G) \times_M \vec{J} (P \times_{G_0} Q)~ \ar[r] &
    \qquad \vec{J} (P \times_{G_0} Q) \qquad
   }
  \end{array}
 \end{equation}
 commutes.
\end{thm}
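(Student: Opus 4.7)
The plan is to reduce the statement, via the commutative diagram in equation~(\ref{eq:COVDER4}), to its analogue on the top row, where the minimal coupling map admits the transparent form $D_{(p,q)}(w_p, (u_p, u_q)) = (u_p - w_p, u_q)$ of equation~(\ref{eq:COVDER6}). In that form the first slot is a pure difference of jets of~$P$, whose equivariance is precisely the content of Proposition~\ref{prp:EQUIV1}, while the second slot is just an element of $L(T_x M, T_q Q)$ on which the induced $JG$-action is merely precomposition with $a^{-1}$, where $a = \pi_{JG}^{\mathrm{fr}}(u_{[p',p]})$, reflecting the fact that $\Phi_{P \times Q}$ acts trivially on the $Q$-factor (equation~(\ref{eq:ACTGGPR2})).

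First I fix representatives. Given $u_{[p',p]} \in J_{[p',p]} G$ with $\sigma_G([p',p]) = x$, $\tau_G([p',p]) = y$ and $a := \pi_{JG}^{\mathrm{fr}}(u_{[p',p]})$, I choose $w_p \in J_p P$ projecting to $[w_p] \in C_x P$ and a pair $(u_p, u_q) \in J_p P \oplus L(T_x M, T_q Q)$ with $J_{(p,q)} \rho_Q(u_p, u_q) = u_{[p,q]}$. Using the $JG$-equivariance of $\rho_C$ (which is exactly what makes the action $\Phi_{CP}$ of equation~(\ref{eq:IAJGCB1}) descend to the quotient) and of $J\rho_Q$ (established in Section~3.2), the transformed points become $u_{[p',p]} \cdot [w_p] = [u_{[p',p]} \cdot w_p]$ and $u_{[p',p]} \cdot u_{[p,q]} = J_{(p',q)} \rho_Q(u_{[p',p]} \cdot u_p,\, u_q \,\smcirc\, a^{-1})$, the latter obtained by unpacking equation~(\ref{eq:IAJ1GJ1B2}) for $\Phi_{P \times Q}$ componentwise.

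Evaluating $D$ at this transformed point via equation~(\ref{eq:COVDER5}) yields
\[
 D \bigl( u_{[p',p]} \!\cdot [w_p] \,,\, u_{[p',p]} \!\cdot u_{[p,q]} \bigr)~
 =~\vec{J}_{(p',q)} \rho_Q \bigl( (u_{[p',p]} \!\cdot u_p) - (u_{[p',p]} \!\cdot w_p) \,,\, u_q \,\smcirc\, a^{-1} \bigr).
\]
On the other hand, combining the $G$-equivariance of $\rho_Q$ (which transfers to $\vec{J} \rho_Q$ by tangent-mapping) with the explicit form of the $(GL(TM) \times_M G)$-action on $\vec{J}(P \times_{G_0} Q)$ from equation~(\ref{eq:IACT09}), one obtains
\[
 (a, [p',p]) \cdot D([w_p], u_{[p,q]})~
 =~\vec{J}_{(p',q)} \rho_Q \bigl( (a, [p',p]) \cdot (u_p - w_p) \,,\, u_q \,\smcirc\, a^{-1} \bigr).
\]
Equality of the two right-hand sides is precisely the content of Proposition~\ref{prp:EQUIV1}, applied to the difference $u_p - w_p$ in $\vec{J} P$.

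The main obstacle is not a deep computation but the bookkeeping required to align the various actions across the magical square~(\ref{eq:COVDER4}): one must verify that the $G$- and $JG$-actions on $P \times Q$, $P \times_{G_0} Q$, $J(P \times Q)$, $J(P \times_{G_0} Q)$ and $\vec{J}(P \times_{G_0} Q)$ are all consistently related through the tangent maps of~$\rho_Q$ and its first-order jet and linearized-jet analogues. Once this has been arranged, Proposition~\ref{prp:EQUIV1} closes the argument with no additional work, confirming that the genuine content already lay in the difference-map equivariance for first-order jets.
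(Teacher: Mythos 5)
Your proposal is correct and follows essentially the same route as the paper's proof: reduce via the commutative diagram in equation~(\ref{eq:COVDER4}) to the explicit form of~$D$ on the top row, invoke the $JG$-equivariance of $J\rho_Q^{}$ and the $(GL(TM) \times_M G)$-equivariance of $\vec{J}\rho_Q^{}$, and close with Proposition~\ref{prp:EQUIV1} applied to the difference $u_p^{} - w_p^{}$. Your version merely spells out the representative-level computation that the paper leaves implicit (and is, if anything, slightly more careful in recording the factor $u_q^{} \smcirc a^{-1}$ in the second slot).
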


\begin{proof}
 This follows immediately from equivariance of~$J\rho_Q^{}$ under~%
 $JG$ (which as we have seen implies equivariance of the canonical
 bundle map in equation~(\ref{eq:ASSCON1}) under~$JG$) and
 equivariance of~$\vec{J} \rho_Q^{}$ under $GL(TM) \times_M G$
 (which can be shown in precisely the same way), in combination with
 Proposition~\ref{prp:EQUIV1}, to prove that the bundle maps~$D$
 in the top and middle rows of the diagram in equation~%
 (\ref{eq:COVDER4}) are equivariant in the same sense,
 the former obviously being equivariant under the right action
 of~$G_0^{}$ as well.
\qed
\end{proof}

To complete the discussion, let us specify in what sense the map $D$
in equation~(\ref{eq:COVDER2}) captures the essence of the minimal
coupling prescription.
Abbreviating $P \times_{G_0} Q$ to~$E$, assume that $\, \varGamma:
M \longrightarrow CP \,$ is a section of~$CP$ representing a principal
connection in~$P$, $\varGamma^E: E \longrightarrow JE \,$ is the
section of~$JE$ (as a bundle over~$E$) representing the resulting
associated connection in~$E$, obtained by push-forward with the
canonical bundle map in equation~(\ref{eq:ASSCON1}), $\varphi:
M \longrightarrow E \,$ is a section of~$E$ and $\, \partial\varphi:
M \longrightarrow JE \,$ is its derivative (also denoted by $j\varphi$
and called its jet prolongation); then $\, D \,\smcirc\, (\varGamma,
\partial\varphi): M \longrightarrow \vec{J} E \,$ is indeed the covariant
derivative of~$\varphi$ with respect to that connection, because it is
elementary to see that equation~(\ref{eq:COVDER5}) combined with
equation~(\ref{eq:COVDER6}) will boil down to the formula in
equation~(\ref{eq:COVDER1}).

\subsection{Utiyama's theorem}

To deal with the curvature map, we observe that the bundle map $F$ of
equation~(\ref{eq:CURV2}) fits into the following commutative diagram
\begin{equation} \label{eq:CURV3}
 \begin{array}{c}
  \xymatrix{
   ~~\,\bar{J}^{\>\!2}_{\vphantom{p}} P\,~~
   \ar[r]^-{F} \ar[d]_-{J\rho_C^{}\,}
   & ~\pi_{JP}^*
      \Bigl( \rho^* \bigl(\bwedge^{\!2\,} T^* M \bigr) \otimes VP \Bigr)~
   \ar[d] \\
   ~J(CP)^{\vphantom{\hat{M}}}~ \ar[r]_-{F^{\vphantom{M}}}
   & \quad \bwedge^{\!2\,} T^* M \otimes
       (P \times_{G_0} \mathfrak{g}_0^{}) \quad
  }
 \end{array}
\end{equation}
where the bundles in the upper row are over $JP$ while those in the
lower row are over~$M$.
(Here, we have identified the vertical bundle $VP$ of~$P$ with the
trivial vector bundle $P \times \mathfrak{g}_0^{}$ over~$P$; then
the second tensor factor in the vertical map on the rhs of this diagram
is just the map $\rho_{\mathfrak{g}_0}^{}$ in the ``magical square'' of
equation~(\ref{eq:MSASSB1}) for the adjoint bundle $P \times_{G_0}
\mathfrak{g}_0^{}$, pulled back to~$JP$.)
Again, it is convenient to expand this to a commutative diagram
\begin{equation} \label{eq:CURV4}
 \begin{array}{c}
  \xymatrix{
   ~~\,\bar{J}^{\>\!2}_{\vphantom{p}} P\,~~
   \ar[r]^-{F} \ar[d]
   & ~\pi_{JP}^*
      \Bigl( \rho^* \bigl(\bwedge^{\!2\,} T^* M \bigr) \otimes VP \Bigr)~
   \ar[d] \\
   ~~\,\bar{J}^{\>\!2}_{\vphantom{p}} P\,~~
   \ar[r]^-{F} \ar[d]
   & \quad~~ \rho^* \bigl(\bwedge^{\!2\,} T^* M \bigr) \otimes VP ~~\quad
       \ar[d] \\
   ~J(CP)^{\vphantom{\hat{M}}}~ \ar[r]^-{F}
   & \quad \bwedge^{\!2\,} T^* M \otimes
       (P \times_{G_0} \mathfrak{g}_0^{}) \quad
  }
 \end{array}
\end{equation}
where the bundles in the middle row are over the total space~$P$
of the principal bundle.
(And again, we omit the labels on the vertical maps, which are
either the same as in the previous diagram or else are obvious.)
Then the bundle map~$F$ in the middle row is the alternator or
antisymmetrizer already introduced at the beginning of this paper
(see equation~(\ref{eq:ASPSSOJ})).
Continuing to use the same notation as in Section~3.2, we see that
this corresponds to the bundle map~$F$ in the bottom row being given
in terms of that in the top row according to
\begin{equation} \label{eq:CURV5}
 F_x^{} \bigl( J_{w_p}^{} \rho_C^{}
               (u'_{w_p}) \bigr)(v_1^{},v_2^{})~
 =~\rho_{\mathfrak{g_0}}^{}
   \bigl( F_{w_p}^{}(u'_{w_p})(v_1^{},v_2^{}) \bigr) \,,
\end{equation}
for $\, v_1^{},v_2^{} \in T_x^{} M$,
whereas the latter, as we recall from Section~2, is explicitly defined
as follows: given a semiholonomous second order jet $\, u'_{w_p} \in
\bar{J}_{w_p}^{\>\!2} P$, we arbitrarily choose some holonomous
second order jet $\, u_{w_p}^{\prime\,0} \in J_{w_p}^{\>\!2} P$
(this choice will ultimately drop out under the antisymmetrization)
to form the difference $u'_{w_p} - u_{w_p}^{\prime\,0}$,
which is a linear map from $T_x^{} M$ to the vertical space
$V_{w_p}^{\mathrm{jt}}(JP)$ of~$JP$ with respect to the jet
target projection $\pi_{JP}$; then we can apply the canonical
isomorphism
\begin{equation} \label{eq:CURV6}
 V_{w_p}^{\mathrm{jt}}(JP)~=~\ker T_{w_p}^{} \pi_{JP}^{}~
 =~T_{w_p}^{}(J_p^{} P)~
 \cong~\vec{J}_p^{} P~=~L(T_x^{} M,V_p^{} P)
\end{equation}
to identify it with a linear map from $T_x^{} M$ to $L(T_x^{} M,
V_p^{} P)$, that is, with an element of $L^2(T_x^{} M,V_p^{} P)$,
and obtain $\, F_{w_p}^{}(u'_{w_p}) \in L_a^2(T_x^{} M,V_p^{} P) \,$
by antisymmetrizing in the usual sense.
The last step then consists in applying the additional canonical isomorphism
\begin{equation} \label{eq:CURV7}
 V_p^{} P~\cong~\mathfrak{g}_0^{} \,.
\end{equation}
To show that $F_x^{}$ is well defined, note that if we replace the
point $p$ in~$P$ by another point in~$P$ in the same fiber over~$x$,
which is of the form $p \cdot g_0^{}$ for some (unique) $g_0^{} \in G$,
then we must replace $w_p^{}$ by $\, w_p^{} \cdot g_0^{} = T_p^{}
R_{g_0^{}}^P \,\smcirc\, w_p^{}$, $u'_{w_p}$ by $\, u'_{w_p} \cdot 
g_0^{} = T_{w_p}^{} R_{g_0^{}}^{JP} \,\smcirc\, u'_{w_p} \,$ and
similarly $u_{w_p}^{\prime\,0}$ by $\, u_{w_p}^{\prime\,0} \cdot
g_0^{} = T_{w_p}^{} R_{g_0^{}}^{JP} \,\smcirc\, u_{w_p}^{\prime\,0}$,
where $R_{g_0^{}}^P$ and $R_{g_0^{}}^{JP}$ denote right translation
by $g_0^{}$ in~$P$ and in~$JP$, respectively, so as to guarantee that
$J_{w_p}^{} \rho_C^{}(u'_{w_p})$ remains unaltered:
\[
\begin{aligned}
 J_{w_p \cdot g_0^{}} \rho_C^{}(u'_{w_p} \cdot g_0^{})~
 &=~T_{w_p \cdot g_0^{}} \rho_C^{} \,\smcirc\,
    T_{w_p}^{} R_{g_0^{}}^{JP} \,\smcirc\, u'_{w_p}~
  =~T_{w_p}^{}
    \bigl( \rho_C^{} \,\smcirc\, R_{g_0^{}}^{JP} \bigr)
    \,\smcirc\, u'_{w_p} \\[1ex]
 &=~T_{w_p}^{} \rho_C^{} \,\smcirc\, u'_{w_p}~
   =~J_{w_p}^{} \rho_C^{}(u'_{w_p}) \,.
\end{aligned}
\]
But then
\[
 u'_{w_p} \cdot g_0^{} \, - \, u_{w_p}^{\prime\,0} \cdot g_0^{}~
 =~T_{w_p}^{} R_{g_0^{}}^{JP} \,\smcirc\, u'_{w_p} \, - \,
    T_{w_p}^{} R_{g_0^{}}^{JP} \,\smcirc\, u_{w_p}^{\prime\,0} \,,
\]
so that applying the isomorphism in equation~(\ref{eq:CURV6}), we get
\[
 \bigl( u'_{w_p} - u_{w_p}^{\prime\,0} \bigr) \cdot g_0^{}~
 =~T_p^{} R_{g_0^{}}^P \,\smcirc\,
   \bigl( u'_{w_p} - u_{w_p}^{\prime\,0} \bigr) \,,
\]
and applying the additional isomorphism in equation~(\ref{eq:CURV7}),
we get
\[
 \bigl( u'_{w_p} - u_{w_p}^{\prime\,0} \bigr) \cdot g_0^{}~
 =~\mathrm{Ad}(g_0^{-1}) \,\smcirc\,
   \bigl( u'_{w_p} - u_{w_p}^{\prime\,0} \bigr) \,,
\]
implying that
\[
 [\, p \cdot g_0^{} \,,
     \bigl( u'_{w_p} - u_{w_p}^{\prime\,0} \bigr) \cdot g_0^{} \,]~
 =~[\, p \,,\, u'_{w_p} - u_{w_p}^{\prime\,0} \,] \,.
\]
(To justify this conclusion, note that the linear isomorphism
$\, T_{w_p}^{} R_{g_0^{}}^{JP}: T_{w_p}^{}(JP) \longrightarrow
T_{w_p \cdot g_0}^{}(JP)$, when restricted to the vertical space
of~$JP$ with respect to the jet target projection~$\pi_{JP}^{}$,
reduces to the tangent map $\, T_{w_p}^{} R_{g_0^{},p}^{JP}:
T_{w_p}^{}(J_p^{} P) \longrightarrow T_{w_p \cdot g_0}^{}
(J_{p \cdot g_0}^{} P) \,$ to the restricted right translation
$\, R_{g_0^{},p}^{JP}: J_p^{} P \longrightarrow
J_{p \cdot g_0}^{} P \,$ by $g_0^{}$.
But this is an affine map between affine spaces, so under the iso%
morphism in equation~(\ref{eq:CURV6}), its tangent map at each point
becomes the corresponding difference map, which is a linear map
$\, \vec{R}_{g_0^{},p}^{JP}: \vec{J}_p^{} P \longrightarrow
\vec{J}_{p \cdot g_0}^{} P$, and that is just composition with
\linebreak $T_p^{} R_{g_0^{}}^P: V_p^{} P \longrightarrow
V_{p \cdot g_0}^{} P$.
Finally, it is well known that under the isomorphism in equation~%
(\ref{eq:CURV7}), this becomes $\, \mathrm{Ad}(g_0^{-1}):
\mathfrak{g}_0^{} \longrightarrow \mathfrak{g}_0^{}$.)

Now we are ready to formulate the second main theorem in this paper,
which extends the left part of the commutative diagram in equation~(57)
of Ref.~\cite{FS}, as follows.
\begin{thm}~
 The curvature map $F$ in equation~(\ref{eq:CURV2}) is equivariant under
 the actions of the pertinent Lie groupoids, i.e., the diagram
 \begin{equation}
  \begin{array}{c}
   \xymatrix{
    \,\qquad\qquad\qquad J^2 G \times_M J(CP) \qquad\qquad\qquad\,~
    \ar[r] \ar[d]_{((\pi_{JG}^{\mathrm{fr}} \times \pi_{JG}^{}) \smcirc
                    \pi_{J^2 G},F)} &
    ~\,\quad\qquad J(CP) \qquad\quad\, \ar[d]^F \\
    (GL(TM) \times_M G) \times_M
    \bigl( \bwedge^{\!2\,} T^* M \otimes 
    (P \times_{G_0} \mathfrak{g}_0^{}) \bigr)~ 
    \ar[r] &
    ~\bwedge^{\!2\,} T^* M \otimes 
    (P \times_{G_0} \mathfrak{g}_0^{})
   }
  \end{array}
 \end{equation}
 commutes.
\end{thm}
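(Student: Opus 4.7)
The plan is to mimic the (short) proof of the previous theorem, chasing the three-row diagram~(\ref{eq:CURV4}) from top to bottom. The top-row map is precisely the alternator of~(\ref{eq:ASPSSOJ}) applied to $E = P$, whose $J^{\>\!2} G$-equivariance is already delivered by Proposition~\ref{prp:EQUIV2}. The middle row is obtained by stripping the pull-back along $\pi_{JP}^{}$, which is a pointwise identification and preserves equivariance trivially. The interesting content is therefore the passage from the middle row to the bottom row, i.e., quotienting by the right $G_0^{}$-action on both source and target.

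First I would assemble three equivariance ingredients. \emph{(a)} The map $J\rho_C^{}: \bar{J}^{\>\!2} P \to J(CP)$ appearing in the magical square~(\ref{eq:MSJCNB1}) is $\bar{J}^{\>\!2} G$-equivariant; this follows by exactly the same calculation given in Section~3.2 for the $J(JG)$-equivariance of $J\rho_C^{}: J(JP) \to J(CP)$, now restricted to the semiholonomous subgroupoid and subbundle (it has already been observed there that the action preserves semiholonomicity). \emph{(b)} Under the identification $VP \cong P \times \mathfrak{g}_0^{}$, the natural map $\rho^*(\bwedge^{\!2\,} T^* M) \otimes VP \to \bwedge^{\!2\,} T^* M \otimes (P \times_{G_0} \mathfrak{g}_0^{})$ is the tensor product of the identity on $\bwedge^{\!2\,} T^* M$ with the canonical map $\rho_{\mathfrak{g}_0^{}}$ from the magical square~(\ref{eq:MSASSB1}) for the adjoint bundle $P \times_{G_0} \mathfrak{g}_0^{}$; by Section~3.1 it is $G$-equivariant, and once combined with the natural $GL(TM)$-action on $\bwedge^{\!2\,} T^* M$ it becomes $GL(TM) \times_M G$-equivariant. \emph{(c)} The projections on the left-hand vertical of the diagram in the theorem factor as $\pi_{J^{\>\!2} G}^{\mathrm{fr}} = \pi_{JG}^{\mathrm{fr}} \smcirc \pi_{J^{\>\!2} G}^{}$ and $\pi_{J^{\>\!2} G, G}^{} = \pi_{JG}^{} \smcirc \pi_{J^{\>\!2} G}^{}$ by their very definitions given just before Proposition~\ref{prp:EQUIV2}.

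Combining these, one chases a point: given $u'_{u_g} \in J^{\>\!2}_{u_g} G$ and $u'_{[w_p]} \in J_{[w_p]}(CP)$, lift $u'_{[w_p]}$ via~(\ref{eq:JSCONB4}) to a semiholonomous $u'_{w_p} \in \bar{J}^{\>\!2}_{w_p} P$; by~(a), $u'_{u_g} \!\cdot\, u'_{w_p}$ lifts $u'_{u_g} \!\cdot\, u'_{[w_p]}$; by Proposition~\ref{prp:EQUIV2}, applying $\mathrm{Alt}$ commutes with the induced $J^{\>\!2} G$-action; and by~(b), further projecting the resulting element of $\rho^*(\bwedge^{\!2\,} T^* M) \otimes VP$ down to $\bwedge^{\!2\,} T^* M \otimes (P \times_{G_0} \mathfrak{g}_0^{})$ respects the $GL(TM) \times_M G$-action. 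This forces the required commutativity of the diagram in the theorem.

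The one point worth flagging as a potential obstacle is the well-definedness of the bottom-row $F$ on $J(CP)$, i.e., independence of the choice of semiholonomous lift $u'_{w_p}$. That question has already been settled in the discussion between~(\ref{eq:CURV5}) and~(\ref{eq:CURV7}): the ambiguity in the lift sits in $\ker J_{w_p}^{} \rho_C^{}$, which under the isomorphisms~(\ref{eq:CURV6})--(\ref{eq:CURV7}) contributes a symmetric tensor that is killed by antisymmetrization, while the $G_0^{}$-ambiguity is handled exactly by the computation showing that $[p \cdot g_0^{}, (u'_{w_p} - u_{w_p}^{\prime\,0}) \cdot g_0^{}] = [p, u'_{w_p} - u_{w_p}^{\prime\,0}]$. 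Once that is granted, what remains is genuinely routine diagram chasing through~(\ref{eq:CURV4}).
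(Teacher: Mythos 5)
Your proposal is correct and follows essentially the same route as the paper's own (very terse) proof: the paper likewise reduces the statement to Proposition~\ref{prp:EQUIV2} together with the $G$-equivariance of the identification $VP \cong P \times \mathfrak{g}_0^{}$ and of the projection $\rho_{\mathfrak{g}_0}^{}$, relying implicitly on the $\bar{J}^{\>\!2}G$-equivariance of $J\rho_C^{}$ and the well-definedness discussion surrounding equations~(\ref{eq:CURV5})--(\ref{eq:CURV7}), both of which you correctly make explicit. Your write-up simply spells out the diagram chase through~(\ref{eq:CURV4}) in more detail than the paper does.
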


\begin{proof}
 This follows immediately from Proposition~\ref{prp:EQUIV2}, together
 with the fact that, as shown in Section~3.1, the canonical isomorphism
 $\, VP \cong P \times \mathfrak{g}_0^{} \,$ and the projection
 $\, \rho_{\mathfrak{g}_0}^{}: P \times \mathfrak{g}_0^{}
 \longrightarrow P \times_{G_0} \mathfrak{g}_0^{} \,$ are
 both $G$-equivariant.
\qed
\end{proof}

To complete the discussion, let us specify in what sense the map $F$
in equation~(\ref{eq:CURV2}) captures the essence of the prescription
for defining the curvature of a principal connection.
Assume that $\, \varGamma: M \longrightarrow CP \,$ is a section
of~$CP$ representing a principal connection in~$P$ and $\, \partial
\varGamma: M \longrightarrow J(CP) \,$ is its derivative (also denoted
by $j\varGamma$ and called its jet prolongation); then $\, F \,\smcirc\,
\partial\varGamma: M \longrightarrow \bwedge^{\!2\,} T^* M \otimes
(P \times_{G_0} \mathfrak{g}_0^{}) \,$ is a $2$-form on~$M$ with
values in the adjoint bundle $P \times_{G_0} \mathfrak{g}_0^{}$
which is precisely the curvature form of that connection, because it
is elementary to see that equation~(\ref{eq:CURV5}) will boil down
to the formula in equation~(\ref{eq:CURV1}).

\section{Conclusions and Outlook}

The equivariance statements formulated in the two theorems in this
paper are very general, in that this equivariance holds for the full jet
groupoid $JG$ of the gauge groupoid~$G$, in the case of Theorem~1,
and for the full second order jet groupoid $J^{\>\!2} G$ of the
gauge groupoid~$G$, in the case of Theorem~2.
But this does of course not mean that a concrete field theoretical
model will have such a huge amount of symmetry~-- quite to the
contrary!
Any such model will be subject to restrictions on what are its allowed
symmetries coming from the dynamics, \linebreak which is governed,
say, by its Lagrangian: such a Lagrangian will typically be invariant
not under the pertinent jet groupoid but rather only under a certain
Lie subgroupoid thereof. \linebreak
The generic situation here, which prevails for all standard Lagrangians
in gauge theories, is that when $M$ comes equipped with some metric~%
$\mathslf{g}$, this Lie subgroupoid will be the inverse image of the
corresponding orthonormal frame groupoid $\, O(TM,\mathslf{g})
\subset GL(TM) \,$ under the ``frame'' projection from the pertinent
jet groupoid to the linear frame groupoid~$GL(TM)$ of~$M$. \linebreak
Thus what the two theorems in the previous section really prove is that
there are no other \linebreak restrictions, so this is in fact the correct
Lie groupoid for hosting the symmetries of any such theory, and remarkably,
it is large enough to accomodate not only its gauge symmetries but also
its space-time symmetries, including isometries as well as orthonormal
frame transformations, unifying them all within a single mathematical
object.
Finally, the formalism can also be adapted to handle symmetry breaking,
as has been discussed in Ref.~\cite{Key} (even though only at the level
of Lie group bundles and not of full Lie groupoids, which is however
enough to deal with that subject).

With this picture in mind, we hope to have demonstrated, in the two
papers of this series, that Lie groupoids provide a much wider and
more flexible mathematical framework than Lie groups for describing
symmetries in physics, and in some cases such as that of gauge
theories, we would venture to say they provide the ``right'' one.
What remains to be seen is how this approach will evolve when
one tries to extend it from classical to quantum field theories.

\section*{Acknowledgements}

The work of the first and the last author has originated from studies
performed during the development of their PhD theses, which have
been elaborated under the supervision of the second author and have
been supported by fellowships from CAPES (Coordena\c{c}\~ao de
Aperfei\c{c}oamento de Pessoal de N\'{\i}vel Superior), Brazil.
The second author would like to thank M.~Castrillon Lopez for
his hospitality during a visit to Madrid where part of this work
was developed, and for stimulating discussions about various
aspects of the subject.

\begin{appendix}

\section*{Appendix: Jet prolongations and gauge groupoids}

Our goal in this appendix is to prove a fact which is not used directly
in the main text (and that is why it has been relegated to an appendix)
but provides important additional insight into the way how Lie groupoid
theory is applied to gauge theories and has actually played a rather
important role in the development of the ideas underlying our work.
Briefly, the statement is that, (a) passing from a principal bundle first
to its jet prolongation and then to the gauge groupoid of that, or (b)
passing from a principal bundle first to its gauge groupoid and then
to the jet groupoid of that, gives the same result, up to a canonical
isomorphism; we may abbreviate this by saying that the processes
of building gauge groupoids and of taking jet prolongations commute,
provided the latter are interpreted correctly, each one in its category.
To show this, we must first explain the concept of jet prolongation
of a principal bundle.

\subsection*{Jet prolongations of principal bundles and associated bundles}

The main obstacle against an entirely trivial compatibilization between
the jet functor and the passage from principal bundles to associated
bundles resides in the fact that, although the (first order) jet bundle
of a fiber bundle is again a fiber bundle, the (first order) jet bundle
$JP$ of a principal bundle~$P$ is, by itself, \emph{not} a principal
bundle.
However, there is a simple way to remedy this defect, namely
by taking the fiber product with the linear frame bundle
$\mathrm{Fr}(M,GL(n,\mathbb{R}))$ of the base manifold.%
\footnote{Our notation for the linear frame bundle of a manifold
may at first sight look a bit clumsy, but it pays off by becoming
almost self-evident when we consider $G$-structures, which are
principal subbundles of the linear frame bundle with structure
groups that are closed subgroups $G$ of~$GL(n,\mathbb{R})$
and can with this notation simply be denoted by $\mathrm{Fr}
(M,G)$: a typical example would be the orthonormal frame
bundle $\mathrm{Fr}(M,O(n))$ induced by some Riemannian
metric.  (We apologize for the momentary change of meaning
of the symbol $G$ in this footnote).}
Indeed, it follows from the general constructions presented in
\cite[Chapter~4]{KMS} that if $P$ is a principal bundle over~$M$
with structure group~$G_0^{}$, then
\begin{equation}
 P^{(1)} \, = \, \mathrm{Fr}(M,GL(n,\mathbb{R})) \times_M JP
\end{equation}
is again a principal bundle over~$M$, called the (first order)
\emph{jet prolongation} of~$P$, with structure group
\begin{equation}
 G_0^{(1)} \, = \, (GL(n,\mathbb{R}) \times G_0^{}) \ltimes
                   L(\mathbb{R}^n,\mathfrak{g}_0^{})
\end{equation}
called the (first order) \emph{jet group} of~$G_0^{}$: this is simply the
semidirect product of the direct product $GL(n,\mathbb{R}) \times G_0^{}$
with the vector space $L(\mathbb{R}^n,\mathfrak{g}_0^{})$ of linear maps
from $\mathbb{R}^n$ to the Lie algebra~$\mathfrak{g}_0^{}$, which in this
context is viewed as an Abelian Lie group, where the semidirect product is
taken with respect to the natural (left) action
\begin{equation} \label{eq:JETGR1}
 \begin{array}{ccc}
  (GL(n,\mathbb{R}) \times G_0^{}) \times L(\mathbb{R}^n,\mathfrak{g}_0^{})
  & \longrightarrow & L(\mathbb{R}^n,\mathfrak{g}_0^{}) \\[1mm]
  ((a_0^{},g_0^{}) , \xi_0^{})
  &   \longmapsto   & (a_0^{},g_0^{}) \cdot \xi_0^{}
 \end{array}
\end{equation}
given by
\begin{equation} \label{eq:JETGR2}
 (a_0^{},g_0^{}) \cdot \xi_0^{} \,
 = \, \mathrm{Ad}(g_0^{}) \,\smcirc\, \xi_0^{} \,\smcirc\, a_0^{-1} \,,
\end{equation}
so the product in~$G_0^{(1)}$ is explicitly given by
\begin{equation} \label{eq:JETGR3}
 (a_{0,1}^{},g_{0,1}^{};\xi_{0,1}^{})
 (a_{0,2}^{},g_{0,2}^{};\xi_{0,2}^{}) \,
 = \, \bigl( a_{0,1}^{} a_{0,2}^{} , g_{0,1}^{} g_{0,2}^{};
             \xi_{0,1} + (a_{0,1}^{},g_{0,1}^{}) \cdot \xi_{0,2}^{} \bigr) \,.
\end{equation}
To write an explicit formula for the (right) action of $G_0^{(1)}$
on~$P^{(1)}$, we introduce the following notation: given any point
$p$ of~$P$, the isomorphism from the Lie algebra $\mathfrak{g}_0^{}$
onto the vertical space $V_p^{} P$ given by associating to every
$X_0^{}$ in $\mathfrak{g}_0^{}$ the value of the corresponding
fundamental vector field $(X_0^{})_P^{}$ at~$p$ (see equation~%
(\ref{eq:FVFP})) is, for any finite-dimensional real vector space~$W$,
extended to an isomorphism
\[
 \begin{array}{ccc}
  L(W,\mathfrak{g}_0^{}) \,\cong\, W^* \otimes \mathfrak{g}_0^{}
  & \longrightarrow & W^* \otimes V_p^{} P \,\cong\, L(W,V_p^{} P)
  \\[1mm]
  \xi_0^{}
  &   \longmapsto   & (\xi_0^{})_P^{}(p)
 \end{array}
\]
simply by taking the tensor product with the identity on~$W^*$
(i.e., $(\xi_0^{})_P^{}(p)(w) \equiv (\xi_0^{}(w))_P^{}(p)$).
Note that $G_0^{}$-equivariance of fundamental vector fields implies
that, denoting right translation by elements $g_0^{}$ of~$G_0^{}$
on~$P$ as well as on~$TP$ by $R_{g_0^{}}$ (so that $\, R_{g_0^{}}:
T_p^{} P \longrightarrow T_{p \cdot g_0}^{} P \,$ is the derivative
at~$p$ of $\, R_{g_0^{}}: P \longrightarrow P$), we have
\begin{equation} \label{eq:EQUIV1}
 R_{g_0^{}} \smcirc\, (\xi_0^{})_P^{}(p) \,
 = \, (\mathrm{Ad}(g_0^{})^{-1} \smcirc\,
      \xi_0^{})_P^{}(p \cdot g_0^{}) \,.
\end{equation}
Similarly, if we are given a (left) action
\[
 \begin{array}{ccc}
  G_0^{} \times Q & \longrightarrow &       Q        \\[1mm]
     (g_0^{},q)   &   \longmapsto   & g_0^{} \cdot q
 \end{array}
\]
of~$G_0^{}$ on some manifold~$Q$, then for any point $q$ of~$Q$,
we consider the linear map from the Lie algebra $\mathfrak{g}_0^{}$
into the tangent space $T_q^{} Q$ given by associating to every $X_0^{}$
in $\mathfrak{g}_0^{}$ the value of the corresponding fundamental
vector field $(X_0^{})_Q^{}$ at~$q$ (see equation~(\ref{eq:FVFQ}))
and, for any finite-dimensional real vector space~$W$, extend it to a
linear map
\[
 \begin{array}{ccc}
  L(W,\mathfrak{g}_0^{}) \,\cong\, W^* \otimes \mathfrak{g}_0^{}
  & \longrightarrow & W^* \otimes T_q^{} Q \,\cong\, L(W,T_q^{} Q)
  \\[1mm]
  \xi_0^{}
  &   \longmapsto   & (\xi_0^{})_Q^{}(q)
 \end{array}
\]
simply by taking the tensor product with the identity on~$W^*$
(i.e., $(\xi_0^{})_Q^{}(q)(w) \equiv (\xi_0^{}(w))_Q^{}(q)$).
Again, $G_0^{}$-equivariance of fundamental vector fields implies
that, denoting left translation by elements $g_0^{}$ of~$G_0^{}$
on~$Q$ as well as on~$TQ$ by $L_{g_0^{}}$ (so that $\, L_{g_0^{}}:
T_q^{} Q \longrightarrow T_{g_0 \cdot q}^{} Q \,$ is the derivative
at~$q$ of $\, L_{g_0^{}}: Q \longrightarrow Q$), we have
\begin{equation} \label{eq:EQUIV2}
 L_{g_0^{}} \smcirc\, (\xi_0^{})_Q^{}(q) \,
 = \, (\mathrm{Ad}(g_0^{}) \,\smcirc\,
      \xi_0^{})_Q^{}(g_0^{} \cdot q) \,.
\end{equation}
Then for $\, a_x^{} \in \mathrm{Fr}_x^{}(M,GL(n,\mathbb{R}))
= GL(\mathbb{R}^n,T_x^{} M)$, $u_p^{} \in J_p^{} P \subset
L(T_x^{} M,T_p^{} P)$, $a_0^{} \in GL(n,\mathbb{R})$,
$g_0^{} \in G_0^{} \,$ and $\, \xi_0^{} \in L(\mathbb{R}^n,
\mathfrak{g}_0^{})$,
\begin{equation} \label{eq:JETPR1}
 (a_x^{},u_p^{}) \cdot  (a_0^{},g_0^{};\xi_0^{}) \,
 = \, \bigl( a_x^{} \smcirc\, a_0^{} \,, R_{g_0^{}} \smcirc\,
             \bigl( u_p^{} + (\xi_0^{} \,\smcirc\, a_x^{-1})_P^{}(p) \bigr)
      \bigr) \,.
\end{equation}
Let us check explicitly that this formula does define a (right) action:
%(we can comment out this calculation later on)
\begin{eqnarray*}
\lefteqn{\bigl( (a_x^{},u_p^{}) \cdot
                (a_{0,1}^{},g_{0,1}^{};\xi_{0,1}^{}) \bigr) \cdot
                (a_{0,2}^{},g_{0,2}^{};\xi_{0,2}^{})}
 \hspace*{1cm} \\[2mm]
 &=&\!\! \bigl( a_x^{} \smcirc\, a_{0,1}^{} \,,
                R_{g_{0,1}^{}} \smcirc\,
                ( u_p^{} + (\xi_{0,1}^{} \,\smcirc\, a_x^{-1})_P^{}(p) ) \bigr)
     \cdot (a_{0,2}^{},g_{0,2}^{};\xi_{0,2}^{}) \\[2mm]
 &=&\!\! \bigl( (a_x^{} \smcirc\, a_{0,1}^{}) \,\smcirc\, a_{0,2}^{} \,,
                R_{g_{0,2}^{}} \smcirc\, \bigl(
                R_{g_{0,1}^{}} \smcirc\,
                ( u_p^{} + (\xi_{0,1}^{} \,\smcirc\, a_x^{-1})_P^{}(p) ) \\
 & &\hspace{11em} \, + \,
                (\xi_{0,2}^{} \,\smcirc\, (a_x^{} \smcirc\, a_{0,1}^{})^{-1})_P^{}
                (p \cdot g_{0,1}^{}) \bigr) \bigr) \\[2mm]
 &=&\!\! \bigl( (a_x^{} \smcirc\, a_{0,1}^{}) \,\smcirc\, a_{0,2}^{} \,,
                R_{g_{0,2}^{}} \smcirc\, R_{g_{0,1}^{}} \smcirc\,
                \bigl( u_p^{} + (\xi_{0,1}^{} \,\smcirc\, a_x^{-1})_P^{}(p) \\
 & &\hspace{14em} \, + \,
                (\mathrm{Ad}(g_{0,1}^{}) \,\smcirc\, \xi_{0,2}^{} \,\smcirc\,
                 a_{0,1}^{-1} \,\smcirc\, a_x^{-1})_P^{}(p) \bigr) \bigr)
 \\[2mm]
 &=&\!\! \bigl( a_x^{} \smcirc\, (a_{0,1}^{} a_{0,2}^{}) \,,
                R_{g_{0,1}^{} g_{0,2}^{}} \smcirc\,
                \bigl( u_p^{} + ( ( \xi_{0,1}^{} + (a_{0,1}^{},g_{0,1}^{})
                        \cdot \xi_{0,2}^{} ) \,\smcirc\, a_x^{-1})_P^{}(p) \bigr)
                \bigr) \\[2mm]
 &=&\!\! (a_x^{},u_p^{}) \cdot
         \bigl( a_{0,1}^{} a_{0,2}^{} , g_{0,1}^{} g_{0,2}^{};
                \xi_{0,1} + (a_{0,1}^{},g_{0,1}^{}) \cdot \xi_{0,2}^{} \bigr)
 \\[2mm]
 &=&\!\! (a_x^{},u_p^{}) \cdot
         \bigl( (a_{0,1}^{},g_{0,1}^{};\xi_{0,1}^{})
                (a_{0,2}^{},g_{0,2}^{};\xi_{0,2}^{}) \bigr) \,.
\end{eqnarray*}

Further evidence that, in the case of principal bundles, the jet
prolongation in this sense~-- rather than just the usual jet bundle~--
is the correct object to consider can be accumulated by noting that
(a) the tangent bundle (of the total space), the jet bundle and the
linearized jet bundle of an associated bundle for~$P$ are all
associated bundles for~$P^{(1)}$ and (b) the connection
bundle $CP$ of~$P$ is also an associated bundle for~$P^{(1)}$,
i.e, there are canonical bundle isomorphisms
\begin{equation} \label{eq:TEASSB1}
 T(P \times_{G_0} Q) \, \cong \, P^{(1)} \times_{G_0^{(1)}}
 (\mathbb{R}^n \times TQ) \,,
\end{equation}
\begin{equation} \label{eq:JEASSB1}
 J(P \times_{G_0} Q) \, \cong \, P^{(1)} \times_{G_0^{(1)}}
 L(\mathbb{R}^n,TQ) \,,
\end{equation}
\begin{equation} \label{eq:LJASSB1}
 \vec{J} (P \times_{G_0} Q) \, \cong \, P^{(1)} \times_{G_0^{(1)}}
 L(\mathbb{R}^n,TQ) \,,
\end{equation}
and
\begin{equation} \label{eq:CPASSB1}
 CP \, \cong \, P^{(1)} \times_{G_0^{(1)}}
 L(\mathbb{R}^n,\mathfrak{g}_0^{}) \,,
\end{equation}
which preserve any invariant additional structures if such are
present (such as, for example, that of a vector bundle over~$P$
in the first and third case or that of an affine bundle over~$P$
in the second and fourth case).
Here, the relevant (left) actions of the structure group to be employed
in the definition of the associated bundles on the rhs of these equations
are
\begin{equation} \label{eq:TEASSB2}
 \begin{array}{ccc}
  G_0^{(1)} \times (\mathbb{R}^n \times TQ)
  & \longrightarrow & (\mathbb{R}^n \times TQ) \\[1mm]
  ((a_0^{},g_0^{};\xi_0^{}) , (v,v_q^{}))
  &   \longmapsto   & (a_0^{},g_0^{};\xi_0^{}) \cdot (v,v_q^{})
 \end{array}
\end{equation}
with
\begin{equation} \label{eq:TEASSB3}
 (a_0^{},g_0^{};\xi_0^{}) \cdot (v,v_q^{}) \,
 = \, \bigl( a_0^{} v \,, L_{g_0^{}}(v_q) -
                         (\xi_0^{}(a_0^{} v))_Q^{}(g_0^{} \cdot q) \bigr)
\end{equation}
in the first case,
\begin{equation} \label{eq:JEASSB2}
 \begin{array}{ccc}
  G_0^{(1)} \times L(\mathbb{R}^n,TQ)
  & \longrightarrow & L(\mathbb{R}^n,TQ) \\[1mm]
  ((a_0^{},g_0^{};\xi_0^{}) , u_q^{})
  &   \longmapsto   & (a_0^{},g_0^{};\xi_0^{}) \cdot u_q^{}
 \end{array}
\end{equation}
with
\begin{equation} \label{eq:JEASSB3}
 (a_0^{},g_0^{};\xi_0^{}) \cdot u_q^{} \,
 = \, L_{g_0^{}} \smcirc\, u_q^{} \,\smcirc\, a_0^{-1} -
      (\xi_0^{})_Q^{}(g_0^{} \cdot q)
\end{equation}
in the second case,
\begin{equation} \label{eq:LJASSB2}
 \begin{array}{ccc}
  G_0^{(1)} \times L(\mathbb{R}^n,TQ)
  & \longrightarrow & L(\mathbb{R}^n,TQ) \\[1mm]
  ((a_0^{},g_0^{};\xi_0^{}) , \vec{u}_q^{})
  &   \longmapsto   & (a_0^{},g_0^{};\xi_0^{}) \cdot \vec{u}_q^{}
 \end{array}
\end{equation}
with
\begin{equation} \label{eq:LJASSB3}
 (a_0^{},g_0^{};\xi_0^{}) \cdot \vec{u}_q^{} \,
 = \, L_{g_0^{}} \smcirc\, \vec{u}_q^{} \,\smcirc\, a_0^{-1}
\end{equation}
in the third case, and
\begin{equation} \label{eq:CPASSB2}
 \begin{array}{ccc}
  G_0^{(1)} \times L(\mathbb{R}^n,\mathfrak{g}_0^{})
  & \longrightarrow & L(\mathbb{R}^n,\mathfrak{g}_0^{}) \\[1mm]
  ((a_0^{},g_0^{};\xi_0^{}) , A_0^{})
  &   \longmapsto   & (a_0^{},g_0^{};\xi_0^{}) \cdot A_0^{}
 \end{array}
\end{equation}
with
\begin{equation} \label{eq:CPASSB3}
 (a_0^{},g_0^{};\xi_0^{}) \cdot A_0^{} \,
 = \, \mathrm{Ad}(g_0^{}) \,\smcirc\, A_0^{} \,\smcirc\, a_0^{-1}  + \xi_0^{}
\end{equation}
in the last case. 
(That these formulas do indeed define group actions follows by elementary
calculations, which we leave to the reader, using equations~(\ref{eq:JETGR2})%
-(\ref{eq:EQUIV2}).)

In order to explicitly construct the isomorphisms in equations~%
(\ref{eq:TEASSB1})-(\ref{eq:CPASSB1}), we resort to the ``magical
square'' for associated bundles, i.e., the commutative diagram in
equation~(\ref{eq:MSASSB1}), together with the commutative diagram
in equation~(\ref{eq:MSASSB3}) obtained by taking tangent maps and
the resulting quotient space representations for the tangent spaces
(see equation~(\ref{eq:TSASSB2})) and for the jet spaces (see
equation~(\ref{eq:JSASSB3})), plus a similar one for the linearized
jet spaces, to handle the first three cases, as well as to the
``magical square'' for the connection bundle, i.e., the commutative
 diagram in equation~(\ref{eq:MSCONB1}), to handle the last case.
More specifically, for the first case, consider the map
\begin{equation} \label{eq:TASSB1}
 \begin{array}{ccc}
  P^{(1)} \times (\mathbb{R}^n \times TQ)
  & \longrightarrow & TP \times TQ \\[1mm]
  ((a_x^{},u_p^{}) \,, (v,v_q^{}))
  &   \longmapsto   & \bigl( u_p^{} (a_x^{} v) \,, v_q^{} \bigr)
 \end{array}
\end{equation}
and observe that it takes
\begin{eqnarray*}
\lefteqn{\bigl( (a_x^{},u_p^{}) \cdot (a_0^{},g_0^{};\xi_0^{}) \,,
                  (a_0^{},g_0^{};\xi_0^{})^{-1} \cdot (v,v_q^{}) \bigr)}
 \hspace*{5mm} \\[1mm]
 &=&\!\! \bigl( (a_x^{},u_p^{}) \cdot (a_0^{},g_0^{};\xi_0^{}) \,,
                (a_0^{-1},g_0^{-1} ; - \mathrm{Ad}(g_0^{})^{-1} \,\smcirc\,
                                     \xi_0^{} \,\smcirc\, a_0^{}) \cdot (v,v_q^{})
                \bigr) \\[1mm]
 &=&\!\! \bigl( \bigl( a_x^{} \,\smcirc\, a_0^{} \,, R_{g_0^{}} \,\smcirc\,
                \bigl( u_p^{} + (\xi_0^{} \,\smcirc\, a_x^{-1})_P^{}(p) \bigr)
                \bigr) ,
                \bigl( a_0^{-1} v , L_{g_0^{-1}}(v_q^{}) +
                (\mathrm{Ad}(g_0^{})^{-1}(\xi_0^{}(v)))_Q^{}
                (g_0^{-1} \cdot q) \bigr) \bigr)
\end{eqnarray*}
which in the quotient $\, P^{(1)} \times_{G_0^{(1)}} (\mathbb{R}^n
\times TQ) \,$ represents the same class as $((a_x^{},u_p^{}) \,,
(v,v_q^{}))$, to
\begin{eqnarray*}
\lefteqn{\bigl( \bigl( R_{g_0^{}} \smcirc\,
                \bigl( u_p^{} + (\xi_0^{} \,\smcirc\, a_x^{-1})_P^{}(p) \bigr)
                \bigr) (a_x^{} v) \,,
                L_{g_0^{-1}}(v_q^{}) +
                (\mathrm{Ad}(g_0^{})^{-1}(\xi_0^{}(v)))_Q^{}
                (g_0^{-1} \cdot q) \bigr)}
 \hspace*{5mm} \\[1mm]
 &=&\!\! \bigl( R_{g_0^{}} (u_p^{}(a_x^{} v)) 
                 + (\mathrm{Ad}(g_0^{})^{-1}(\xi_0^{}(v)))_P^{}
                (p \cdot g_0^{}) ,
                L_{g_0^{-1}}(v_q^{}) +
                (\mathrm{Ad}(g_0^{})^{-1}(\xi_0^{}(v)))_Q^{}
                (g_0^{-1} \cdot q) \bigr)
\end{eqnarray*}
which in the quotient $\, T(P \times_{G_0} Q) \,$ represents the same class
as $(R_{g_0^{}} (u_p^{}(a_x^{} v)) \,,  L_{g_0^{-1}}(v_q^{}))$ since their
difference belongs to $\ker T_{(p \cdot g_0^{},g_0^{-1} \cdot q)} \rho_Q^{}$;
therefore, the map in equation~(\ref{eq:TASSB1}) induces a well defined map
\begin{equation} \label{eq:TASSB2}
 P^{(1)} \times_{G_0^{(1)}} (\mathbb{R}^n \times TQ)~~
 \longrightarrow~~T(P \times_{G_0} Q)
\end{equation}
between the quotient spaces which is the desired isomorphism.
Similarly, for the second case, consider the map
\begin{equation} \label{eq:JASSB1}
 \begin{array}{ccc}
  P^{(1)} \times L(\mathbb{R}^n,TQ)
  & \longrightarrow & JP \times L(TM,TQ) \\[1mm]
  ((a_x^{},u_p^{}) \,, u_q^{})
  &   \longmapsto   & \bigl( u_p^{} \,, u_q^{} \,\smcirc\, a_x^{-1} \bigr)
 \end{array}
\end{equation}
and observe that it takes
\begin{eqnarray*}
\lefteqn{\bigl( (a_x^{},u_p^{}) \cdot (a_0^{},g_0^{};\xi_0^{}) \,,
                  (a_0^{},g_0^{};\xi_0^{})^{-1} \cdot u_q^{} \bigr)}
 \hspace*{5mm} \\[1mm]
 &=&\!\! \bigl( (a_x^{},u_p^{}) \cdot (a_0^{},g_0^{};\xi_0^{}) \,,
                (a_0^{-1},g_0^{-1} ; - \mathrm{Ad}(g_0^{})^{-1} \,\smcirc\,
                                     \xi_0^{} \,\smcirc\, a_0^{}) \cdot u_q^{}
                \bigr) \\[1mm]
 &=&\!\! \bigl( \bigl( a_x^{} \smcirc\, a_0^{} \,, R_{g_0^{}} \smcirc\,
                \bigl( u_p^{} + (\xi_0^{} \,\smcirc\, a_x^{-1})_P^{}(p) \bigr)
                \bigr) , L_{g_0^{-1}} \,\smcirc\, u_q^{} \,\smcirc\, a_0^{} +
                (\mathrm{Ad}(g_0^{})^{-1} \,\smcirc\, \xi_0^{} \,\smcirc\,
                a_0^{})_Q^{} (g_0^{-1} \cdot q) \bigr)
\end{eqnarray*}
which in the quotient $\, P^{(1)} \times_{G_0^{(1)}} L(\mathbb{R}^n,TQ) \,$
represents the same class as $((a_x^{},u_p^{}) \,, u_q^{})$, to
\begin{eqnarray*}
\lefteqn{\bigl( R_{g_0^{}} \smcirc\,
                \bigl( u_p^{} + (\xi_0^{} \,\smcirc\, a_x^{-1})_P^{}(p) \bigr) ,
                L_{g_0^{-1}} \,\smcirc\, u_q^{} \,\smcirc\, a_x^{-1} +
                (\mathrm{Ad}(g_0^{})^{-1} \,\smcirc\, \xi_0^{}
                 \,\smcirc\, a_x^{-1})_Q^{} (g_0^{-1} \cdot q) \bigr)}
 \hspace*{5mm} \\[1mm]
 &=&\!\! \bigl( R_{g_0^{}} \smcirc\, u_p^{} +
                (\mathrm{Ad}(g_0^{})^{-1} \,\smcirc\, \xi_0^{} \,\smcirc\,
                 a_x^{-1})_P^{}(p \cdot g_0^{}) \,, \\
 & &\,       L_{g_0^{-1}} \,\smcirc\, u_q^{} \,\smcirc\, a_x^{-1} +
                (\mathrm{Ad}(g_0^{})^{-1} \,\smcirc\, \xi_0^{} \,\smcirc\,
                 a_x^{-1})_Q^{} (g_0^{-1} \cdot q) \bigr)
                \qquad\qquad\qquad\qquad
\end{eqnarray*}
which in the quotient $\, J(P \times_{G_0} Q) \,$ represents the same class
as $(R_{g_0^{}} \smcirc\, u_p^{} \,,  L_{g_0^{-1}} \,\smcirc\, u_q^{}
\,\smcirc\, a_x^{-1})$, since their difference belongs to $L(T_x^{} M,
\ker T_{(p \cdot g_0^{},g_0^{-1} \cdot q)} \rho_Q^{})$; therefore,
the map in equation~(\ref{eq:JASSB1}) induces a well defined map
\begin{equation} \label{eq:JASSB2}
 P^{(1)} \times_{G_0^{(1)}} L(\mathbb{R}^n,TQ)~~
 \longrightarrow~~J(P \times_{G_0} Q)
\end{equation}
between the quotient spaces which is the desired isomorphism.
For the third case, the argument is entirely analogous but somewhat simpler
since some terms drop out; we leave it to the reader to fill in the details.
Finally, for the last case, consider the map 
\begin{equation} \label{eq:JCONB1}
 \begin{array}{ccc}
  P^{(1)} \times L(\mathbb{R}^n,\mathfrak{g}_0^{})
  & \longrightarrow & JP \\[1mm]
  ((a_x^{},u_p^{}) \,, A_0^{})
  &   \longmapsto   & u_p^{} + (A_0^{} \,\smcirc\, a_x^{-1})_P^{}(p)
 \end{array}
\end{equation}
and observe that it takes
\begin{eqnarray*}
\lefteqn{\bigl( (a_x^{},u_p^{}) \cdot (a_0^{},g_0^{};\xi_0^{}) \,,
                  (a_0^{},g_0^{};\xi_0^{})^{-1} \cdot A_0^{} \bigr)}
 \hspace*{1cm} \\[1mm]
 &=&\!\! \bigl( (a_x^{},u_p^{}) \cdot (a_0^{},g_0^{};\xi_0^{}) \,,
                (a_0^{-1},g_0^{-1} ; - \mathrm{Ad}(g_0^{})^{-1} \,\smcirc\,
                                     \xi_0^{} \,\smcirc\, a_0^{}) \cdot A_0^{}
                \bigr) \\[1mm]
 &=&\!\! \bigl( \bigl( a_x^{} \smcirc\, a_0^{} \,, R_{g_0^{}} \smcirc\,
                \bigl( u_p^{} + (\xi_0^{} \,\smcirc\, a_x^{-1})_P^{}(p) \bigr)
                \bigr) , \mathrm{Ad}(g_0^{})^{-1} \,\smcirc\,
                (A_0^{} - \xi_0^{}) \,\smcirc\, a_0^{} \bigr)
\end{eqnarray*}
which in the quotient $\, P^{(1)} \times_{G_0^{(1)}} L(\mathbb{R}^n,
\mathfrak{g}_0^{}) \,$ represents the same class as $((a_x^{},u_p^{}) \,,
A_0^{})$, to
\begin{eqnarray*}
\lefteqn{R_{g_0^{}} \smcirc\, \bigl( u_p^{} + 
                                 (\xi_0^{} \,\smcirc\, a_x^{-1})_P^{}(p) \bigr) \,
         + \, (\mathrm{Ad}(g_0^{})^{-1} \,\smcirc\, (A_0^{} - \xi_0^{})
               \,\smcirc\, a_x^{-1})_P^{} (p \cdot g_0^{})}
 \hspace*{1cm} \\[1mm]
 &=&\!\! R_{g_0^{}} \smcirc\, \bigl( u_p^{} + 
                                 (\xi_0^{} \,\smcirc\, a_x^{-1})_P^{}(p) \bigr) \,
         + \, R_{g_0^{}} \smcirc\,  ((A_0^{} - \xi_0^{}) \,\smcirc\, a_x^{-1})_P^{} (p)
 \\[1mm]
 &=&\!\! R_{g_0^{}} \smcirc\, \bigl( u_p^{} +
                                 (A_0^{} \,\smcirc\, a_x^{-1})_P^{}(p) \bigr)
\end{eqnarray*}
which in the quotient $\, CP = JP/G_0^{} \,$ represents the same class
as $R_{g_0^{}} \,\smcirc\, u_p^{}$; therefore, the map in equation~%
(\ref{eq:JCONB1}) induces a well defined map
\begin{equation} \label{eq:JCONB2}
 P^{(1)} \times_{G_0^{(1)}} L(\mathbb{R}^n,\mathfrak{g}_0^{})~~
 \longrightarrow~~JP/G_0^{} = CP
\end{equation}
between the quotient spaces which is the desired isomorphism.

\pagebreak

Higher order jet prolongations can be constructed similarly, but
as in our previous work, we shall only use jet prolongations up
to second order, which can be constructed by iterating the first
order construction once and then performing an appropriate
reduction.
 
\subsection*{The jet groupoid of a gauge groupoid}

We begin with a more explicit description of the jet groupoid of
the gauge groupoid of a principal bundle~$P$, which is based on the
``magical square'' for gauge groupoids, i.e., the commutative diagram
\begin{equation} \label{eq:MSLGRP1}
\begin{array}{c}
\xymatrix{
 ~P \times P~ \ar[r]^-{\rho_P^{}}
 \ar@<-0.3em>[d]_-{\mathrm{pr}_2}
 \ar@<0.3em>[d]^-{\,\mathrm{pr}_1} & 
 ~(P \times P)/G_0^{}~
 \ar@<-0.3em>[d]_-{\sigma_G^{}}
 \ar@<0.3em>[d]^-{\,\tau_G^{}} \\
 \quad~ \vphantom{\hat{P}} P ~\quad \ar[r]_-{\rho^{\vphantom{m}}} &
 \qquad~ M \vphantom{\hat{M}} ~\qquad
}
\end{array}
\end{equation}
in which the horizontal projections define principal $G_0^{}$-bundles
while the vertical projections provide Lie groupoids (the first of which
is of course just the pair groupoid of~$P$) such that $\rho_P^{}$ is
an isomorphism on each (source or target or double) fiber.
Once more, we may seek to gain a more profound understanding of
the situation by extending this diagram to include the corresponding
jet groupoids, but a direct approach is not feasible here since we
cannot simply apply the jet functor to this diagram as we did before
(see equations~(\ref{eq:MSASSB2}) and~(\ref{eq:MSCONB2})),
the reason being that $P \times P$ is a Lie groupoid over~$P$
but not over~$M$.
Instead, we shall also consider the ``magical square'' for gauge
groupoids at the next level, which is the commutative diagram
\begin{equation} \label{eq:MSLGRP2}
\begin{array}{c}
\xymatrix{
 ~P^{(1)} \times P^{(1)}~ \ar[r]^-{\rho_{P^{(1)}}}
 \ar@<-0.3em>[d]_-{\mathrm{pr}_2}
 \ar@<0.3em>[d]^-{\,\mathrm{pr}_1} & 
 ~(P^{(1)} \times P^{(1)})/G_0^{(1)}~
 \ar@<-0.3em>[d]_-{\sigma_{G^{(1)}}}
 \ar@<0.3em>[d]^-{\,\tau_{G^{(1)}}} \\
 \qquad P^{(1)} \qquad \ar[r]_-{\rho^{(1)}} &
 \qquad\quad~~ M \vphantom{\hat{M}} ~~\quad\qquad
}
\end{array}
\end{equation}
and construct a canonical map
\begin{equation} \label{eq:JGGG1}
 P^{(1)} \times P^{(1)}~~\longrightarrow~~
 J((P \times P)/G_0^{})
\end{equation}
which we will show to be $G_0^{(1)}$-invariant, so it factors through
the projection $\rho_{P^{(1)}}$ to yield a canonical map
\begin{equation} \label{eq:JGGG2}
 \bigl( P^{(1)} \times P^{(1)} \bigr)/G_0^{(1)}~~\longrightarrow~~
 J((P \times P)/G_0^{})
\end{equation}
which will turn out to be an isomorphism (see Theorem~\ref{thm:JGGG} below).

To see how this construction goes, let us pick points $p_1,p_2 \in P$ with
$\, \rho(p_1) = x_1 \,$ and $\, \rho(p_2) = x_2$ and take tangent maps
to the commutative diagram in equation~(\ref{eq:MSLGRP1}) to obtain
the commutative diagrams
\begin{equation} \label{eq:MSLGRP2S}
 \begin{array}{c}
  \xymatrix{
   ~T_{p_2} P \oplus T_{p_1} P~
   \ar[rr]^-{T_{(p_2,p_{1\!})} \rho_P^{}}
   \ar[d]_{\mathrm{pr}_2}
   && ~T_{[p_2,p_{1\!}]} ((P \times P)/G_0^{})~
   \ar[d]^-{\,T_{[p_2,p_{1\!}]} \sigma_G^{}} \\
   \qquad \vphantom{\hat{P}} T_{p_1} P \qquad \ar[rr]_{T_{p_1} \rho}
   && \qquad\quad~ T_{x_1} M \vphantom{\hat{M}} ~\quad\qquad
  }
 \end{array}
\end{equation}
referring to the source projection and
\begin{equation} \label{eq:MSLGRP2T}
 \begin{array}{c}
  \xymatrix{
   ~T_{p_2} P \oplus T_{p_1} P~
   \ar[rr]^-{T_{(p_2,p_{1\!})} \rho_P^{}}
   \ar[d]_{\mathrm{pr}_1}
   && ~T_{[p_2,p_{1\!}]} ((P \times P)/G_0^{})~
   \ar[d]^-{\,T_{[p_2,p_{1\!}]} \tau_G^{}} \\
   \qquad \vphantom{\hat{P}} T_{p_2} P \qquad \ar[rr]_{T_{p_2} \rho}
   && \qquad\quad~T_{x_2} M \vphantom{\hat{M}} ~\quad\qquad
  }
 \end{array}
\end{equation}
referring to the target projection.
Since $\rho_P^{}$ is a submersion and hence its tangent maps are
surjective, this means that the tangent spaces $T_{[p_2,p_{1\!}]}
((P \times P)/G_0^{})$ of the orbit space $(P \times P)/G_0^{}$
can be realized as quotient spaces, namely, the linear maps
\begin{equation} \label{eq:TSLGRP1}
 T_{(p_2,p_{1\!})} \rho_P^{}: T_{p_2}^{} P \oplus T_{p_1}^{} P~~
 \longrightarrow~~T_{[p_2,p_{1\!}]}((P \times P)/G_0^{})
\end{equation}
induce isomorphisms
\begin{equation} \label{eq:TSLGRP2}
 T_{[p_2,p_{1\!}]}((P \times P)/G_0^{})~
 \cong~(T_{p_2}^{} P \oplus T_{p_1}^{} P)/
       \ker T_{(p_2,p_{1\!})} \rho_P^{} \,,
\end{equation}
with
\begin{equation} \label{eq:TSLGRP3}
 \ker T_{(p_2,p_{1\!})} \rho_P^{}~
 =~\{ ((X_0^{})_P^{}(p_2^{}),(X_0^{})_P^{}(p_1^{})) \, | \,
      X_0^{} \in \mathfrak{g}_0^{} \}~
 \cong~\mathfrak{g}_0^{} \,,
\vspace{1mm}
\end{equation}
where as before, $(X_0^{})_P^{}$ denotes the fundamental vector field
on~$P$ associated to a generator $\, X_0^{} \in \mathfrak{g}_0^{} \,$
via the pertinent action of~$G_0^{}$, as defined in equation~%
(\ref{eq:FVFP}) above.

With this notation, we can define the map in equation~(\ref{eq:JGGG1})
above, or more explicitly, its restriction to the fiber over the pair $(p_2^{},
p_1^{}) \in P \times P$, that is, the map
\begin{equation} \label{eq:JGGG3}
 \begin{array}{ccc}
  P_{p_2}^{(1)} \times P_{p_1}^{(1)}~
  =~J_{p_2} P \times GL(\mathbb{R}^n,T_{x_2} M) \times
    GL(\mathbb{R}^n,T_{x_1} M) \times J_{p_1} P
  & \longrightarrow & J_{[p_2,p_{1\!}]}((P \times P)/G_0^{})
  \\[1mm]
  (u_{p_2},a_{x_2},a_{x_1},u_{p_1})
  &   \longmapsto   & u_{[p_2,p_{1\!}]}
 \end{array}
\end{equation}
by setting
\begin{equation} \label{eq:JGGG4}
 u_{[p_2,p_{1\!}]}~
 =~T_{(p_2,p_{1\!})} \rho_P^{} \,\smcirc\,
   \bigl( u_{p_2} \,\smcirc\, a_{x_2} \,\smcirc\, a_{x_1}^{-1} \,,
           u_{p_1} \bigr) \,.
\end{equation}
We claim that this map is onto.
Indeed, any linear map $u_{[p_2,p_{1\!}]}$ from $T_{x_1} M$ to
$T_{[p_2,p_{1\!}]}((P \times P)/G_0^{})$ can be represented in
the form
\[
 u_{[p_2,p_{1\!}]}~
 =~T_{(p_2,p_{1\!})} \rho_P^{} \,\smcirc\,
   \bigl( \tilde{u}_{p_2},u_{p_1} \bigr)
\]
with linear maps $u_{p_1}$ from $T_{x_1} M$ to $T_{p_1} P$ and
$\tilde{u}_{p_2}$ from $T_{x_1} M$ to $T_{p_2} P$, where the pair
$(\tilde{u}_{p_2},u_{p_1})$ in $L(T_{x_1} M,T_{p_2} P \oplus
T_{p_1} P) \,$ is determined up to addition of a linear map
from $T_{x_1} M$ to~$\ker T_{(p_2,p_{1\!})} \rho_P^{}$.
Moreover, if we assume $u_{[p_2,p_{1\!}]}$ to be a jet in
$J_{[p_2,p_{1\!}]}((P \times P)/G_0^{})$ and to project to
some $\, a_{x_2,x_1} \in GL(T_{x_1} M,T_{x_2} M)$,
which we recall means that $\, T_{[p_2,p_{1\!}]} \,
\sigma_{(P \times P)/G_0}^{} \,\smcirc\, u_{[p_2,p_{1\!}]}
= \mathrm{id}_{\,T_{x_1} M}$ \linebreak while
$\, T_{[p_2,p_{1\!}]} \, \tau_{(P \times P)/G_0^{}}^{}
\,\smcirc\, u_{[p_2,p_{1\!}]} = a_{x_2,x_1}$, then we
conclude that $u_{p_1}$ will be a jet in $J_{p_1} P$ and
$\, u_{p_2} = \tilde{u}_{p_2} \,\smcirc\, a_{x_2,x_1}^{-1} \,$
will be a jet in $J_{p_2} P$.
Finally, we may write $\, a_{x_ 2,x_1}^{} = a_{x_2}^{} \,\smcirc\,
a_{x_1}^{-1} \,$ with $\, a_{x_1} \in GL(\mathbb{R}^n,T_{x_1} M) \,$
and $\, a_{x_2} \in GL(\mathbb{R}^n,T_{x_2} M)$.
It is then clear that the map in equation~(\ref{eq:JGGG3}) takes
\begin{eqnarray*}
\lefteqn{(u_{p_2},a_{x_2},a_{x_1},u_{p_1}^{}) \cdot
         (a_0^{},g_0^{};\xi_0^{})}
\hspace*{5mm} \\[1mm]
&=&\!\! \bigl( R_{g_0^{}} \smcirc\, \bigl( u_{p_2}^{} +
               (\xi_0^{} \,\smcirc\, a_{x_2}^{-1})_P^{}(p_2^{}) \bigr) \,,
               a_{x_2} \smcirc\, a_0^{} \,, a_{x_1} \smcirc\, a_0^{} \,,
               R_{g_0^{}} \smcirc\, \bigl( u_{p_1}^{} +
               (\xi_0^{} \,\smcirc\, a_{x_1}^{-1})_P^{}(p_1^{}) \bigr) \bigr)
\end{eqnarray*}
which in the quotient $\, (P^{(1)} \times P^{(1)})/G_0^{(1)} \,$
represents the same class as $(u_{p_2},a_{x_2},a_{x_1},u_{p_1})$, to
\[
 T_{(p_2 \cdot g_0,p_1 \cdot g_0)} \rho_P^{} \,\smcirc\,
 \bigl( R_{g_0^{}} \smcirc\, u_{p_2}^{} \,\smcirc\,
 a_{x_2}^{} \,\smcirc\, a_x^{-1} \,,
 R_{g_0^{}} \smcirc\, u_{p_1}^{} \bigr)
 = T_{(p_2,p_{1\!})} \rho_P^{} \,\smcirc\,
   \bigl( u_{p_2}^{} \,\smcirc\, a_{x_2}^{} \,\smcirc\, a_{x_1}^{-1} \,,
          u_{p_1}^{} \bigr)
\]
in $\, J_{[p_2 \cdot g_0,p_1 \cdot g_0]}((P \times P)/G_0^{})
= J_{[p_2,p_{1\!}]}((P \times P)/G_0^{})$.
This proves that the map in equation~(\ref{eq:JGGG2}) is well defined,
and it is now easy to see that it induces an isomorphism of Lie groupoids
over~$M$; we leave the details of the remainder of the proof to the
reader and just state the result as a
\begin{thm} \label{thm:JGGG}~
 Up to a canonical isomorphism, the (first order) jet groupoid of the
 gauge groupoid of a principal bundle~$P$ is equal to the gauge
 groupoid of its (first order) jet prolongation~$P^{(1)}$:
 \begin{equation} \label{eq:JGGG}
  J((P \times P)/G_0^{})~\cong~(P^{(1)} \times P^{(1)})/G_0^{(1)} \,.
 \end{equation}
\end{thm}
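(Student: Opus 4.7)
The plan is to complete the argument already begun just before the theorem statement. The authors have constructed the smooth map $\Phi : P^{(1)} \times P^{(1)} \longrightarrow J((P\times P)/G_0^{})$ of equation~(\ref{eq:JGGG4}), shown its fibrewise surjectivity onto each $J_{[p_2^{},p_1^{}]}^{}((P\times P)/G_0^{})$, and verified its invariance under the diagonal right $G_0^{(1)}$-action, so $\Phi$ descends to a smooth map
\[
 \bar\Phi : (P^{(1)} \times P^{(1)})/G_0^{(1)}~\longrightarrow~
 J((P\times P)/G_0^{}) \,.
\]
What remains is to establish (i) injectivity of $\bar\Phi$, (ii) smoothness of $\bar\Phi^{-1}$, and (iii) compatibility with the Lie groupoid structures over~$M$.

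For (i), suppose two quadruples $(u_{p_2}^{},a_{x_2}^{},a_{x_1}^{},u_{p_1}^{})$ and $(\tilde u_{\tilde p_2}^{},\tilde a_{x_2}^{},\tilde a_{x_1}^{},\tilde u_{\tilde p_1}^{})$ share the same image under~$\Phi$. Matching source and target base points forces $\rho(p_i^{}) = \rho(\tilde p_i^{}) = x_i^{}$ and produces a unique $g_0^{} \in G_0^{}$ with $\tilde p_i^{} = p_i^{} \cdot g_0^{}$, while reading off the frame projection $\pi_{JG}^{\mathrm{fr}}$ gives $a_{x_2}^{} \smcirc\, a_{x_1}^{-1} = \tilde a_{x_2}^{} \smcirc\, \tilde a_{x_1}^{-1}$ and hence a common $a_0^{} \in GL(n,\mathbb{R})$ with $\tilde a_{x_i}^{} = a_{x_i}^{} \smcirc\, a_0^{}$. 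Rewriting the equality of jets via the $G_0^{}$-invariance of $\rho_P^{}$ along the diagonal action of~$g_0^{}$, it becomes the statement that the pair of componentwise differences between the two arguments of $T_{(p_2^{},p_1^{})}^{} \rho_P^{}$ lies in $L(T_{x_1}^{} M, \ker T_{(p_2^{},p_1^{})}^{} \rho_P^{})$. By the description of the kernel in equation~(\ref{eq:TSLGRP3}) as the image of $\mathfrak{g}_0^{}$ under the diagonal fundamental-vector-field map, a \emph{single} linear map $\eta : T_{x_1}^{} M \to \mathfrak{g}_0^{}$ produces both differences simultaneously; setting $\xi_0^{} = \eta \smcirc\, a_{x_1}^{} \in L(\mathbb{R}^n,\mathfrak{g}_0^{})$ and applying the action formula~(\ref{eq:JETPR1}) on each factor yields $(u_{p_2}^{},a_{x_2}^{},a_{x_1}^{},u_{p_1}^{}) \cdot (a_0^{},g_0^{};\xi_0^{}) = (\tilde u_{\tilde p_2}^{},\tilde a_{x_2}^{},\tilde a_{x_1}^{},\tilde u_{\tilde p_1}^{})$, as desired. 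The decisive point is that the same $\xi_0^{}$ must govern both slots at once, and this is exactly what the diagonal form of $\ker T\rho_P^{}$ encodes.

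For (ii), smoothness of $\bar\Phi$ is automatic because $\Phi$ is smooth and $\rho_{P^{(1)}}^{}$ is a surjective submersion; smoothness of $\bar\Phi^{-1}$ follows from~(\ref{eq:JGGG4}), which, once local sections of~$P$ and local frames near $x_1^{}$ and $x_2^{}$ have been chosen, reconstructs a $\Phi$-preimage depending smoothly on the jet $u_{[p_2^{},p_1^{}]}^{}$. A dimension count gives $2n + n^2 + (n+1) \dim\mathfrak{g}_0^{}$ on both sides, so $\bar\Phi$ is a smooth bijection between manifolds of equal dimension with a smooth local inverse, hence a diffeomorphism. For (iii), the source and target projections of $\bar\Phi([q_2^{},q_1^{}])$ equal $\rho^{(1)}(q_1^{})$ and $\rho^{(1)}(q_2^{})$, as is immediate from~(\ref{eq:JGGG4}) together with $\sigma_G^{} \smcirc\, \rho_P^{} = \rho \smcirc\, \mathrm{pr}_2^{}$ and its counterpart for~$\tau_G^{}$; the unit at~$x$ corresponds to diagonal representatives with $u_{p_2}^{} = u_{p_1}^{}$ and $a_{x_2}^{} = a_{x_1}^{}$, which feed into~(\ref{eq:JGGG4}) to produce the identity jet at~$[p,p]$; inversion is the obvious flip of the two factors.

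The main obstacle is expected to be multiplicativity: given composable classes $[q_3^{},q_2^{}]$ and $[q_2^{},q_1^{}]$ in $(P^{(1)} \times P^{(1)})/G_0^{(1)}$, one must show that $\bar\Phi([q_3^{},q_1^{}])$ equals the groupoid product $\bar\Phi([q_3^{},q_2^{}]) \cdot \bar\Phi([q_2^{},q_1^{}])$ computed in $J((P\times P)/G_0^{})$ by applying the jet functor to the multiplication $\mu_G^{}$. This is precisely where the semidirect-product structure of $G_0^{(1)}$ encoded by equations~(\ref{eq:JETGR2}) and~(\ref{eq:JETGR3}) earns its keep: the $L(\mathbb{R}^n,\mathfrak{g}_0^{})$-component absorbs exactly the discrepancy between the chain-rule composition of jets and the naive concatenation of representatives in $P^{(1)} \times P^{(1)}$, so that the two products agree after passing to the $G_0^{(1)}$-quotient. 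The calculation itself is straightforward but bookkeeping-heavy and relies on repeated use of the $G_0^{}$-equivariance of fundamental vector fields in the form of equation~(\ref{eq:EQUIV1}).
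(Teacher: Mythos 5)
Your proposal is correct and follows exactly the paper's own route: the paper constructs the map of equations~(\ref{eq:JGGG3})--(\ref{eq:JGGG4}), proves fibrewise surjectivity and $G_0^{(1)}$-invariance, and then explicitly leaves ``the details of the remainder of the proof to the reader,'' which is precisely what you supply. Your injectivity argument~-- extracting a \emph{single} $\xi_0^{}$ from the diagonal description of $\ker T_{(p_2^{},p_1^{})} \rho_P^{}$ in equation~(\ref{eq:TSLGRP3}) and feeding it back into the action formula~(\ref{eq:JETPR1})~-- is the substantive missing piece and is right; the only quibble is that multiplicativity is easier than you fear, since choosing representatives of the two composable classes with a common middle entry $q_2^{}$ makes the identity $T\mu_G^{} \,\smcirc\, (T\rho_P^{} \times T\rho_P^{}) = T\rho_P^{} \,\smcirc\, \mathrm{pr}_{13}^{}$ apply directly to the formula~(\ref{eq:JGGG4}), with no $\xi_0^{}$-bookkeeping needed.
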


\end{appendix}

{\footnotesize

\end{document}